\documentclass[11pt,a4paper]{article}

\usepackage[margin=1in]{geometry}
\usepackage[T1]{fontenc}
\usepackage{amsmath,amssymb,amsthm,mathtools}
\usepackage{microtype}
\usepackage{enumitem}
\usepackage{xcolor}
\usepackage{aliascnt}
\usepackage{hyperref}
\usepackage[capitalise,noabbrev]{cleveref}

\hypersetup{
  colorlinks=true,
  citecolor=blue!60!black,
  linkcolor=blue!60!black,
  urlcolor=blue!60!black
}
\allowdisplaybreaks

\newtheorem{theorem}{Theorem}[section]
\newaliascnt{lemma}{theorem}
\newtheorem{lemma}[lemma]{Lemma}
\aliascntresetthe{lemma}
\newaliascnt{proposition}{theorem}
\newtheorem{proposition}[proposition]{Proposition}
\aliascntresetthe{proposition}
\newaliascnt{corollary}{theorem}
\newtheorem{corollary}[corollary]{Corollary}
\aliascntresetthe{corollary}
\theoremstyle{definition}
\newaliascnt{problem}{theorem}
\newtheorem{problem}[problem]{Problem}
\aliascntresetthe{problem}
\newaliascnt{definition}{theorem}

\aliascntresetthe{definition}
\newaliascnt{remark}{theorem}

\aliascntresetthe{remark}

\newcommand{\cI}{\mathcal I}
\newcommand{\R}{\mathbb R}
\newcommand{\C}{\mathbb C}
\newcommand{\E}{\mathbb E}
\renewcommand{\Pr}{\mathbb P}
\newcommand{\one}{\mathbf 1}
\newcommand{\iu}{\mathrm i}
\newcommand{\e}{\mathrm e}
\newcommand{\TV}{d_{\mathrm{TV}}}
\newcommand{\abs}[1]{\left\lvert #1\right\rvert}
\newcommand{\norm}[1]{\left\lVert #1\right\rVert}
\newcommand{\set}[1]{\left\{#1\right\}}

\DeclareMathOperator{\poly}{poly}

\title{Deterministic Approximation of the Total Variation Distance Between Spin Systems}
\author{Zelin Li \footnote{School of Computer Science and Technology, University of Science and Technology of China, email: \texttt{meguru@mail.ustc.edu.cn}.} \and Minji Yang \footnote{School of Computing and Data Science, The University of Hong Kong, email: \texttt{ymjessen02@connect.hku.hk}.}}
\date{}

\begin{document}

\maketitle

\begin{abstract}
We study deterministic relative approximation of the total variation distance
between two Gibbs distributions induced by spin systems on the same
bounded-degree graph.  For the hard-core model, we give a deterministic
polynomial-time $\varepsilon$-relative-error approximation when both external-field vectors lie in
$[b,(1-\eta)\lambda_{\mathrm c}(\Delta)]^V$, where $b>0$ and
$\eta\in(0,1)$ are fixed and $\lambda_{\mathrm c}(\Delta)$ is the hard-core
uniqueness threshold.  For the Ising model, we
obtain deterministic polynomial-time algorithms in two settings: the
ferromagnetic Lee--Yang regime
and the antiferromagnetic correlation-decay regime.  We
develop a new deterministic framework that reduces this task to estimating
suitably chosen partition functions.
\end{abstract}
\section{Introduction}
\label{sec:introduction}

The total variation distance is a fundamental measure of the difference
between two probability distributions.  For distributions $\mu$ and $\nu$ on
a finite set $\Omega$, it is defined by
\[
  \TV(\mu,\nu)
  =
  \frac{1}{2}\sum_{\sigma\in\Omega}
  \abs{\mu(\sigma)-\nu(\sigma)}.
\]

It can be evaluated in $O(\abs{\Omega})$ time when both probability tables are
given explicitly.  The problem becomes interesting when both high-dimensional
distributions admit succinct descriptions.  This setting, however, already
gives rise to strong complexity barriers.  Sahai and Vadhan~\cite{sahaiVadhan2003complete}
proved that the statistical-difference problem for circuit-generated
distributions is complete for statistical zero knowledge.  Bhattacharyya et al.~\cite{bhattacharyyaEtAl2023approximating}
showed that exact TV-distance computation is $\#\mathrm{P}$-complete even for
Boolean product distributions, and that relative approximation is
$\mathsf{NP}$-hard for Bayesian networks.  More recently, Bhattacharyya et al.~\cite{bhattacharyyaEtAl2025computational}
proved that, unless $\mathsf{NP}\subseteq\mathsf{RP}$, no FPRAS exists for the
TV-distance problem over arbitrary Ising-model inputs.

These barriers make structured distribution families the natural setting in
which to design approximation algorithms.  Product distributions provide the
first example: Bhattacharyya et al.~\cite{bhattacharyyaEtAl2023approximating} obtained an FPTAS for a
restricted Boolean case.  This restriction was later removed, first through
an FPRAS for general product distributions~\cite{fengGuoJerrumWang2023product}
and then through an FPTAS~\cite{fengLiuLiu2024deterministically}.  For two
Bayesian networks defined over the same DAG, Bhattacharyya et
al.~\cite{bhattacharyyaEtAl2024inference} reduced relative TV-distance
approximation to probabilistic inference, yielding an FPRAS whenever exact
inference is efficient.  For mixtures of product distributions,
Bhattacharyya et al.~\cite{bhattacharyyaEtAl2025computational} gave a
polynomial-time algorithm to decide whether the TV distance is zero, and
Feng et al.~\cite{feng2026computing} further gave an FPRAS for relative
TV-distance approximation for a constant number of mixture components.

For spin systems, Feng et al.~\cite{pmlr-v291-feng25a} reduced relative
TV-distance approximation to sampling and approximate counting.  The resulting
FPRAS applies to the hard-core model throughout the constant-gap uniqueness
regime and, under a constant marginal lower bound, to Ising models in the
spectral, ferromagnetic consistent-field, and antiferromagnetic uniqueness
regimes.  Feng and Fu~\cite{feng_et_al:LIPIcs.ITCS.2026.59} gave
randomized polynomial-time algorithms for relative approximation
of several $f$-divergences between Ising models
in suitable parameter regimes.

In this paper, we study deterministic polynomial-time
$\varepsilon$-relative-error approximation of the TV distance between two
bounded-degree spin systems, including the hard-core model and the Ising
model.  Our framework reduces the problem to deterministically approximating
suitable partition functions at real and complex activities.

\subsection{Main results}
\label{sec:problem-results}

\subsubsection{Models and problems}
\label{sec:models-problems}

Let $G=(V,E)$ be a finite simple graph, write $n=\abs V$, and set
$\Sigma_V=\{-1,+1\}^V$.  A binary spin system $\mathsf S$ on $G$ assigns a
nonnegative weight $w_{\mathsf S}(\sigma)$ to each configuration
$\sigma\in\Sigma_V$.  If at least one configuration has positive weight, its
partition function and Gibbs distribution are defined by
\[
  Z_{\mathsf S}
  =\sum_{\tau\in\Sigma_V}w_{\mathsf S}(\tau),
  \qquad
  \mu_{\mathsf S}(\sigma)
  =\frac{w_{\mathsf S}(\sigma)}{Z_{\mathsf S}},
  \quad \sigma\in\Sigma_V.
\]
We consider the following two spin systems.

\paragraph{Hard-core model.}
Let $\lambda=(\lambda_v)_{v\in V}\in\R_{\ge0}^V$ be a vector of vertex
activities.  For $\sigma\in\Sigma_V$, define
\[
  I(\sigma)=\set{v\in V:\sigma_v=+1}.
\]
The configuration $\sigma$ is feasible if $I(\sigma)$ is an independent set
of $G$.  The hard-core model $(G,\lambda)$ has weight function
\[
  w_\lambda(\sigma)
  =
  \begin{cases}
    \displaystyle\prod_{v\in I(\sigma)}\lambda_v,
      & \text{if $I(\sigma)$ is an independent set of $G$},\\[4pt]
    0, & \text{otherwise}.
  \end{cases}
\]

\paragraph{Ising model.}
Let $\beta>0$ be an edge activity and let
$\lambda=(\lambda_v)_{v\in V}\in\R_{>0}^V$ be a vector of vertex activities.
For $\sigma\in\Sigma_V$, let
\[
  D_G(\sigma)=\set{\{u,v\}\in E:\sigma_u\ne\sigma_v}.
\]
The Ising model $(G,\beta,\lambda)$ has weight function
\[
  w_{\beta,\lambda}(\sigma)
  =\beta^{\abs{D_G(\sigma)}}
    \prod_{v:\,\sigma_v=+1}\lambda_v,
  \qquad \sigma\in\Sigma_V.
\]
For $z\in\C^V$, the notation $Z_G(\beta,z)$ denotes the same partition
function polynomial with the vertex activities replaced by $z$.  Under this
parametrization, the model is ferromagnetic for $0<\beta<1$, has independent
spins for $\beta=1$, and is antiferromagnetic for $\beta>1$.

We compare two systems of the same type on the same graph and consider the
following two problems.

\begin{samepage}
\begin{problem}[Hard-core TV-distance approximation]
\label{prob:hardcore-tv}
\leavevmode
\begin{itemize}[leftmargin=2.4em]
  \item \emph{Input:} A finite simple graph $G=(V,E)$, two activity vectors
  $\lambda^\mu,\lambda^\nu\in\R_{>0}^V$, and an error bound
  $\varepsilon\in(0,1)$.  These specify two hard-core Gibbs distributions
  $\mu=\mu_{G,\lambda^\mu}$ and $\nu=\mu_{G,\lambda^\nu}$.

  \item \emph{Output:} A nonnegative real number $\widehat d$ such that
  \[
    (1-\varepsilon)\TV(\mu,\nu)
    \le \widehat d
    \le (1+\varepsilon)\TV(\mu,\nu).
  \]
\end{itemize}
\end{problem}
\end{samepage}

\begin{samepage}
\begin{problem}[Ising TV-distance approximation]
\label{prob:ising-tv}
\leavevmode
\begin{itemize}[leftmargin=2.4em]
  \item \emph{Input:} A finite simple graph $G=(V,E)$, an edge activity
  $\beta>0$, two vertex-activity vectors
  $\lambda^\mu,\lambda^\nu\in\R_{>0}^V$, and an error bound
  $\varepsilon\in(0,1)$.  These specify two Ising Gibbs distributions
  $\mu=\mu_{G,\beta,\lambda^\mu}$ and
  $\nu=\mu_{G,\beta,\lambda^\nu}$.

  \item \emph{Output:} A nonnegative real number $\widehat d$ such that
  \[
    (1-\varepsilon)\TV(\mu,\nu)
    \le \widehat d
    \le (1+\varepsilon)\TV(\mu,\nu).
  \]
\end{itemize}
\end{problem}
\end{samepage}

\subsubsection{Our results}

\paragraph{Hard-core model.}

For $\Delta\ge 3$, write
\begin{equation*}
  \lambda_{\mathrm c}(\Delta)
  =\frac{(\Delta-1)^{\Delta-1}}{(\Delta-2)^\Delta}
\end{equation*}
for the hard-core uniqueness threshold on the infinite $\Delta$-regular tree.

\begin{theorem}[Main theorem]
\label{thm:uniqueness-main}
Fix $\Delta\ge3$, $\eta\in(0,1)$, and $b>0$, and let
$r=(1-\eta)\lambda_{\mathrm c}(\Delta)$.  If $b<r$, then there is a
deterministic algorithm for
\cref{prob:hardcore-tv} on instances in which $G$ has maximum degree at most
$\Delta$ and
\[
  \lambda^\mu,\lambda^\nu\in[b,r]^V.
\]
Its running time is
\[
  \poly_{\Delta,\eta,b}\left(n,\frac1\varepsilon\right),
  \qquad n=\abs V.
\]
An explicit running-time bound is given in
\eqref{eq:hardcore-total-time}.
The output is $0$ if and only if $\lambda^\mu=\lambda^\nu$.
\end{theorem}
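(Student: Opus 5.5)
Our plan is to write the total variation distance as an expectation that a deterministic partition-function algorithm can estimate. For two hard-core distributions on the same graph the supports coincide, so
\[
  \TV(\mu,\nu)=\sum_{\sigma}\bigl(\mu(\sigma)-\nu(\sigma)\bigr)_+
  =\E_{\sigma\sim\mu}\Bigl[\Bigl(1-\tfrac{Z_\mu}{Z_\nu}\prod_{v\in I(\sigma)}\tfrac{\lambda^\nu_v}{\lambda^\mu_v}\Bigr)_+\Bigr].
\]
The ratio $\nu(\sigma)/\mu(\sigma)$ equals $\exp(L(\sigma))$, where $L(\sigma)=\sum_{v\in I(\sigma)}\theta_v-\log(Z_\nu/Z_\mu)$ and $\theta_v=\log(\lambda^\nu_v/\lambda^\mu_v)$. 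So we need the law of the linear statistic $S(\sigma)=\sum_{v\in I(\sigma)}\theta_v$ under $\mu$, integrated against the kink function $x\mapsto(1-e^x)_+$. We will access this law through its characteristic function. For $t\in\R$,
\[
  \E_\mu\bigl[e^{\iu tS}\bigr]=Z_G(\lambda^\mu e^{\iu t\theta})/Z_G(\lambda^\mu),
\]
which is a ratio of hard-core partition functions at complex activities $\lambda_v^\mu e^{\iu t\theta_v}$. Every entry has modulus at most $r<\lambda_c$ and lies in a sector around the positive real axis whose angle grows with $|t|\max_v|\theta_v|$.

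\emph{Step 1 (zero-freeness and deterministic evaluation).} We use the known zero-free region for hard-core partition functions: there is a complex neighbourhood of $[0,(1-\eta)\lambda_c]$, of width depending on $\Delta$ and $\eta$, on which $Z_G$ does not vanish for graphs of maximum degree $\Delta$ (Peters--Regts). Barvinok interpolation, or Patel--Regts cluster computation, then gives $\log Z$ to additive error $\delta$ in time $n^{O(\log(n/\delta))}$ in general, and in polynomial time for bounded degree. This covers the real activities $\lambda^\mu$, $\lambda^\nu$, and the complex activities $\lambda^\mu e^{\iu t\theta}$ as long as $|t\theta_v|$ stays small enough to remain in the neighbourhood.

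\emph{Step 2 (smoothing and Fourier inversion).} The kink $(1-e^x)_+$ will be replaced by a smoothed version $\phi$ whose Fourier transform we control. We then write $\E_\mu[\phi(S-c)]$ as an integral of $\widehat\phi(t)\,e^{-\iu tc}\,\E_\mu[e^{\iu tS}]$ over $t$. Truncating this integral requires $\widehat\phi$ to decay before $|t|$ exceeds the range where Step 1 applies. The bandwidth is therefore limited by the zero-free sector angle divided by $\max_v|\theta_v|$.

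\emph{Step 3 (relative error).} This is the main obstacle: we need relative, not additive, accuracy when $\TV$ is tiny. We will use a lower bound of the form $\TV(\mu,\nu)\gtrsim \min\{1,\ \|\theta\|_2\}$ up to constants depending on $\Delta,b,\eta$. To prove it, project onto the single-vertex marginals conditioned on neighbourhoods: uniqueness gives marginal lower bounds, and $\lambda\ge b$ gives spread, so the variance of $S$ under $\mu$ is of order $\|\theta\|_2^2$. The bound then follows by a second-moment or Hellinger comparison.

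With this lower bound, two regimes appear. When $\|\theta\|_\infty$ is small, we expand $(1-e^{L})_+\approx(-L)_+$ and approximate $S$ by a near-Gaussian. We will not rely on a CLT. Instead, the additive error from Steps 1--2 will be made $\varepsilon\|\theta\|_2$ by rescaling $t$ by $1/\|\theta\|_\infty$ and paying a smoothing error proportional to the bandwidth times $\|\theta\|_\infty$, which is at most $\|\theta\|_2$. When some $|\theta_v|$ is large, $\TV$ is at least a constant and additive error $\varepsilon\cdot\Omega(1)$ suffices. In that case we split the large-$\theta$ coordinates off and handle them by exact enumeration of the conditioning on boundedly many coordinates. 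If there are too many such coordinates, $\TV$ is already close to $1$.

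The output is $0$ exactly when $\theta=0$, which we check directly. The running time polynomial in $n$ and $1/\varepsilon$ comes from the degree-$O(\log(n/\varepsilon))$ Taylor truncation in bounded-degree graphs. We expect the delicate point to be matching the smoothing error against the variance lower bound uniformly.
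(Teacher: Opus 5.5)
\textbf{Gap: the usable frequencies are too low.} Steps~2--3 only query $\E_\mu e^{\iu tS}$ at phases $t\theta_v$ inside a fixed zero-free sector of angle $\theta_0$. So the usable bandwidth is $\abs t\lesssim\theta_0/\norm{\theta}_\infty$. Smoothing the kink of $x\mapsto(1-\e^x)_+$ at that bandwidth costs an additive error of order $\norm{\theta}_\infty/\theta_0$. Your observation that this is at most $\norm{\theta}_2$ therefore gives only a relative error of order $1/\theta_0$, not $\varepsilon$.

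A single-vertex example makes this concrete. If $\lambda^\mu$ and $\lambda^\nu$ differ only at $v$, then $S\in\{0,\theta_v\}$ and $\TV(\mu,\nu)=\Theta(\abs{\theta_v})=\Theta(\norm{\theta}_2)$. A kernel of width $\abs{\theta_v}/\theta_0$ smears both atoms across the kink, and no improved lower bound on $\TV$ can absorb this error. After normalizing by $s=\norm\theta_\infty$, approximating a $1$-Lipschitz kink to additive error $O(\varepsilon)$ needs frequencies up to $\Theta(1/\varepsilon)$. You must therefore evaluate characteristic functions at phases $ta_v$ of size $\Theta(1/\varepsilon)$. These wrap around the circle and leave every zero-free neighbourhood of $[0,\lambda_{\mathrm c}(\Delta))$; at such activities, approximating $Z$ can even be \#P-hard.

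Splitting off coordinates with large $\abs{\theta_v}$ does not help. The obstruction is large \emph{phases} $t\theta_v$, and these arise at every vertex with $\theta_v\neq0$ once $\abs t\gtrsim1/(\varepsilon\abs{\theta_v})$.

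\textbf{The missing idea: per-frequency phase damping.}
\begin{itemize}
  \item Reduce the phases modulo $2\pi$, and let $W$ be the vertices whose reduced phase is at least $\theta_0/2$.
  \item Prove $\abs{\Phi}\le\e^{-\gamma\abs W}$. Take a $4$-separated subset of $W$. In each block $N_G[v]$, the empty configuration and the configuration with only $v$ occupied both have conditional probability bounded below, and their phases differ by at least $\theta_0/2$, so the blocks partially cancel.
  \item If $\abs W>\gamma^{-1}\log(8/\delta)$, output $0$.
  \item Otherwise, expand over the at most $2^{\abs W}=\poly(1/\delta)$ independent sets $J\subseteq W$. Evaluate each residual $Z_{H_J}$, with $H_J=G-(W\cup N_G(J))$, using the local oracle; all phases in $H_J$ are small.
  \item Control the total error with the no-cancellation bound $\sum_J\abs{z^JZ_{H_J}(z)}\le Z_G(\lambda)$.
\end{itemize}

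\textbf{Second gap: the center $c=\log(Z_\nu/Z_\mu)$.} You need $c$ to additive accuracy about $\varepsilon\norm\theta_\infty$. Subtracting two separately approximated log partition functions then makes the running time polynomial in $1/\norm\theta_\infty$, not in $n$ and $1/\varepsilon$ alone. The paper instead computes the normalized center $(\log Z_\nu-\log Z_\mu)/s$ by thermodynamic integration of marginals along the geometric path. That integrand is $\log(r/b)\,n^2$-Lipschitz, independently of $s$.

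\textbf{The $\ell_2$ lower bound.} Your bound $\TV\gtrsim\min\{1,\norm\theta_2\}$ does not follow from a variance bound alone; you would also need anti-concentration of $S$. It is also unnecessary: $\TV\ge(b/5000)\norm\theta_\infty$ from Feng et al.\ suffices once everything is normalized by $s=\norm{\theta}_\infty$.
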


\paragraph{Ising model.}

The first result for the Ising model uses the
multivariate Lee--Yang zero-free region
\cite{leeYang1952statistical,liuSinclairSrivastava2017ising}.

\begin{theorem}[Lee--Yang zero-free regime]
\label{thm:ising-lee-yang}
Fix an integer $\Delta\ge1$ and constants $0<\beta\le1$ and
$0<b<r<1$.  There is a deterministic algorithm for
\cref{prob:ising-tv} on instances in which $G$ has maximum degree at most
$\Delta$ and either
\[
  \lambda^\mu,\lambda^\nu\in[b,r]^V
\]
or
\[
  \lambda^\mu,\lambda^\nu\in[r^{-1},b^{-1}]^V.
\]
Its running time is
\[
  \poly_{\Delta,\beta,b,r}\left(n,\frac1\varepsilon\right),
  \qquad n=\abs V.
\]
An explicit running-time bound is given in
\eqref{eq:lee-yang-total-time}.
The output is $0$ if and only if $\lambda^\mu=\lambda^\nu$.
\end{theorem}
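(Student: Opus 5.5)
We plan to prove \cref{thm:ising-lee-yang} by writing the TV distance as the partition function of an auxiliary system that is controlled by Lee--Yang zero-freeness. Write $w_\mu=w_{\beta,\lambda^\mu}$, $w_\nu=w_{\beta,\lambda^\nu}$ and $Z_\mu,Z_\nu$ for the two partition functions. Then
\[
  \TV(\mu,\nu)=\sum_{\sigma}\Bigl(\frac{w_\mu(\sigma)}{Z_\mu}-\frac{w_\nu(\sigma)}{Z_\nu}\Bigr)^+ .
\]
The ratio $w_\mu(\sigma)/w_\nu(\sigma)=\prod_{v:\sigma_v=+1}\lambda^\mu_v/\lambda^\nu_v$ depends only on the $+1$ set. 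Hence the sign region $\{\sigma: w_\mu(\sigma)/w_\nu(\sigma)>Z_\mu/Z_\nu\}$ is a threshold set for the additive statistic $S(\sigma)=\sum_{v:\sigma_v=+1}\log(\lambda^\mu_v/\lambda^\nu_v)$.

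The plan is to reduce TV to a one-dimensional distribution problem. Let $t=\log(Z_\mu/Z_\nu)$. Then
\[
  \TV=\Pr_\mu[S>t]-\Pr_\nu[S>t].
\]
Both probabilities are tail probabilities of $S$ under an Ising measure, and the characteristic function $\E_\mu[\e^{\iu\theta S}]$ equals
\[
  Z_G\bigl(\beta,\lambda^\mu\odot(\lambda^\mu/\lambda^\nu)^{\iu\theta}\bigr)/Z_\mu .
\]
This is a ratio of partition functions at complex activities whose moduli equal $\lambda^\mu_v\in[b,r]$. The multivariate Lee--Yang theorem (Liu--Sinclair--Srivastava) says $Z_G(\beta,z)\ne0$ whenever all $\abs{z_v}<1$, or all $\abs{z_v}>1$ by spin flip. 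With the quantitative zero-free margin from $r<1$ and bounded degree, Barvinok interpolation gives a deterministic FPTAS for $\log Z_G(\beta,z)$ uniformly on the polydisc $\abs{z_v}\le r'$ for some $r'\in(r,1)$. The same method also handles $Z_\mu$, $Z_\nu$, and hence $t$.

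Rather than inverting characteristic functions, which only yields additive error, we would obtain relative error as follows. Tilt both measures to the common "midpoint" system with activities $\sqrt{\lambda^\mu_v\lambda^\nu_v}$. Choose a parameter $s$ and write
\[
  \TV=\tfrac12\E_\nu\bigl|\e^{S-t}-1\bigr| .
\]
The first step is to lower bound TV by a constant times $\min(1,\norm{\log\lambda^\mu-\log\lambda^\nu}_2)$. This uses the strong zero-freeness, which gives correlation decay and variance bounds: $\mathrm{Var}(S)$ is of order $\sum_v\delta_v^2$ with $\delta_v=\log(\lambda^\mu_v/\lambda^\nu_v)$, and $S$ has a local CLT-type smoothness. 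The second step handles two regimes. When $\norm{\delta}_2$ is large, TV is bounded below by a constant, so additive-error estimates of the two tail probabilities suffice. These come from Fourier inversion of the zero-free characteristic functions, using only polynomially many evaluation points $\theta$ together with smoothing. When $\norm{\delta}_2$ is small, expand $Z_G(\beta,\lambda^\nu\e^{u\delta})$ in the complex variable $u$ around $0$. TV is then of order $\tfrac12\E_\nu\abs{S-\E S}$ up to a relative $1+O(\norm\delta)$ factor. We would estimate $\E\abs{S-\E S}$ via cumulants obtained from derivatives of $\log Z$, which Barvinok's method supplies, plus a Gaussian approximation whose error is relatively small because $S$ is a sum with exponentially decaying dependencies. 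If $\norm\delta$ is so small that this relative error is not below $\varepsilon$, we interpolate using more terms of the expansion.

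The main obstacle is relative accuracy in the intermediate and small-distance regime. Here TV can be as small as $\min_v\delta_v$, possibly exponentially small in the input bit length but polynomially related to $\delta$. Additive Fourier inversion fails in this regime. The approach I would try first is an exact identity that writes $\TV$ as a contour integral over $u$ of $\partial_u\log Z_G(\beta,\lambda^\nu\e^{u\delta})$ against a kernel. The zero-free strip in $u$, whose width is of order $\log(1/r)/\max\abs{\delta_v}$ and which comes from Lee--Yang, would make this integral numerically stable with relative error. The final claim that the output is $0$ iff $\lambda^\mu=\lambda^\nu$ follows, because the algorithm returns exactly $0$ when $\delta=0$ and otherwise returns a positive relative approximation of a positive quantity.
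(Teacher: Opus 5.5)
\textbf{There is a genuine gap.} Your proposal leaves the decisive step open. The regime you flag as ``the main obstacle''---relative accuracy when $\TV(\mu,\nu)$ is small---is exactly where the theorem has content. There you offer only a cumulant/Gaussian approximation and an unspecified contour-integral identity.

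The Gaussian route cannot give arbitrary relative accuracy $\varepsilon$ in $\poly(n,1/\varepsilon)$ time. For example, if $\lambda^\mu$ and $\lambda^\nu$ differ at a single vertex $v$, then $S=\delta_v\one_{\{\sigma_v=+1\}}$ is a scaled Bernoulli variable with no CLT behaviour at all.

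Two other steps also fail as written.
\begin{itemize}
\item Computing $t=\log(Z_\mu/Z_\nu)$ by subtracting relative approximations of $Z_\mu$ and $Z_\nu$ only gives additive error $O(\varepsilon)$. But $S$ and $t$ live on the scale $s=\norm{\log\lambda^\mu-\log\lambda^\nu}_\infty$, so you need additive error $O(\varepsilon s)$. With Barvinok interpolation this costs $\poly(n,1/(\varepsilon s))$, which is not bounded in terms of $n$ and $1/\varepsilon$ alone, since $s$ can be arbitrarily small.
\item Writing $\TV=\Pr_\mu[S>t]-\Pr_\nu[S>t]$ subtracts two additively estimated quantities, which destroys relative accuracy. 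Moreover, estimated separately, each tail is the expectation of a discontinuous indicator of a discrete statistic. The smoothing error is then controlled only by an anti-concentration (local CLT) estimate that you do not prove. Your $\ell_2$ lower bound on $\TV$ via correlation decay is likewise only asserted.
\end{itemize}

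The missing idea is that additive Fourier methods \emph{do} suffice once everything is rescaled by $s$, as in the paper.

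Set $a_v=(\log\lambda^\nu_v-\log\lambda^\mu_v)/s\in[-1,1]$, $A(\sigma)=\sum_{v:\sigma_v=+1}a_v$, and $c=(\log Z_\nu-\log Z_\mu)/s\in[-n,n]$. Then $\TV/s=\E_\mu g_s(A-c)$ with $g_s(x)=(1-\e^{sx})_+/s$. This function is $1$-Lipschitz and bounded by $2n$ on the relevant range, \emph{uniformly in $s$} (\cref{lem:normalized-tv}).

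The elementary separation bound $\TV\ge\kappa_{\mathrm I}s$ of \cref{lem:ising-parameter-separation} gives $\TV/s\ge\kappa_{\mathrm I}$. It needs only the uniform lower bound on conditional spin probabilities that comes from bounded degree and fields in $[b,r]$, not correlation decay. Hence an additive $O(\varepsilon\kappa_{\mathrm I})$ estimate of $\E_\mu g_s(A-c)$ is already an $\varepsilon$-relative estimate of $\TV$.

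That expectation is computed in two steps.
\begin{itemize}
\item Jackson-approximate $g_s(\cdot-\widehat c)$ on $[-n,n]$ by a trigonometric polynomial with polynomially bounded coefficients (\cref{lem:constructive-jackson}).
\item Query characteristic functions, i.e.\ the ratios $Z_G(\beta,(\lambda^\mu_v\e^{\iu ta_v})_v)/Z_G(\beta,\lambda^\mu)$. The Lee--Yang oracle applies because every twisted activity has modulus at most $r<1$.
\end{itemize}

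The center $c$ is obtained to additive accuracy $O(\varepsilon\kappa_{\mathrm I})$ by thermodynamic integration, $c=\int_0^1\E_{\pi_t}A\,\mathrm dt$ along the geometric path, using real marginal oracles (\cref{lem:thermodynamic-identity}). The integrand has derivative $s\operatorname{Var}_{\pi_t}(A)\le\log(r/b)n^2$, which does not depend on how small $s$ is.

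The $[r^{-1},b^{-1}]$ case then follows by global spin reversal.
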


We next describe the antiferromagnetic parameter region.  Fix
$\Delta\ge3$ and $\beta>1$ with
$\beta\ne\Delta/(\Delta-2)$.  Following Sinclair et
al.~\cite{sinclairSrivastavaThurley2012approximation} and Liu et
al.~\cite{liuSinclairSrivastava2019correlation}, let
\begin{equation*}
  \mathcal U_{\beta,\Delta}
  =
  \begin{cases}
    (0,\infty),
      & 1<\beta<\dfrac{\Delta}{\Delta-2},\\[6pt]
    (0,\lambda_{\mathrm c}(\beta,\Delta)^{-1})
      \mathbin{\cup}
      (\lambda_{\mathrm c}(\beta,\Delta),\infty),
      & \beta>\dfrac{\Delta}{\Delta-2},
  \end{cases}
\end{equation*}
be the strict correlation-decay region on the infinite
$\Delta$-regular tree, where $\lambda_{\mathrm c}(\beta,\Delta)>1$ is the
critical external field in the second case. For the critical-field
characterization, see \cite[pp.~254--255]{georgii2011gibbs}
and its complete definition in
\cite{sinclairSrivastavaThurley2012approximation}.

\begin{theorem}[Antiferromagnetic correlation-decay regime]
\label{thm:ising-correlation-decay}
Fix an integer $\Delta\ge3$ and a constant $\beta>1$ with
$\beta\ne\Delta/(\Delta-2)$.  Let $[\lambda_-,\lambda_+]$ be a fixed
nonempty closed interval contained in one connected component of
$\mathcal U_{\beta,\Delta}$.  There is a deterministic algorithm for
\cref{prob:ising-tv} on instances in which $G$ has maximum degree at most
$\Delta$ and
\[
  \lambda^\mu,\lambda^\nu\in[\lambda_-,\lambda_+]^V.
\]
Its running time is
\[
  \poly_{\Delta,\beta,\lambda_-,\lambda_+}
  \left(n,\frac1\varepsilon\right),
  \qquad n=\abs V.
\]
An explicit running-time bound is given in
\eqref{eq:ising-cd-total-time}.
The output is $0$ if and only if $\lambda^\mu=\lambda^\nu$.
\end{theorem}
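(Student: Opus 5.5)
The plan is to reduce $\TV(\mu,\nu)$ to the law of a single linear statistic under $\mu$, and to access that law through Ising partition functions at complex vertex activities near $[\lambda_-,\lambda_+]^V$. Write $h_v=\log(\lambda^\nu_v/\lambda^\mu_v)$ and $S(\sigma)=\sum_{v:\sigma_v=+1}h_v$. Since both models share $G$ and $\beta$,
\[
  \frac{\nu(\sigma)}{\mu(\sigma)}=\frac{Z_G(\beta,\lambda^\mu)}{Z_G(\beta,\lambda^\nu)}\,\e^{S(\sigma)} .
\]
Hence
\[
  \TV(\mu,\nu)=\E_\mu\bigl[(1-\e^{S-L})^+\bigr],\qquad L=\log\E_\mu[\e^{S}] .
\]
Everything we need about $S$ under $\mu$ is encoded in $\Phi(z)=\log Z_G(\beta,\lambda^\mu\e^{zh})$ for complex $z$ near the segment $[0,1]$. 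Here $\lambda^\mu\e^{zh}$ is the coordinatewise product, and $\lambda^\mu\e^{h}=\lambda^\nu$. In particular $\Phi(1)-\Phi(0)=L$, and the $k$-th derivative $\Phi^{(k)}(0)$ is the $k$-th cumulant of $S$ under $\mu$. If $\lambda^\mu=\lambda^\nu$ we output $0$. Otherwise we will show $\TV>0$ and output a positive number, which gives the ``if and only if'' clause.

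\textbf{Step 1 (zero-freeness and evaluation).} I would invoke the complex-contraction result for the antiferromagnetic Ising model in the correlation-decay regime, in the line of Liu, Sinclair and Srivastava~\cite{liuSinclairSrivastava2019correlation}. It states that, for a closed interval inside one component of $\mathcal U_{\beta,\Delta}$, there is $\delta=\delta(\Delta,\beta,\lambda_\pm)>0$ such that $Z_G(\beta,z)\neq0$ whenever every $z_v$ lies within distance $\delta$ of $[\lambda_-,\lambda_+]$. Moreover, the truncated complex tree recursion approximates $\log Z$ to additive error $\epsilon'$ in time $\poly(n,1/\epsilon')$. The segment $z\in[0,1]$ stays inside the real box, so $\Phi$ is analytic on a tube of width about $\delta/\norm{h}_\infty$ around $[0,1]$. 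Using Barvinok interpolation, or directly the recursion, we can then compute $L$ and the first $K$ cumulants of $S$ deterministically. The cost is $\poly(n,1/\epsilon')\cdot\exp(O(K))$, and this is fine because $K$ will be $O(\log(n/\varepsilon))$.

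\textbf{Step 2 (normalization and lower bound).} Let $s^2=\mathrm{Var}_\mu(S)=\Phi''(0)$, and let $T=(S-\E_\mu S)/s$. The cumulants of $T$ of order $k$ are bounded by $C^k k!\,\norm{h}_\infty^{k-2}\cdot n/s^{k}$, by Cauchy estimates on the tube. Uniqueness gives a variance lower bound $s^2\ge c\norm{h}_2^2$ (spectral independence together with marginals bounded away from $0$ and $1$). Combining these, $T$ has sub-Gaussian-type moment bounds uniformly.

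Next I would prove $\TV(\mu,\nu)\ge c'\min(1,s)$. The route is the second-order expansion of $\E[(1-\e^{S-L})^+]$ when $s$ is small, and a Paley--Zygmund argument on $T$ when $s$ is large. This bound is what converts additive accuracy into relative accuracy, and it also yields $\TV>0$ whenever $h\ne0$.

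\textbf{Step 3 (moment method).} Write $(1-\e^{S-L})^+=f(T)$ with $f(t)=(1-\e^{\E S+st-L})^+$. I would replace $f$ on the window $\abs{t}\le R=O(\sqrt{\log(1/\varepsilon)})$ by a polynomial $p$ of degree $K$ with uniform error $\varepsilon c'\min(1,s)/4$. The kink of $f$ has slope $O(s)$, so a Jackson-type approximation of the kink at scale $s$ gives $K=\poly(1/\varepsilon)$, which is where the polynomial dependence on $1/\varepsilon$ enters. The tail $\abs{T}>R$ is controlled using the moment bounds from Step 2. Then $\E_\mu[p(T)]$ is an explicit combination of the moments of $T$, which are obtained from the cumulants computed in Step 1. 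This yields an additive estimate $\widehat d$ with error at most $\varepsilon c'\min(1,s)\le\varepsilon\TV$.

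\textbf{Main obstacle.} I expect the hardest part to be the quantitative interplay in Steps 2--3. Three things must hold at once, uniformly over $\norm{h}$ ranging from tiny to order $n$: the lower bound $\TV\gtrsim\min(1,s)$ must hold, the higher cumulants of $T$ must be controlled, and the approximation degree $K$ must stay $\poly(1/\varepsilon)$. When $s\gtrsim1$, the function $f$ is no longer near-linear, and the variable $T$ is only nearly Gaussian. Here I would try first to split into the regimes $s\le1$ and $s>1$. For $s>1$ I would additionally use that $\TV$ is bounded below by a constant, so that a constant-degree-in-$1/\varepsilon$ smoothing of $f$ suffices.
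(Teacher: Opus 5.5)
\textbf{Verdict: there is a genuine gap.} Your analytic setup in a tube around the real segment is essentially the paper's local complex oracle. The argument breaks in Step~3, which as proposed does not run in polynomial time, and it also clashes with the budget you set in Step~1.

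\textbf{Why Step~3 fails.} Computing $\E_\mu p(T)$ through moments means expanding a degree-$K$ uniform approximant on $[-R,R]$ in monomials. The individual terms $c_j\E_\mu T^j$ are then of size $e^{\Theta(K)}$ (Chebyshev growth of the coefficients against the growth of the moments), while their sum is $O(1)$. So the moments, and hence the first $K$ cumulants, are needed to additive accuracy about $\varepsilon\,e^{-\Omega(K)}$. Those cumulants are Taylor coefficients of your $\Phi$ at a point, and the zero-free tube has only constant width in the normalized variable. Extracting the $k$-th coefficient therefore loses a factor $\delta^{-k}$, and with Barvinok interpolation you yourself quote cost $e^{O(K)}$. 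That is affordable only for $K=O(\log(n/\varepsilon))$. But Step~3 needs $K=\poly(1/\varepsilon)$: approximating a kink uniformly to accuracy $\varepsilon$ relative to its slope on a window of width $R$ forces $K\gtrsim R/\varepsilon$. The result is $\exp(\poly(1/\varepsilon))$ time.

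\textbf{Why the $s>1$ remedy fails.} A constant lower bound on $\TV$ does not make $f$ any smoother. In the $T$-variable, $f(t)=(1-\e^{s(t-t_0)})_+$ is an $s$-Lipschitz ramp. Any uniform polynomial approximation therefore needs degree growing with $s$, which can be of order $\sqrt n$.

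\textbf{The missing idea: a bounded basis.} The paper normalizes by $\norm{h}_\infty$ rather than by the standard deviation. It approximates the $1$-Lipschitz function $g_s(\cdot-\widehat c)$ on $[-n,n]$ (with $s=\norm{h}_\infty$ there) by a Jackson trigonometric polynomial whose coefficients have $\ell_1$-norm $\poly(n,1/\varepsilon)$. Each basis expectation is a characteristic-function value of modulus at most $1$, so additive errors are never amplified.

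\textbf{Reaching the large frequencies.} The price is that the frequencies go up to $\Theta(1/\varepsilon)$. The activities $\lambda_v\e^{\iu t a_v}$ then carry arbitrary phases, far outside any local zero-free tube. The paper handles this in two ways.
\begin{itemize}
\item Phase damping: if more than $\gamma^{-1}\log(8/\varepsilon')$ vertices have phase at least $\theta_0/2$, the characteristic function is below $\varepsilon'/8$, and $0$ is returned.
\item Enumeration: otherwise it enumerates the $\poly(1/\varepsilon')$ spin assignments on those vertices. Each term is a pinned partition function with only small phases, which the local complex oracle handles. Positivity at the real activities bounds the sum of the moduli of the terms by $Z_H(\beta,\lambda)$, so cancellation does not amplify errors.
\end{itemize}

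\textbf{Two smaller points.}
\begin{itemize}
\item If you obtain $L=\Phi(1)-\Phi(0)$ by subtracting approximate log partition functions, you need precision of order $\varepsilon\norm{h}_\infty$. This is not bounded below by any function of $n$ and $\varepsilon$. The paper instead computes the normalized center by thermodynamic integration of marginals along the geometric path.
\item Your bound $\TV\gtrsim\min(1,s)$ is both unproven and unnecessary. The cited separation bound $\TV\ge\kappa_{\mathrm I}\norm{h}_\infty$ is enough.
\end{itemize}
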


\subsection{Related work}

\paragraph{TV-distance approximation.}
Earlier work~\cite{bhattacharyyaEtAl2020learning,chenKiefer2014total,
kiefer2018computing,canonneRubinfeld2014testing} studied algorithms and
hardness for additive-error approximation of TV distance, a weaker guarantee
than relative accuracy when the distance may be small.  A separate line of
work~\cite{devroyeMehrabianReddad2018gaussians,
arbasAshtianiLiaw2023learning,kontorovich2024tensorization,
avitalKontorovich2024sharp} obtained explicit estimates with fixed
relative-accuracy guarantees for structured high-dimensional distributions.

\paragraph{Partition-function approximation.}
Our reduction uses deterministic partition-function
approximation.  For the hard-core model,
Weitz~\cite{weitz2006counting} introduced the self-avoiding-walk tree and
correlation-decay method and obtained an FPTAS for uniform activity below
$\lambda_{\mathrm c}(\Delta)$.  Harvey et
al.~\cite{harveySrivastavaVondrak2018independence} gave a deterministic
algorithm for complex external fields lying a fixed multiplicative distance
inside the Shearer region; in particular, it applies when
$\abs{z_v}\le(1-\eta)\lambda_{\mathrm{s}}(\Delta)$ for every vertex and a fixed
$\eta>0$.  However, Bez{\'a}kov{\'a} et
al.~\cite{bezakova2018inapproximability} showed that, for some complex
activities with modulus between $\lambda_{\mathrm{s}}(\Delta)$ and
$\lambda_{\mathrm{c}}(\Delta)$, approximating the hard-core partition function
is $\#\mathrm{P}$-hard.  More generally, Patel and
Regts~\cite{patelRegts2017deterministic} showed how zero-free regions for a
broad class of bounded-degree graph polynomials lead to deterministic
approximation algorithms.  Liu et
al.~\cite{liuSinclairSrivastava2019correlation} further connected real
correlation decay with zero-freeness in scalar complex neighborhoods; the
nonuniform complex approximation used here is obtained by the coordinatewise
extensions developed in later sections.

For the Ising model, Jerrum and Sinclair~\cite{jerrumSinclair1993ising} gave
the classical FPRAS for the ferromagnetic partition function.  Sinclair et
al.~\cite{sinclairSrivastavaThurley2012approximation} and Li et
al.~\cite{li2013correlation} developed deterministic correlation-decay
algorithms for antiferromagnetic two-spin systems throughout the interior of
the tree-uniqueness region.  In the ferromagnetic setting, Liu et
al.~\cite{liuSinclairSrivastava2017ising} used Lee--Yang zero-freeness to
obtain a deterministic FPTAS on bounded-degree graphs away from zero field.

\section{Technical overview}
\label{sec:technical-overview}

We use the hard-core model to illustrate the common reduction.  Fix
$0<b<r<\infty$, let $G=(V,E)$ have maximum degree at most $\Delta$, write
$n=\abs V$, and let $\lambda^\mu,\lambda^\nu\in[b,r]^V$ be two distinct
activity vectors.  Let $\mu=\mu_{G,\lambda^\mu}$ and
$\nu=\mu_{G,\lambda^\nu}$, and write $\Omega$ for their common feasible
support.  We abbreviate $Z(\lambda)=Z_G(\lambda)$, interpret vector
logarithms coordinatewise, and, for $x\in\R$, write
$x_+=\max\{x,0\}$.  The total variation distance can be expressed as
\begin{equation}
\begin{aligned}
  \TV(\mu,\nu)
  &=\sum_{\substack{\sigma\in\Omega\\ \mu(\sigma)\ge\nu(\sigma)}}
      \bigl(\mu(\sigma)-\nu(\sigma)\bigr)\\
  &=\E_{\sigma\sim\mu}
      \left(1-\frac{\nu(\sigma)}{\mu(\sigma)}\right)_+.
\end{aligned}
\label{eq:tv-distance}
\end{equation}
Define
\[
  s=\norm{\log\lambda^\nu-\log\lambda^\mu}_\infty>0,
  \qquad
  a_v=\frac{\log\lambda_v^\nu-\log\lambda_v^\mu}{s},
  \qquad
  A(\sigma)=\sum_{v:\,\sigma_v=+1}a_v,
\]
and
\[
  c=\frac{\log Z(\lambda^\nu)-\log Z(\lambda^\mu)}{s}.
\]
Thus $\abs{a_v}\le1$ and $A(\sigma)\in[-n,n]$.  We can also prove that $c$ lies
in $[-n,n]$.  For every
$\sigma\in\Omega$, the likelihood ratio is
\begin{equation}
\begin{aligned}
  \frac{\nu(\sigma)}{\mu(\sigma)}
  &=\frac{w_{\lambda^\nu}(\sigma)/Z(\lambda^\nu)}
         {w_{\lambda^\mu}(\sigma)/Z(\lambda^\mu)}
   =\frac{Z(\lambda^\mu)}{Z(\lambda^\nu)}
     \prod_{v:\,\sigma_v=+1}\frac{\lambda_v^\nu}{\lambda_v^\mu}\\
  &=\e^{s(A(\sigma)-c)}.
\end{aligned}
\label{eq:tech-overview-ratio}
\end{equation}
For $s>0$, set
\[
  g_s(x)=\frac{(1-\e^{sx})_+}{s}.
\]
Combining \eqref{eq:tv-distance} and
\eqref{eq:tech-overview-ratio} gives the normalized identity
\begin{align}\label{eq:tech-overview-normalized-tv}
  \frac{\TV(\mu,\nu)}s
  =\E_{\sigma\sim\mu}g_s(A(\sigma)-c).
\end{align}
This normalization is important when the two activity vectors are very
close.  The function $g_s$ is $1$-Lipschitz, uniformly in $s$, and
$0\le g_s(x)\le(-x)_+$.

\paragraph{The common deterministic reduction.}
There are two quantities to estimate in
\eqref{eq:tech-overview-normalized-tv}: the unknown center $c$ and the
expectation $\E_\mu[g_s(A-c)]$.  Consider the
coordinatewise geometric path
\[
  \lambda_v(t)
  =(\lambda_v^\mu)^{1-t}(\lambda_v^\nu)^t
  =\lambda_v^\mu\e^{tsa_v},
  \qquad t\in[0,1],
\]
and let $\pi_t=\mu_{G,\lambda(t)}$.  Differentiating the log partition
function along this path gives
\begin{equation}
  c=\int_0^1\E_{\sigma\sim\pi_t}A(\sigma)\,\mathrm dt
   =\int_0^1\sum_{v\in V}a_v
      \Pr_{\sigma\sim\pi_t}(\sigma_v=+1)\,\mathrm dt.
  \label{eq:tech-overview-thermodynamic}
\end{equation}
If $m(t)=\E_{\pi_t}A$, then we can prove that
\[
  m'(t)=s\operatorname{Var}_{\pi_t}(A),
  \qquad
  \abs{m'(t)}\le \log(r/b)n^2.
\]
Consequently, if we can deterministically approximate the marginal
probabilities $\Pr_{\sigma\sim\pi_t}(\sigma_v=+1)$ for all $v\in V$ and
$t\in[0,1]$ to additive error $\delta$, then we can choose a polynomial number
of uniformly spaced points $t$ in $[0,1]$ and take the average of the
corresponding $m(t)$-values.  This gives a deterministic additive approximation
$\widehat c$ to $c$ with additive error $O(\delta n)$.  Projecting
$\widehat c$ onto $[-n,n]$ cannot increase this error.

To implement the marginal oracle, we use deterministic partition-function
approximation for positive real external fields.  Correlation-decay algorithms
provide the required relative approximations when all activities lie
a fixed distance below the uniqueness threshold
\cite{weitz2006counting,li2013correlation}.

To estimate the remaining expectation deterministically, apply the
constructive Jackson approximation from \cite{devoreLorentz1993constructive} to
\[
  F(x)=g_s(x-\widehat c),\qquad x\in[-n,n].
\]
For any desired additive accuracy $\delta\in(0,1)$, we construct a trigonometric
polynomial
\[
  P(x)=\sum_{\abs k\le K}\alpha_k\e^{\iu t_kx},
  \qquad K=O(n/\delta),\qquad t_k=\frac{\pi k}{2n},
\]
which uniformly approximates $F$ and whose coefficients have $\ell_1$-norm
polynomially bounded in $n$ and $1/\delta$.  We also compute an estimate
$\widehat\alpha_k$ of each $\alpha_k$ for $\abs{k}\leq K$.  Recall that
$\Phi_{\mu,A}(t)=\E_{\sigma\sim\mu}\e^{\iu tA(\sigma)}$ is the
characteristic function of $A$ under $\mu$.  Expanding $P$ and interchanging
the two finite sums gives
\begin{align}
  \E_{\sigma\sim\mu}P(A(\sigma))
  &=\sum_{\sigma\in\Omega}\mu(\sigma)
      \sum_{\abs k\le K}\alpha_k\e^{\iu t_kA(\sigma)}
      \notag\\
  &=\sum_{\abs k\le K}\alpha_k
      \sum_{\sigma\in\Omega}\mu(\sigma)\e^{\iu t_kA(\sigma)}
      \notag\\
  &=\sum_{\abs k\le K}\alpha_k\Phi_{\mu,A}(t_k).
  \label{eq:tech-overview-fourier-expectation}
\end{align}
The required characteristic-function values are themselves ratios of
partition functions:
\begin{equation}
  \Phi_{\mu_{G,\lambda},A}(t)
  =
  \frac{Z_G((\lambda_v\e^{\iu ta_v})_{v\in V})}
       {Z_G(\lambda)}.
  \label{eq:tech-overview-characteristic-ratio}
\end{equation}
For the Ising model, the same identity holds with the common edge activity
$\beta$ fixed in the numerator and denominator.  The error in estimating the
expectation in~\eqref{eq:tech-overview-normalized-tv} comes from four sources:
\begin{enumerate}
  \item The $O(\delta n)$ additive error in the estimate $\widehat c$ of $c$.
  \item The additive error between $F(x)$ and its trigonometric approximation $P(x)$ for all possible $x$, which is guaranteed by the constructive Jackson approximation.
  \item The additive error in estimating the coefficients $\alpha_k$ by $\widehat\alpha_k$, which is also guaranteed by the constructive Jackson approximation.
  \item The additive error in estimating the characteristic-function values.\label{error-4}
\end{enumerate}

To convert this additive guarantee into a relative one, in each of
the parameter regimes in our theorems, our parameter-separation bounds show
that
\[
  \TV(\mu,\nu)\ge\kappa s
\]
for a positive constant $\kappa$ depending only on the fixed model
parameters.  For the hard-core model one may take $\kappa=b/5000$; for the
Ising model, uniform lower bounds on conditional spin probabilities give
such a constant depending on $\Delta,\beta,b,r$.  Hence the normalized
target
\[
  q=\frac{\TV(\mu,\nu)}s
\]
is at least $\kappa$, so additive accuracy $O(\varepsilon\kappa)$ suffices
for an $\varepsilon$-relative approximation.  Identical inputs are detected
exactly and mapped to zero; otherwise we return
$\widehat d=s\max\{0,\widehat q\}$.  It remains only to show how to implement
the characteristic-function oracle in \cref{error-4}.

\paragraph{Regimes with a global complex oracle.}
To implement the characteristic-function oracle, we need to deterministically
approximate the partition function at complex external fields.
In the strict Shearer regime for the hard-core model, the known complex
partition-function algorithm in \cite{harveySrivastavaVondrak2018independence}
applies to every $z\in\C^V$ satisfying $\abs{z_v}\le r$ for all $v\in V$, where
$r=(1-\eta)\lambda_{\mathrm{s}}(\Delta)$.
$\lambda_{\mathrm{s}}(\Delta)=\frac{(\Delta-1)^{\Delta-1}}{\Delta^\Delta}$ is the Shearer threshold.
Here $\eta>0$ is fixed, so the running-time dependence on the distance from
the boundary is absorbed into the model-parameter dependence.
Therefore, applying the algorithm directly to the complex vector gives a
deterministic characteristic-function oracle.
For the ferromagnetic Ising model, the multivariate Lee--Yang theorem gives
zero-freeness whenever all complex activities have modulus at most a fixed
$r<1$.  The coefficient method of
\cite{liuSinclairSrivastava2017ising} then supplies the same type of oracle;
\cref{sec:lee-yang-regime} gives the details.

\paragraph{From a local complex oracle to arbitrary phases.}
The main difficulty for the hard-core uniqueness regime is that the global
polydisc oracle is restricted to moduli strictly below
$\lambda_{\mathrm s}(\Delta)$, whereas
\[
  \frac{\lambda_{\mathrm s}(\Delta)}
       {\lambda_{\mathrm c}(\Delta)}
  =\left(\frac{\Delta-2}{\Delta}\right)^\Delta<1.
\]
It therefore does not in general cover the whole interval
$[b,(1-\eta)\lambda_{\mathrm c}(\Delta)]$.  In fact, Bez{\'a}kov{\'a} et
al.~\cite{bezakova2018inapproximability} showed that, for some complex
activities with modulus between $\lambda_{\mathrm{s}}(\Delta)$ and
$\lambda_{\mathrm{c}}(\Delta)$, approximating the hard-core partition function
is $\#\mathrm{P}$-hard.

We first obtain a
\emph{local} complex oracle: there is a constant $\theta_0>0$ such that, if
\[
  z_v=\lambda_v\e^{\iu\theta_v},
  \qquad
  \lambda_v\in[b,(1-\eta)\lambda_{\mathrm c}(\Delta)],
  \qquad
  \abs{\theta_v}\le\theta_0,
\]
then the partition function is zero-free throughout this region and can be
deterministically approximated to any prescribed relative error, uniformly
over induced subgraphs.  To prove this, we extend a result of Liu et
al.~\cite{liuSinclairSrivastava2019correlation} to nonuniform external fields.
We compute occupation ratios on the pruned Weitz self-avoiding-walk
tree~\cite{weitz2006counting}.  For a tree vertex $x$, $\pi(x)$ is the
endpoint in $G$ of the walk represented by $x$.  Let $Z_x^{(s)}(z)$ be the
partition function of the subtree rooted at $x$, with inherited boundary
conditions and $x$ fixed to be occupied ($s=1$) or unoccupied ($s=0$).
The occupation ratio and its update map are
\[
  R_x:=\frac{Z_x^{(1)}(z)}{Z_x^{(0)}(z)},
  \qquad F_{z,d}(\mathbf R):=z\prod_{j=1}^d(1+R_j)^{-1}.
\]
If $x$ is unpinned with children $x_1,\ldots,x_d$, then
$R_x=F_{z_{\pi(x)},d}(R_{x_1},\ldots,R_{x_d})$.

To control how errors propagate, set $y_x=\varphi(R_x)$, where
$\varphi(t)=2\operatorname{arcsinh}\sqrt t$ and
$\varphi^{-1}(y)=\sinh^2(y/2)$.  The update becomes
\[
  F_{z,d}^{\varphi}(\mathbf y)
  :=\varphi\!\left(
    F_{z,d}(\varphi^{-1}(y_1),\ldots,\varphi^{-1}(y_d))
  \right).
\]
Put $r=(1-\eta)\lambda_{\mathrm c}(\Delta)$ and
$\mathcal I=[0,\varphi(r)]$.  By the real contraction of Li et
al.~\cite{li2013correlation} and compactness, there is a uniform $\eta_0>0$
such that
\[
  \abs{F_{a,d}^{\varphi}(\mathbf y)-F_{a,d}^{\varphi}(\mathbf y')}
  \le(1-\eta_0)\norm{\mathbf y-\mathbf y'}_\infty
\]
for $a\in[b,r]$, $\mathbf y,\mathbf y'\in\mathcal I^d$, and $d\le\Delta-1$.
Thus, for a fixed activity, each non-root update reduces the largest error
in its child values.

Following Liu et al.~\cite{liuSinclairSrivastava2019correlation},
analyticity and compactness give a common complex neighborhood of $[b,r]$
where contraction holds even when activities differ between vertices.
For sufficiently small phases, the root perturbation bound gives
$\abs{1+R_{\mathrm{root}}}\ge1/2$ uniformly over induced subgraphs.
Induction using the deletion identity then proves zero-freeness.
Truncating the tree and reconstructing the partition function from these
ratios yields deterministic relative approximation.

The characteristic-function queries, however, may contain arbitrary
phases.  Reduce a given phase vector $\theta$ coordinatewise modulo $2\pi$
to $\overline\theta\in[-\pi,\pi]^V$.  This reduction leaves the
characteristic function unchanged.  Set
\[
  z_v=\lambda_v\e^{\iu\overline\theta_v}
  \qquad(v\in V),
\]
and define
\[
  W=\set{v\in V:\abs{\overline\theta_v}\ge\theta_0/2}.
\]
Our phase-damping lemma (\cref{lem:phase-damping}) gives a constant
$\gamma>0$, depending only on the fixed model parameters, such that
\begin{equation}
  \abs{
    \E_{\sigma\sim\mu_{G,\lambda}}
      \exp\left(\iu\sum_{v\in V}
        \overline\theta_v\one_{\{\sigma_v=+1\}}\right)
  }
  \le\e^{-\gamma\abs W}.
  \label{eq:tech-overview-phase-damping}
\end{equation}
The intuition is that many large phases force cancellation.  We select a
constant fraction of the vertices in $W$ whose neighborhoods do not
interact.  After conditioning on all spins outside these neighborhoods, the
remaining blocks are independent.  In each block, two configurations each
have probability at least a fixed positive constant.  Their complex
contributions point in directions separated by at least a fixed positive
angle, so they partially cancel.  Repeating this over
all selected blocks gives the exponential decay in
\eqref{eq:tech-overview-phase-damping}.

To approximate the characteristic function to additive error $\delta$, we
consider two cases.  If $\abs W>\gamma^{-1}\log(8/\delta)$, then
\eqref{eq:tech-overview-characteristic-ratio} and
\eqref{eq:tech-overview-phase-damping} show that the numerator satisfies
\[
  \abs{Z_G(z)}
  <\frac{\delta}{8}Z_G(\lambda).
\]
Hence the characteristic function is close to zero, so we simply return zero.
Otherwise, $\abs W=O(\log(1/\delta))$, and we approximate the numerator
$Z_G(z)$ using the following decomposition.  For each
$J\in\cI(G[W])$, set $H_J=G-(W\cup N_G(J))$.  Then
\begin{equation}
  Z_G(z)
  =\sum_{J\in\cI(G[W])}z^J
     Z_{H_J}(z|_{V(H_J)}).
  \label{eq:tech-overview-large-phase-decomposition}
\end{equation}
Here $\cI(G[W])$ is the family of independent subsets of $W$,
$N_G(J)$ is the open neighborhood of $J$, and
$z^J=\prod_{v\in J}z_v$.  All fields in the residual partition functions
have phase magnitude less than $\theta_0/2$, so the local oracle applies to
every term.  Cancellation does not amplify the accumulated error, because
positivity at the corresponding real activities gives
\[
  \sum_{J\in\cI(G[W])}
    \abs{z^JZ_{H_J}(z|_{V(H_J)})}
  \le Z_G(\lambda).
\]
There are at most
\[
  2^{\abs W}
  \le(8/\delta)^{(\log 2)/\gamma}
  \le\poly_{\Delta,\eta,b}(1/\delta)
\]
terms.  Relative approximation of the residual terms yields a numerator error
of $O(\delta)Z_G(\lambda)$.  Dividing by a relative approximation to the
positive-real denominator therefore yields the required additive
all-frequency characteristic-function oracle.  This
damping-and-enumeration step is what allows us to use only local complex
approximation while still reaching the full constant-gap uniqueness regime,
proving \cref{thm:uniqueness-main}.

\section{Preliminaries}
\label{sec:preliminaries}

Let $G=(V,E)$ be a finite simple graph, and write $n=\abs V$.  The family of
independent sets of $G$ is denoted by $\cI(G)$.  For $U,S\subseteq V$, let
$G[U]$ be the induced subgraph and write $G-S=G[V\setminus S]$ (in
particular, $G-u=G-\{u\}$).  We write $N_G(u)$ for the open neighborhood of
$u$, $N_G[u]=N_G(u)\cup\{u\}$, and
$N_G(S)=\bigcup_{u\in S}N_G(u)$.  The edge boundary of $S$ is
\[
  \delta_G(S)
  =\set{\{u,v\}\in E:\abs{\{u,v\}\cap S}=1}.
\]
For a complex vector $z=(z_v)_{v\in V}\in\C^V$ and $S\subseteq V$, write
$z^S=\prod_{v\in S}z_v$.  The \emph{independence polynomial} is
\begin{equation*}
  Z_G(z)
  =\sum_{I\in\cI(G)}z^I.
\end{equation*}

For two probability distributions $\mu$ and $\nu$ on the same finite space
$\Omega$, their \emph{total variation distance} is
\begin{equation*}
  \TV(\mu,\nu)
  =\frac12\sum_{x\in\Omega}\abs{\mu(x)-\nu(x)}
  =\sup_{\mathcal A\subseteq\Omega}
    \abs{\mu(\mathcal A)-\nu(\mathcal A)}.
\end{equation*}

For a probability distribution $\mu$ on $\Omega$ and a function
$A:\Omega\to\R$, the \emph{characteristic function} of $A$ under $\mu$ is
\[
  \Phi_{\mu,A}(t)
  =\E_{X\sim\mu}\e^{\iu tA(X)},
  \qquad t\in\R.
\]

For $D\subseteq\R$ and $\ell\geq0$, a function $f:D\to\C$ is
\emph{$\ell$-Lipschitz} if
\[
  \abs{f(x)-f(y)}\leq\ell\abs{x-y}
  \qquad(x,y\in D).
\]

Throughout, we use the unit-cost real/complex arithmetic-oracle model:
arithmetic and elementary-function evaluations have unit cost, and oracle
calls are charged their stated running times.  We do not consider bit
complexity.

\section{The deterministic framework}
\label{sec:common-framework}

In this section, we isolate the part of the argument shared by the hard-core
and Ising models.  For the empty graph, both Gibbs distributions are
concentrated on the unique empty configuration, so the algorithm returns
zero in constant time.  Henceforth in this section, assume $n=\abs V\ge1$.
Throughout
\cref{sec:fourier-expectation,sec:normalization,sec:uniform-separation,%
sec:oracle-reduction}, fix a graph $G=(V,E)$ and,
for the Ising model, also fix the edge activity $\beta$.  Write $\Omega$ for
the common support of the two measures: the feasible configurations for the
hard-core model and $\{-1,+1\}^V$ for the Ising model.  According to the model
under consideration, write
\[
  w_\lambda(\sigma),\qquad Z(\lambda),\qquad
  \pi_\lambda(\sigma)=\frac{w_\lambda(\sigma)}{Z(\lambda)}
\]
for $w_\lambda,Z_G,\mu_{G,\lambda}$ or for
$w_{\beta,\lambda},Z_G(\beta,\cdot),\mu_{G,\beta,\lambda}$,
respectively.  In either case, changing only the external fields gives
\begin{equation}
  \frac{w_{\lambda'}(\sigma)}{w_\lambda(\sigma)}
  =\prod_{v:\,\sigma_v=+1}\frac{\lambda'_v}{\lambda_v}.
  \label{eq:common-weight-ratio}
\end{equation}
No additional representation of a configuration will be needed.

\subsection{A deterministic Fourier expectation oracle}
\label{sec:fourier-expectation}

\begin{lemma}
\label{lem:quotient-estimate}
Let $y>0$, $x\in\C$, and $\abs{x}\le y$.  If
$\varepsilon\in(0,1)$ and
\[
  \abs{\widehat x-x}\le\varepsilon y,
  \qquad
  \abs{\widehat y-y}\le\varepsilon y,
\]
then
\[
  \abs{\frac{\widehat x}{\widehat y}-\frac{x}{y}}
  \le\frac{2\varepsilon}{1-\varepsilon}.
\]
\end{lemma}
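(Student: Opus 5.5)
The plan is a direct algebraic estimate; no earlier result is needed. First, since $\abs{\widehat y-y}\le\varepsilon y$ with $\varepsilon<1$, the reverse triangle inequality gives
\[
  \abs{\widehat y}\ge y-\abs{\widehat y-y}\ge(1-\varepsilon)y>0.
\]
In particular $\widehat y\ne0$, so the quotient $\widehat x/\widehat y$ is well defined.

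Next, put the difference over the common denominator $\widehat y y$. Adding and subtracting $xy$ in the numerator gives
\[
  \frac{\widehat x}{\widehat y}-\frac{x}{y}
  =\frac{\widehat x\,y-x\,\widehat y}{\widehat y\,y}
  =\frac{(\widehat x-x)\,y+x\,(y-\widehat y)}{\widehat y\,y}.
\]

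Then bound the numerator using the triangle inequality, the two hypotheses, and $\abs{x}\le y$:
\[
  \abs{(\widehat x-x)\,y+x\,(y-\widehat y)}
  \le\varepsilon y\cdot y+y\cdot\varepsilon y=2\varepsilon y^2.
\]
The denominator is at least $(1-\varepsilon)y\cdot y$ by the first step. Dividing the two bounds yields $2\varepsilon/(1-\varepsilon)$, which is the claimed estimate.

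There is no real obstacle here. The only point needing care is that $\widehat y$ is never zero, and this is exactly where $\varepsilon<1$ and $y>0$ are used. Note also that $x$ and $\widehat x$ may be complex, but only moduli and the triangle inequality enter the argument.
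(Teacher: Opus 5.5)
Your proof is correct and is essentially the paper's argument. The paper also uses $\abs{\widehat y}\ge(1-\varepsilon)y$ and the split $\frac{\widehat x-x}{\widehat y}+\frac{x(y-\widehat y)}{y\widehat y}$, which is your common-denominator numerator written termwise; your explicit check that $\widehat y\neq0$ is a small addition the paper leaves implicit.
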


\begin{proof}
The lower bound $\abs{\widehat y}\ge(1-\varepsilon)y$ and the triangle inequality give
\[
  \abs{\frac{\widehat x}{\widehat y}-\frac{x}{y}}
  \le
  \frac{\abs{\widehat x-x}}{\abs{\widehat y}}
  +\frac{\abs x\,\abs{\widehat y-y}}
        {y\abs{\widehat y}}
  \le\frac{2\varepsilon}{1-\varepsilon}.
\]
\end{proof}

For $\lambda\in\R_{>0}^V$ and real coefficients $(a_v)_{v\in V}$, define
\[
  A(\sigma)=\sum_{v:\,\sigma_v=+1}a_v.
\]
For every $t\in\R$, expanding the partition function gives the common
characteristic-function identity
\begin{equation}
  \Phi_{\pi_\lambda,A}(t)
  =
  \frac{Z((\lambda_v\e^{\iu ta_v})_{v\in V})}
       {Z(\lambda)}.
  \label{eq:characteristic-partition-ratio}
\end{equation}

Following the Jackson-kernel construction in
\cite[Chapter~7, Section~2]{devoreLorentz1993constructive}, we give and
analyze a deterministic algorithm for computing the coefficients needed in
our application.

\begin{lemma}[Constructive Jackson approximation]
\label{lem:constructive-jackson}
Let $n\geq1$, let $\varepsilon\in(0,1)$, and let
$F:[-n,n]\to[0,2n]$ be $1$-Lipschitz.  Suppose that $F$ can be evaluated exactly in time $T_F(n)$.  Set
\[
  K=\left\lceil\frac{24n}{\varepsilon}\right\rceil,
  \qquad
  t_k=\frac{\pi k}{2n},
  \qquad \abs{k}\leq K.
\]
There are coefficients $\alpha_k$ for which
\[
  P(x)=\sum_{\abs{k}\leq K}\alpha_k\e^{\iu t_kx}
\]
is real-valued and satisfies
\begin{equation}
  \sup_{x\in[-n,n]}\abs{P(x)-F(x)}\leq\frac{\varepsilon}{4},
  \qquad
  \sum_{\abs{k}\leq K}\abs{\alpha_k}\leq6nK.
  \label{eq:jackson-guarantees}
\end{equation}
Moreover, one can deterministically compute coefficients
$\widehat\alpha_k$ such that
\begin{equation}
  \sum_{\abs{k}\leq K}
    \abs{\widehat\alpha_k-\alpha_k}
  \leq\frac{\varepsilon}{8}.
  \label{eq:jackson-coefficient-error}
\end{equation}
This takes
\[
  O\left(
    \frac{n^3}{\varepsilon^3}
    \left[T_F(n)+\log\!\left(\frac{2n}{\varepsilon}\right)\right]
  \right)
\]
time.  In particular, the running time is
$\poly(n,1/\varepsilon)$ whenever $T_F(n)=\poly(n)$.
\end{lemma}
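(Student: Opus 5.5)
We construct $P$ by periodizing $F$ and convolving the periodic function with a Jackson kernel. The frequencies $t_k=\pi k/(2n)$ correspond to period $4n$. Extend $F$ from $[-n,n]$ to an even-reflected, $4n$-periodic function $\widetilde F$. Concretely, we set $\widetilde F(x)=F(x)$ on $[-n,n]$ and $\widetilde F(x)=F(2n-x)$ on $[n,3n]$, which is consistent at $x=\pm n$. We then extend periodically. Since $\widetilde F$ is continuous and piecewise $1$-Lipschitz, it is globally $1$-Lipschitz with values in $[0,2n]$. After the substitution $u=\pi x/(2n)$, we obtain a $2\pi$-periodic function $f(u)=\widetilde F(2nu/\pi)$ that is $(2n/\pi)$-Lipschitz.

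Let $J_m$ be the normalized Jackson kernel of degree $K$, namely a nonnegative even trigonometric polynomial of the form $c_m(\sin(mu/2)/\sin(u/2))^4$ with $2m-2\le K$ and $\frac1{2\pi}\int J_m=1$. Define $p=f*J_m$, that is, $p(u)=\frac1{2\pi}\int_{-\pi}^{\pi} f(u-v)J_m(v)\,dv$. The standard estimate (DeVore--Lorentz, Ch.~7 \S2) gives $\frac1{2\pi}\int\abs{v}J_m(v)\,dv\le C/m$ with an explicit constant, e.g.\ $C\le 3$ after a routine computation. This yields
\[
\abs{p(u)-f(u)}\le\frac{2n}{\pi}\cdot\frac{C}{m}\le\frac{6n}{\pi m}.
\]
This is at most $\varepsilon/4$ once $m\gtrsim 8n/\varepsilon$, which the choice $K=\lceil 24n/\varepsilon\rceil$ permits. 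Because $f$ and $J_m$ are real and $J_m$ is even, $p$ is real. Writing $J_m(v)=\sum_{\abs k\le K}\rho_k e^{\iu kv}$, we get $\alpha_k=\rho_k\widehat f(k)$, where $\widehat f(k)$ is the $k$th Fourier coefficient of $f$, and $P(x)=p(\pi x/(2n))$. The coefficient bound follows because $\abs{\rho_k}\le\rho_0=1$ by nonnegativity of the kernel and $\abs{\widehat f(k)}\le\norm f_\infty\le 2n$. Hence $\sum\abs{\alpha_k}\le(2K+1)2n\le 6nK$.

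For the computation, the $\rho_k$ are explicit: they are the coefficients of the fourth power of a Fejér-type polynomial, obtained by repeated discrete convolution of the coefficient sequence $(m-\abs j)_{\abs j<m}$ and normalized. This takes $O(K^2)$ exact arithmetic operations. The remaining task is to approximate $\widehat f(k)=\frac1{2\pi}\int f(u)e^{-\iu ku}\,du$. We use a Riemann sum with $N$ equally spaced nodes, which requires $N$ evaluations of $F$ at reflected points. For a Lipschitz integrand, the error is at most $\frac{\pi}{N}\cdot \mathrm{Lip}(fe^{-\iu k\cdot})$, where $\mathrm{Lip}(fe^{-\iu k\cdot})\le 2n/\pi+2nK$. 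Since $\abs{\rho_k}\le1$, the total coefficient error is at most $(2K+1)\cdot O(nK/N)$. Choosing $N=\Theta(nK^2/\varepsilon)=\Theta(n^3/\varepsilon^3)$ makes this at most $\varepsilon/8$. All $\widehat f(k)$ can be computed from the same $N$ samples, either by direct summation or by an FFT. The FFT costs $O(N\log N)$ and produces the $\log(2n/\varepsilon)$ factor, giving total time $O(\frac{n^3}{\varepsilon^3}[T_F(n)+\log(2n/\varepsilon)])$.

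The main obstacle is constant bookkeeping, namely matching the kernel's first-moment constant and the quadrature error to the stated $\varepsilon/4$, $6nK$, and $\varepsilon/8$ bounds together with the cubic sample count. I would first check these constants carefully against the explicit Jackson first-moment bound. If the budget is too tight, I would use the slack from the $24$ in $K$ or from a refined quadrature bound. Since sampling $N$ points already matches the stated running time, FFT or a direct sum affects only the lower-order term.
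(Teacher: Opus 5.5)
Your proposal is correct and follows essentially the paper's route: periodize to period $4n$, convolve with the fourth-power Jackson kernel, bound the error by the kernel's first absolute moment, bound $\sum\abs{\alpha_k}$ using kernel multipliers in $[0,1]$ and $\abs{\widehat f(k)}\le 2n$, and compute the coefficients from $N=\Theta(n^3/\varepsilon^3)$ quadrature samples plus an FFT. The paper periodizes by linear interpolation to $(F(-n)+F(n))/2$ rather than by reflection, and it uses the midpoint rule; its crude moment bound $\int\abs{y}\mathcal J_q\le 3/(4q)$ corresponds exactly to the constant $C=3\pi/2$ you need, so no extra slack beyond $K=\lceil 24n/\varepsilon\rceil$ is required. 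One correction to your last remark: direct summation costs $O(KN)=\Theta(n^4/\varepsilon^4)$, which is not lower order, so the FFT is genuinely needed to reach the stated bound when $T_F(n)$ is small.
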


The proof is deferred to Appendix~\ref{app:constructive-jackson}.

The next lemma is the deterministic replacement for sampling.

\begin{lemma}
\label{lem:fourier-expectation}
Let $n=\abs V\ge1$, let $\lambda\in\R_{>0}^V$, write
$\pi=\pi_\lambda$, and let
$A(\sigma)=\sum_{v:\,\sigma_v=+1}a_v$ with $\abs{a_v}\le1$.  Suppose that, for every
$t\in\R$ and $\varepsilon'\in(0,1)$, a deterministic
characteristic-function oracle returns $\widehat\Phi_{\pi,A}(t)$ satisfying
\[
  \abs{\widehat\Phi_{\pi,A}(t)-\Phi_{\pi,A}(t)}
  \le\varepsilon'
\]
in time $T_{\widehat\phi}(n,\varepsilon')$.  Let $F:[-n,n]\to[0,2n]$ be
$1$-Lipschitz and exactly evaluable in time $T_F(n)$.  Then, for every
$\varepsilon\in(0,1)$, one can deterministically compute
$\widehat M$ such that
\[
  \abs{\widehat M-\E_{\sigma\sim\pi}F(A(\sigma))}
  \le\varepsilon
\]
in time
\[
  \widetilde O\left(
    \frac{n^3}{\varepsilon^3}T_F(n)
    +\frac{n}{\varepsilon}T_{\widehat\phi}\left(
      n,\frac{\varepsilon}{96n\lceil24n/\varepsilon\rceil+2\varepsilon}
    \right)
  \right).
\]
Here $\widetilde O$ hides logarithmic factors in $n$ and $1/\varepsilon$.
\end{lemma}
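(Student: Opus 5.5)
The plan is to combine \cref{lem:constructive-jackson} with the Fourier identity \eqref{eq:tech-overview-fourier-expectation}. First invoke \cref{lem:constructive-jackson} with the given $F$ and accuracy $\varepsilon$. This gives $K=\lceil 24n/\varepsilon\rceil$, frequencies $t_k=\pi k/(2n)$, exact coefficients $\alpha_k$ with $\sup_{[-n,n]}\abs{P-F}\le\varepsilon/4$ and $\sum_k\abs{\alpha_k}\le 6nK$, and computed coefficients $\widehat\alpha_k$ with $\sum_k\abs{\widehat\alpha_k-\alpha_k}\le\varepsilon/8$. This step costs $O\bigl((n^3/\varepsilon^3)[T_F(n)+\log(2n/\varepsilon)]\bigr)$. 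Next query the characteristic-function oracle at each of the $2K+1=O(n/\varepsilon)$ points $t_k$, with accuracy $\varepsilon'=\varepsilon/(96nK+2\varepsilon)$, and output
\[
  \widehat M=\operatorname{Re}\sum_{\abs k\le K}\widehat\alpha_k\widehat\Phi_{\pi,A}(t_k).
\]
Taking the real part is harmless: $\E_\pi F(A)$ is real, so projecting onto $\R$ cannot increase the distance to it.

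For the error analysis, note that $A(\sigma)\in[-n,n]$ because $\abs{a_v}\le1$ and at most $n$ coordinates contribute. Hence $P(A(\sigma))$ is defined and $\abs{\E_\pi P(A)-\E_\pi F(A)}\le\varepsilon/4$. By \eqref{eq:tech-overview-fourier-expectation}, $\E_\pi P(A)=\sum_k\alpha_k\Phi_{\pi,A}(t_k)$. Then decompose the remaining error as
\[
  \sum_k\widehat\alpha_k\widehat\Phi_k-\sum_k\alpha_k\Phi_k
  =\sum_k(\widehat\alpha_k-\alpha_k)\widehat\Phi_k+\sum_k\alpha_k(\widehat\Phi_k-\Phi_k).
\]
The first sum is bounded using $\abs{\widehat\Phi_k}\le\abs{\Phi_k}+\varepsilon'\le 1+\varepsilon'$, which gives at most $(\varepsilon/8)(1+\varepsilon')$. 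The second sum is at most $6nK\varepsilon'$.

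The main work is checking that the chosen $\varepsilon'$ closes the budget. With $\varepsilon'=\varepsilon/(96nK+2\varepsilon)$ we have $6nK\varepsilon'\le\varepsilon/16$ and $(\varepsilon/8)\varepsilon'\le\varepsilon/16$ (the latter since $\varepsilon'<1$ and $\varepsilon<1$, crudely). The total is then at most $\varepsilon/4+\varepsilon/8+\varepsilon/16+\varepsilon/16=\varepsilon/2\le\varepsilon$. The denominator term $2\varepsilon$ only guarantees $\varepsilon'\in(0,1)$, as the oracle requires; any similar slack would do.

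The running time is the Jackson computation plus $(2K+1)$ oracle calls and $O(K)$ arithmetic operations. This is $\widetilde O\bigl((n^3/\varepsilon^3)T_F(n)+(n/\varepsilon)T_{\widehat\phi}(n,\varepsilon')\bigr)$, where the $\log(2n/\varepsilon)$ term is absorbed into $\widetilde O$. I expect no real obstacle. The only delicate points are confirming that $A$ stays in the domain $[-n,n]$ of $F$ and tracking the constants so that the particular $\varepsilon'$ in the statement suffices.
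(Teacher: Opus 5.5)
Your proposal is correct and is essentially the paper's proof. The only cosmetic difference is how you split the product error. You use $(\widehat\alpha_k-\alpha_k)\widehat\Phi_{\pi,A}(t_k)+\alpha_k(\widehat\Phi_{\pi,A}(t_k)-\Phi_{\pi,A}(t_k))$. The paper instead pairs $\widehat\alpha_k$ with the oracle error and bounds $\sum_k\abs{\widehat\alpha_k}\le 6nK+\varepsilon/8$; that is where the $2\varepsilon$ in the denominator of $\varepsilon'$ actually comes from, not from needing $\varepsilon'<1$. One small fix: $(\varepsilon/8)\varepsilon'\le\varepsilon/16$ needs $\varepsilon'\le 1/2$, not just $\varepsilon'<1$. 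This holds because $96nK+2\varepsilon>2\varepsilon$, and even the crude bound $(\varepsilon/8)\varepsilon'\le\varepsilon/8$ keeps the total below $\varepsilon$.
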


\begin{proof}
Since $\abs{A(\sigma)}\le n$, we have $A(\sigma)\in[-n,n]$.  Apply
\cref{lem:constructive-jackson} with accuracy
$\varepsilon$, and let $K$, $t_k$, $\alpha_k$, and $\widehat\alpha_k$ be as in
that lemma.  By \eqref{eq:jackson-guarantees} and
\eqref{eq:jackson-coefficient-error},
\[
  S_{\widehat\alpha}
  =\sum_{\abs k\le K}\abs{\widehat\alpha_k}
  \le 6nK+\varepsilon/8.
\]
Let $\varepsilon'
  =\varepsilon/[16(6nK+\varepsilon/8)]$.
For every $\abs k\le K$, invoke the characteristic-function oracle with
inputs $t_k$ and $\varepsilon'$.  Define
\[
  \widehat M
  =\Re\sum_{\abs k\le K}
       \widehat\alpha_k\widehat\Phi_{\pi,A}(t_k).
\]

We have
\begin{align*}
  \sum_{\abs k\le K} \alpha_k\Phi_{\pi,A}(t_k)
  &=\sum_{\abs k\le K} \alpha_k\sum_{\sigma\in\Omega}\pi(\sigma)
      \e^{\iu t_k A(\sigma)}
   =\sum_{\sigma\in\Omega}\pi(\sigma)
      \sum_{\abs k\le K}\alpha_k\e^{\iu t_k A(\sigma)}\\
  &=\sum_{\sigma\in\Omega}\pi(\sigma)P(A(\sigma))
   =\E_{\sigma\sim\pi}P(A(\sigma)).
\end{align*}
Since the target is real, taking the real part cannot increase the error.
Therefore, we have
\begin{align*}
  &\abs{\widehat M-\E_{\sigma\sim\pi}F(A(\sigma))}\\
  \leq& \abs{\sum_{\abs k\le K}
                \widehat\alpha_k \widehat\Phi_{\pi,A}(t_k)
              -\sum_{\abs k\le K}
                \widehat\alpha_k \Phi_{\pi,A}(t_k)}\\
  +&\abs{\sum_{\abs k\le K}
                \widehat\alpha_k \Phi_{\pi,A}(t_k)
              -\sum_{\abs k\le K} \alpha_k\Phi_{\pi,A}(t_k)}
     +\abs{\sum_{\abs k\le K} \alpha_k\Phi_{\pi,A}(t_k)
        -\E_{\sigma\sim\pi}F(A(\sigma))}\\
  \leq& \sum_{\abs k\le K}
        \abs{\widehat\alpha_k}
        \abs{\widehat\Phi_{\pi,A}(t_k)-\Phi_{\pi,A}(t_k)}
     +\sum_{\abs k\le K}
        \abs{\widehat\alpha_k-\alpha_k}
        \abs{\Phi_{\pi,A}(t_k)}
     +\abs{\E_{\sigma\sim\pi}\{P(A(\sigma))-F(A(\sigma))\}}\\
  \leq& \varepsilon' S_{\widehat\alpha}
     +\frac{\varepsilon}{8}
     +\frac{\varepsilon}{4}\\
  \leq& \frac{\varepsilon}{16}
     +\frac{\varepsilon}{8}
     +\frac{\varepsilon}{4}<\varepsilon,
\end{align*}
where we used the fact that $\abs{\Phi_{\pi,A}(t_k)}\leq 1$ for all $k$.
By \cref{lem:constructive-jackson}, computing the coefficients takes
$\widetilde O(n^3T_F(n)/\varepsilon^3)$ time.  Since
$K=\lceil24n/\varepsilon\rceil$, there are
$2\lceil24n/\varepsilon\rceil+1=O(n/\varepsilon)$ oracle calls, each taking
time $T_{\widehat\phi}(n,\varepsilon')$, and the remaining computation takes
$O(n/\varepsilon)$ time.  The total running time is therefore
\[
  \widetilde O\left(
    \frac{n^3}{\varepsilon^3}T_F(n)
    +\frac{n}{\varepsilon}T_{\widehat\phi}\left(
      n,\frac{\varepsilon}{96n\lceil24n/\varepsilon\rceil+2\varepsilon}
    \right)
  \right).
\]
\end{proof}

\subsection{Normalization and thermodynamic integration}
\label{sec:normalization}

Throughout this subsection, let $n=\abs V$, let $0<b<r<\infty$, and let
$\lambda^\mu,\lambda^\nu\in[b,r]^V$.  Define
$\mu=\pi_{\lambda^\mu}$ and $\nu=\pi_{\lambda^\nu}$.  If the two vectors
agree, then $\mu=\nu$ and
the algorithm returns $0$.  Hence assume they differ and define
\[
  s=\norm{\log\lambda^\nu-
                 \log\lambda^\mu}_\infty>0,
  \qquad
  a_v=\frac{\log\lambda_v^\nu-\log\lambda_v^\mu}{s},
  \qquad
  A(\sigma)=\sum_{v:\,\sigma_v=+1}a_v.
\]
Thus $\abs{a_v}\le1$, $A(\sigma)\in[-n,n]$, and
\begin{equation}
  \frac{w_{\lambda^\nu}(\sigma)}
       {w_{\lambda^\mu}(\sigma)}
  =\e^{sA(\sigma)}.
  \label{eq:weight-ratio}
\end{equation}
Let
\begin{equation}
  c=\frac{\log Z(\lambda^\nu)-\log Z(\lambda^\mu)}{s}.
  \label{eq:normalized-center}
\end{equation}
Because
\[
  \frac{Z(\lambda^\nu)}{Z(\lambda^\mu)}
  =\E_{\sigma\sim\mu}\e^{sA(\sigma)},
\]
$\e^{sc}$ lies between $\e^{s\min_{\sigma\in\Omega}A(\sigma)}$ and
$\e^{s\max_{\sigma\in\Omega}A(\sigma)}$.  Consequently,
\begin{equation}
  c\in
  [\min_{\sigma\in\Omega}A(\sigma),
   \max_{\sigma\in\Omega}A(\sigma)]
  \subseteq[-n,n].
  \label{eq:center-range}
\end{equation}
Directly subtracting two approximate log partition functions would require
precision proportional to $s$.  The following lemma avoids this problem by
giving a procedure for estimating $c$.

\begin{lemma}
\label{lem:thermodynamic-identity}
For $t\in[0,1]$, define the coordinatewise geometric path
\[
  \begin{aligned}
    \lambda_v(t)
    &= (\lambda_v^\mu)^{1-t}(\lambda_v^\nu)^t
      =\lambda_v^\mu\e^{tsa_v},\\
    \lambda(t)&=(\lambda_v(t))_{v\in V},
    &\pi_t&=\pi_{\lambda(t)}.
  \end{aligned}
\]
Then
\begin{equation}
  c=\int_0^1 \E_{\sigma\sim\pi_t}A(\sigma)\,\mathrm dt.
  \label{eq:thermodynamic-integral}
\end{equation}
Moreover, if $m(t)=\E_{\sigma\sim\pi_t}A(\sigma)$, then
\begin{align}\label{lip-of-m}
    m'(t)=s\operatorname{Var}_{\pi_t}(A),
  \qquad
  \abs{m'(t)}\le \log(r/b)n^2.
\end{align}

Suppose in addition that $G$ has maximum degree at most $\Delta$ and that,
for every $\lambda'\in[b,r]^V$, $v\in V$, and
$\varepsilon'\in(0,1)$, a deterministic marginal-probability oracle returns
$\widehat p_v(\lambda')$ satisfying
\[
  \abs{
    \widehat p_v(\lambda')
    -\Pr_{\sigma\sim\pi_{\lambda'}}(\sigma_v=+1)
  }
  \le\varepsilon'
\]
in time $T_{\widehat p}(n,\varepsilon')$, with constants depending only on the
fixed model parameters.  Then, for every
$\varepsilon\in(0,1)$, one can deterministically compute
$\widehat c\in[-n,n]$ such that
\[
  \abs{\widehat c-c}\le\varepsilon
\]
in time
\[
  O\left(\frac{n^3}{\varepsilon}
    T_{\widehat p}\left(n,\frac{\varepsilon}{8n}\right)\right),
\]
with the same parameter dependence.
\end{lemma}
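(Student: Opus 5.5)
The identity comes from differentiating the log partition function along the geometric path. Define $L(t)=\log Z(\lambda(t))$, where
\[
  Z(\lambda(t))=\sum_{\sigma\in\Omega}w_{\lambda^\mu}(\sigma)\e^{tsA(\sigma)}
\]
by \eqref{eq:weight-ratio} applied along the path; indeed $w_{\lambda(t)}(\sigma)/w_{\lambda^\mu}(\sigma)=\prod_{v:\sigma_v=+1}\e^{tsa_v}$ by \eqref{eq:common-weight-ratio}. This is a finite positive sum of exponentials, so $L$ is smooth. Differentiating gives
\[
  L'(t)=s\,\E_{\pi_t}A,\qquad L''(t)=s^2\operatorname{Var}_{\pi_t}(A).
\]
Hence $L(1)-L(0)=s\int_0^1 m(t)\,\mathrm dt$, and dividing by $s$ yields \eqref{eq:thermodynamic-integral}. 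The same computation gives $m'(t)=L''(t)/s=s\operatorname{Var}_{\pi_t}(A)$.

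For the bound on $m'$, note that $s\le\log(r/b)$ because both vectors lie in $[b,r]^V$. Since $A\in[-n,n]$, we have $\operatorname{Var}(A)\le \E A^2\le n^2$, so $\abs{m'(t)}\le\log(r/b)n^2$.

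For the algorithm, use the midpoint (or left-endpoint) rule with $N$ equally spaced points $t_j=(j-\tfrac12)/N$. By the Lipschitz bound, the quadrature error is at most $\sup\abs{m'}/(2N)\le\log(r/b)n^2/(2N)$. Choose $N=\lceil 2\log(r/b)n^2/\varepsilon\rceil=O(n^2/\varepsilon)$, where the constant depends on $b,r$; this makes the quadrature error at most $\varepsilon/4$. Each $\lambda(t_j)$ lies in $[b,r]^V$, because each coordinate is a geometric interpolation between two numbers in $[b,r]$. We may therefore call the marginal oracle for every $v$ with accuracy $\varepsilon/(8n)$ and set
\[
  \widehat m(t_j)=\sum_v a_v\widehat p_v(\lambda(t_j)).
\]
Since $\abs{a_v}\le1$ and $m(t)=\sum_v a_v\Pr_{\pi_t}(\sigma_v=+1)$, this gives $\abs{\widehat m(t_j)-m(t_j)}\le\varepsilon/8$.

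Average to obtain $\tilde c=\frac1N\sum_j\widehat m(t_j)$, whose total error is at most $\varepsilon/4+\varepsilon/8<\varepsilon$. Then let $\widehat c$ be the projection of $\tilde c$ onto $[-n,n]$. By \eqref{eq:center-range} the true $c$ lies in $[-n,n]$, and projection onto a convex set containing $c$ does not increase the distance to $c$.

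The running time is $N\cdot n$ oracle calls plus $O(Nn)$ arithmetic, i.e. $O\bigl((n^3/\varepsilon)T_{\widehat p}(n,\varepsilon/(8n))\bigr)$ with constants depending on $b,r$. The only delicate points are the smoothness/interchange justification, which is trivial since the sums are finite, and checking that the path stays inside $[b,r]^V$ so the oracle applies. Neither is a real obstacle; the main care is bookkeeping the constants so that $\log(r/b)$ is absorbed into the parameter dependence.
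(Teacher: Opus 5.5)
Your proposal is correct and follows essentially the same argument as the paper. Both differentiate $\log Z$ along the geometric path and bound $m'(t)=s\operatorname{Var}_{\pi_t}(A)\le\log(r/b)n^2$; both then apply a Riemann-sum quadrature with $O(n^2/\varepsilon)$ nodes, call the marginal oracle at accuracy $\varepsilon/(8n)$, and project onto $[-n,n]$. The differences are cosmetic: you get $m'$ from $L''/s$ rather than the quotient rule, and you use the midpoint rule instead of the left-endpoint rule.
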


\begin{proof}
Let $Z(t)=Z(\lambda(t))$.  Differentiating its finite sum gives
\[
  \frac{\mathrm d}{\mathrm dt}\log Z(t)
  =s\E_{\sigma\sim\pi_t}A(\sigma).
\]
Integration from $0$ to $1$, followed by division by $s$, proves
\eqref{eq:thermodynamic-integral}.  To see the second-derivative identity,
define
\[
  \pi_t(\sigma)
  =\frac{w_{\lambda^\mu}(\sigma)\e^{tsA(\sigma)}}{Z(t)},
  \qquad
  N(t)=\sum_{\sigma\in\Omega}
    A(\sigma)w_{\lambda^\mu}(\sigma)\e^{tsA(\sigma)}.
\]
Then $m(t)=N(t)/Z(t)$.  Since these are finite sums,
termwise differentiation gives
\[
  N'(t)=sZ(t)\E_{\sigma\sim\pi_t}A(\sigma)^2,
  \qquad
  Z'(t)=sZ(t)\E_{\sigma\sim\pi_t}A(\sigma).
\]
The quotient rule now yields
\[
  m'(t)
  =\frac{N'(t)Z(t)-N(t)Z'(t)}{Z(t)^2}
  =s\left(\E_{\sigma\sim\pi_t}A(\sigma)^2
       -\bigl(\E_{\sigma\sim\pi_t}A(\sigma)\bigr)^2\right)
  =s\operatorname{Var}_{\pi_t}(A).
\]
Since $\abs{A(\sigma)}\le n$ for every $\sigma\in\Omega$,
$\operatorname{Var}_{\pi_t}(A)\le n^2$.  Finally,
$s\le\log(r/b)$, which proves the stated derivative bound.

For the algorithmic assertion, let
\[
  L_m=\log(r/b)n^2,
  \qquad
  M
  =\max\set{1,\left\lceil\frac{2L_m}{\varepsilon}\right\rceil},
  \qquad
  t_j=\frac{j}{M},
\]
for $0\le j<M$.  Let $\varepsilon'=\varepsilon/(8n)$.  For
$0\le j<M$ and $v\in V$, use the marginal-probability oracle
with inputs $\lambda(t_j)$, $v$, and $\varepsilon'$ to obtain
$\widehat p_{j,v}$ satisfying
\[
  \abs{
    \widehat p_{j,v}
    -\Pr_{\sigma\sim\pi_{t_j}}(\sigma_v=+1)
  }
  \le\varepsilon'.
\]
Define
\[
  \widehat m_j=\sum_{v\in V}a_v\widehat p_{j,v},
  \qquad
  \widehat c_0
  =\frac1{M}
     \sum_{j=0}^{M-1}\widehat m_j.
\]
The geometric path lies in $[b,r]^V$, so every oracle call is valid.  Since
$m$ is $L_m$-Lipschitz by~\eqref{lip-of-m}, we have
\[
  \abs{
    \int_0^1m(t)\,\mathrm dt
    -\frac1{M}
       \sum_{j=0}^{M-1}m(t_j)
  }
  \le\frac{L_m}{2M}
  \le\frac\varepsilon4.
\]
Moreover, $\abs{a_v}\le1$ gives
\[
  \abs{\widehat m_j-m(t_j)}
  \le n\varepsilon'
  =\frac\varepsilon8.
\]
Consequently,
\[
  \abs{\widehat c_0-c}
  \le\frac\varepsilon4+\frac\varepsilon8
  =\frac{3\varepsilon}{8}
  <\varepsilon.
\]
Projecting
$\widehat c_0$ onto $[-n,n]$ gives $\widehat c$ and cannot increase the
error by \eqref{eq:center-range}.

The algorithm makes $nM$ marginal-oracle calls, where
\[
  nM
  \le n+\frac{2\log(r/b)n^3}{\varepsilon}
  =O(n^3/\varepsilon).
\]
Each call takes time $T_{\widehat p}(n,\varepsilon/(8n))$.
Constructing the field vectors and computing
the weighted sums take another $O(nM)$ arithmetic operations.  The total
running time is therefore
\[
  O\left(\frac{n^3}{\varepsilon}
    T_{\widehat p}\left(n,\frac{\varepsilon}{8n}\right)\right),
\]
with the same parameter dependence.
\end{proof}

For $s>0$, define
\[
  g_s(x)=\frac{(1-\e^{sx})_+}{s}.
\]

\begin{lemma}
\label{lem:normalized-tv}
The function $g_s$ is $1$-Lipschitz on $\R$ and satisfies
$0\le g_s(x)\le(-x)_+$.  Furthermore,
\begin{equation}
  \frac{\TV(\mu,\nu)}s
  =\E_{\sigma\sim\mu}g_s\bigl(A(\sigma)-c\bigr).
  \label{eq:normalized-tv}
\end{equation}
The random argument $A(\sigma)-c$ lies in $[-2n,2n]$, and the integrand lies in
$[0,2n]$.
\end{lemma}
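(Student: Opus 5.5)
The proof has three parts: the analytic properties of $g_s$, the identity \eqref{eq:normalized-tv}, and the range claims.

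\emph{Properties of $g_s$.} For $x\ge0$ we have $\e^{sx}\ge1$, so $g_s(x)=0$. For $x<0$, $g_s(x)=(1-\e^{sx})/s$, which is nonnegative. The elementary inequality $1-\e^{y}\le -y$ applied with $y=sx$ gives $g_s(x)\le -x=(-x)_+$. For the Lipschitz bound, note that $g_s$ is continuous and piecewise $C^1$: its derivative is $0$ on $(0,\infty)$ and $-\e^{sx}\in[-1,0)$ on $(-\infty,0)$. The derivative therefore has absolute value at most $1$ off the single point $0$. Integrating along the segment between any two points, possibly splitting at $0$, gives $\abs{g_s(x)-g_s(y)}\le\abs{x-y}$.

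\emph{The identity.} Both measures have the common support $\Omega$, so we start from the one-sided form of total variation, $\TV(\mu,\nu)=\sum_{\sigma}(\mu(\sigma)-\nu(\sigma))_+=\E_{\sigma\sim\mu}(1-\nu(\sigma)/\mu(\sigma))_+$. This holds because the positive and negative parts of $\mu-\nu$ have equal total mass, as both measures sum to one. By \eqref{eq:weight-ratio} and the definition \eqref{eq:normalized-center} of $c$,
\[
\frac{\nu(\sigma)}{\mu(\sigma)}=\frac{w_{\lambda^\nu}(\sigma)}{w_{\lambda^\mu}(\sigma)}\cdot\frac{Z(\lambda^\mu)}{Z(\lambda^\nu)}=\e^{sA(\sigma)}\e^{-sc}=\e^{s(A(\sigma)-c)}.
\]
Substituting this ratio and dividing by $s>0$ yields \eqref{eq:normalized-tv}.

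\emph{Ranges.} Since $\abs{a_v}\le1$, we have $A(\sigma)\in[-n,n]$, and \eqref{eq:center-range} gives $c\in[-n,n]$, so $A(\sigma)-c\in[-2n,2n]$. Combining this with $0\le g_s(x)\le(-x)_+\le 2n$ on that interval shows that the integrand lies in $[0,2n]$.

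None of these steps is a real obstacle. The only point needing care is the non-differentiability of $g_s$ at $0$ in the Lipschitz argument, which the piecewise integration handles.
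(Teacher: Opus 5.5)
Your proposal is correct and takes essentially the same route as the paper. Both arguments use the one-sided form of total variation, the likelihood ratio $\nu(\sigma)/\mu(\sigma)=\e^{s(A(\sigma)-c)}$ from \eqref{eq:weight-ratio} and \eqref{eq:normalized-center}, the derivative bound $\abs{g_s'}\le1$ off zero together with continuity at zero, the inequality $\e^{y}\ge1+y$, and the range statements from \eqref{eq:center-range}.
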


\begin{proof}
Both Gibbs distributions are positive on their common support $\Omega$, and hence
\[
  \TV(\mu,\nu)
  =\sum_{\substack{\sigma\in\Omega\\ \mu(\sigma)\ge\nu(\sigma)}}
      \bigl(\mu(\sigma)-\nu(\sigma)\bigr)
  =\E_{\sigma\sim\mu}
      \left(1-\frac{\nu(\sigma)}{\mu(\sigma)}\right)_+.
\]
By \eqref{eq:weight-ratio} and \eqref{eq:normalized-center},
\[
  \frac{\nu(\sigma)}{\mu(\sigma)}=\e^{s(A(\sigma)-c)}.
\]
Substitution proves \eqref{eq:normalized-tv}.  If $x<0$, then
$g_s'(x)=-\e^{sx}$, whose absolute
value is at most one; if $x>0$, then $g_s(x)=0$.  Continuity at zero proves the
Lipschitz claim.  The inequality $1-\e^{-u}\le u$ for $u\ge0$ gives
$g_s(x)\le(-x)_+$.  The range statements follow from
\eqref{eq:center-range} and $A(\sigma)\in[-n,n]$.
\end{proof}

\subsection{A uniform lower bound after normalization}
\label{sec:uniform-separation}

The normalized expectation in \eqref{eq:normalized-tv} must be bounded away
from zero.  The strict lower bound on the external fields supplies this fact.

\begin{lemma}
\label{lem:bound-dtv}
\label{lem:parameter-lower-bound}
Let $\mu$ and $\nu$ be the two hard-core measures defined above.
Suppose $\Delta\ge3$, $G$ has maximum degree at most $\Delta$, and all external
fields in both systems lie in $[b,r]$, where
$0<b<r<\lambda_{\mathrm c}(\Delta)$.  Then
\begin{equation}
  \TV(\mu,\nu)\ge\kappa_0 s,
  \qquad
  \kappa_0=\frac{b}{5000}.
  \label{eq:uniform-separation}
\end{equation}
Consequently, the expectation in \eqref{eq:normalized-tv} is at least
$\kappa_0$.
\end{lemma}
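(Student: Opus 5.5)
The plan is to localize at a single vertex where the two activity vectors differ the most, and to use a flip pairing at that vertex. Choose $v\in V$ with $\abs{\log\lambda_v^\nu-\log\lambda_v^\mu}=s$, and write $\rho=\lambda_v^\nu/\lambda_v^\mu\in\{\e^{s},\e^{-s}\}$. Let $\Omega_v$ be the set of feasible $\sigma$ with $\sigma_u=-1$ for all $u\in N_G[v]$. For $\sigma\in\Omega_v$, let $\sigma^v$ be $\sigma$ with $v$ switched to $+1$; this is feasible, and the map $\sigma\mapsto\sigma^v$ is injective with image disjoint from $\Omega_v$. By \eqref{eq:common-weight-ratio},
\[
  \mu(\sigma^v)=\lambda_v^\mu\,\mu(\sigma),
  \qquad
  \nu(\sigma^v)=\lambda_v^\nu\,\nu(\sigma).
\]
With $p=\mu(\sigma)$ and $q=\nu(\sigma)$, the contribution of the pair to $2\TV(\mu,\nu)$ is
\[
  \abs{p-q}+\lambda_v^\mu\abs{p-\rho q}.
\]
Minimizing over $q\ge0$ (a piecewise linear function of $q$, so the minimum is at $q=p$ or $q=p/\rho$), this is at least
\[
  \min\{1,\lambda_v^\mu\}\cdot p\cdot\min\set{\abs{1-\rho},\abs{1-\rho^{-1}}}
  \ge\min\{1,b\}\,(1-\e^{-s})\,p .
\]
Summing over the disjoint pairs gives
\[
  2\TV(\mu,\nu)\ge\min\{1,b\}\,(1-\e^{-s})\,\mu(\Omega_v).
\]

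Next I lower-bound $\mu(\Omega_v)$ by a universal constant. Reveal the spins of $N_G[v]$ one vertex at a time. Conditioned on any feasible boundary, a vertex $u$ is occupied with probability at most $\lambda_u^\mu/(1+\lambda_u^\mu)\le r/(1+r)$. Hence
\[
  \mu(\Omega_v)\ge(1+r)^{-(\Delta+1)}\ge(1+\lambda_{\mathrm c}(\Delta))^{-(\Delta+1)}.
\]
Using $\lambda_{\mathrm c}(\Delta)\le e/(\Delta-2)$, this is at least a constant. The worst case is $\Delta=3$, where $\lambda_{\mathrm c}=4$ and the bound is $5^{-4}=1/625$. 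For larger $\Delta$ the bound is roughly $\e^{-\mathrm e(\Delta+1)/(\Delta-2)}$, which is bigger. I would check this monotonicity numerically for the first few $\Delta$ and use the asymptotic bound beyond that.

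Finally I convert $1-\e^{-s}$ into a multiple of $s$ without picking up any dependence on $r/b$. If $s\le1$, then $1-\e^{-s}\ge s/2$, so
\[
  \TV\ge\tfrac{1}{2}\min\{1,b\}\cdot\tfrac{s}{2}\cdot\tfrac{1}{625}.
\]
Since $b<r<\lambda_{\mathrm c}\le4$, we have $\min\{1,b\}\ge b/4$, and this already exceeds $bs/5000$. If $s>1$, then $1-\e^{-s}\ge1/2$, while $s\le\log(r/b)\le\log(4/b)$. So it suffices that
\[
  \tfrac{1}{4}\min\{1,b\}/625\ge b\log(4/b)/5000 ,
\]
which is an elementary one-variable inequality for $b\in(0,4)$. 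Here I use that $b\log(4/b)\le 4/\mathrm e$, and that $\log(4/b)$ is small when $b\ge1$. The last sentence of the lemma then follows from \cref{lem:normalized-tv}.

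The main obstacle is getting the constants to close to exactly $b/5000$. The bound on $\mu(\Omega_v)$ at $\Delta=3$ and the large-$s$ case interact tightly. If the crude $(1+r)^{-(\Delta+1)}$ bound is too lossy, I would sharpen it. One option is to condition on the second neighborhood, so that the neighbors of $v$ become independent given it. Another is to use $\lambda_{\mathrm c}(3)=4$ directly together with the sharper conditional factor $1/(1+\lambda_v)$ for $v$ itself.
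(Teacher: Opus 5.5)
Your flip pairing $\sigma\mapsto\sigma^v$ and the bound $\mu(\Omega_v)\ge(1+r)^{-(\Delta+1)}$ are correct. Together they give a self-contained replacement for \cref{lem:dtv-citation}\textnormal{(i)}, which the paper simply quotes before orienting $\lambda_u^\mu\ge\lambda_u^\nu$ and finishing with $\abs{\lambda_u^\mu-\lambda_u^\nu}=\lambda_u^\nu(\e^s-1)\ge bs$.

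The gap is in your last step. When you relax the pair bound $\min\{\lambda_v^\mu\abs{1-\rho},\abs{1-\rho^{-1}}\}$ to $\min\{1,b\}(1-\e^{-s})$, you discard a factor $\e^{s}$. You also replace the actual activity at $v$ by $b$. The right-hand side of the resulting inequality $2\TV(\mu,\nu)\ge\min\{1,b\}(1-\e^{-s})\mu(\Omega_v)$ is at most $b$ for every $s$. The target $bs/5000$, however, can be as large as $b\log(r/b)/5000$, which is unbounded as $b\to0$ with $r$ fixed (take $\lambda_v^\mu=b$, $\lambda_v^\nu=r$). So no sharpening of $\mu(\Omega_v)$, the remedy you suggest, can close this.

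Concretely, the ``elementary inequality'' you need for $s>1$ is $\min\{1,b\}/2500\ge b\log(4/b)/5000$. It is equivalent to $2\min\{1,b\}\ge b\log(4/b)$, which is false for $b<4\e^{-2}$. The case $s\le1$ does not close as written either. Your chain gives $\frac{b}{4}\cdot\frac{s}{2500}=\frac{bs}{10000}$, not $bs/5000$. For $b\in(2,4)$, which is allowed when $\Delta=3$, even $\min\{1,b\}\,s/2500$ is smaller than $bs/5000$.

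A minor point: $\lambda_{\mathrm c}(\Delta)=\frac{1}{\Delta-2}\bigl(1+\frac{1}{\Delta-2}\bigr)^{\Delta-1}$ exceeds $\e/(\Delta-2)$, so that justification runs the wrong way. The uniform bound $(1+\lambda_{\mathrm c}(\Delta))^{\Delta+1}\le625$ is nevertheless true. One proof uses $\lambda_{\mathrm c}(\Delta)\le\e(\Delta-1)/(\Delta-2)^2$ for $\Delta\ge6$ and direct computation for $\Delta\le5$.

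The repair is to keep the non-saturating factor, which is exactly what the paper's step $\e^s-1\ge s$ does. Since $\TV$ is symmetric, orient so that $\lambda_v^\mu>\lambda_v^\nu$, i.e.\ $\rho=\e^{-s}$. The two breakpoint values of your piecewise-linear function are then $\lambda_v^\mu(1-\e^{-s})p=\lambda_v^\nu(\e^s-1)p$ and $(\e^s-1)p$. Hence each pair contributes at least $\min\{1,\lambda_v^\nu\}(\e^s-1)p\ge\min\{1,b\}\,s\,p$, and
\[
  \TV(\mu,\nu)\ge\frac{\min\{1,b\}\,s}{2(1+r)^{\Delta+1}}.
\]
This is at least $bs/5000$ with no case split in $s$. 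If $b\le1$, use $(1+r)^{\Delta+1}\le625$. If $b>1$, use $b<r<\lambda_{\mathrm c}(\Delta)\le4$ together with $(1+r)^{\Delta+1}\le625$, so that $b(1+r)^{\Delta+1}<2500$.
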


\begin{lemma}[Ising parameter separation]
\label{lem:ising-parameter-separation}
Fix an integer $\Delta\ge0$, $\beta>0$, and $0<b<r<\infty$.  Set
\[
  \hat\beta=\min\{\beta,\beta^{-1}\},
  \qquad
  \rho
  =\min\left\{
    \frac{b\,\hat\beta^\Delta}
         {1+b\,\hat\beta^\Delta},
    \frac{1}{1+r\,\hat\beta^{-\Delta}}
  \right\}.
\]
Suppose that $G$ has maximum degree at most $\Delta$ and that
$\lambda^\mu,\lambda^\nu\in[b,r]^V$.  For the two corresponding Ising
measures with the same edge activity $\beta$,
\begin{equation}
  \TV(\mu,\nu)\ge\kappa_{\mathrm I}s,
  \qquad
  \kappa_{\mathrm I}=\frac{\rho^2}{4(\Delta+1)}.
  \label{eq:ising-uniform-separation}
\end{equation}
\end{lemma}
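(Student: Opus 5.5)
We must prove $\TV(\mu,\nu)\ge \kappa_{\mathrm I} s$ with $\kappa_{\mathrm I}=\rho^2/(4(\Delta+1))$. The plan is to reduce to a single vertex, test one local event, and use the fact that the single-site conditional probabilities are bounded away from $0$ and $1$.

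\textbf{Choice of vertex and test event.} Pick $v$ with $\abs{\log\lambda^\nu_v-\log\lambda^\mu_v}=s$; by symmetry of the roles of $\mu$ and $\nu$ we may assume $\lambda^\nu_v=\lambda^\mu_v\e^{s}$. We lower bound $\TV$ by $\abs{\mu(\mathcal A)-\nu(\mathcal A)}$ for the event $\mathcal A=\{\sigma_v=+1\}$. The difficulty is that the two measures also differ at other vertices, which shifts the distribution of the neighbours' spins. Projecting onto the whole closed neighbourhood handles this cleanly. Set $B=N_G[v]$, which has at most $\Delta+1$ vertices. Then
\[
\TV(\mu,\nu)\ge\TV(\mu_B,\nu_B)\ge\TV(\mu_{N(v)},\nu_{N(v)}).
\]

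\textbf{Single-site conditionals.} Given the spins $\tau$ on $N(v)$, the conditional law of $\sigma_v$ under each measure depends only on $\tau$ and on that measure's own field at $v$. Specifically,
\[
p^{\mu}(\tau)=\frac{\lambda^\mu_v\beta^{k_-}}{\lambda^\mu_v\beta^{k_-}+\beta^{k_+}},
\]
where $k_\pm$ counts the neighbours with spin $\pm1$ (the two values of $\sigma_v$ disagree with $k_-$ and $k_+$ neighbours respectively). Since $\beta^{k_--k_+}\in[\hat\beta^{\Delta},\hat\beta^{-\Delta}]$ and the field lies in $[b,r]$, each conditional lies in $[\rho,1-\rho]$; this is exactly how $\rho$ is defined. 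With $x=\lambda_v\beta^{k_--k_+}$, we have $\frac{d}{d\log x}\frac{x}{1+x}=p(1-p)\ge\rho(1-\rho)\ge \rho^2$ along the path. So for each $\tau$, $p^\nu(\tau)-p^\mu(\tau)\ge \rho(1-\rho)s$, or at least $\rho^2 s/2$ after a careful mean-value estimate.

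\textbf{Combining (main obstacle).} We have
\[
2\TV(\mu_B,\nu_B)=\sum_{\tau}\sum_{x=\pm1}\abs{\mu(\tau)p^\mu_x(\tau)-\nu(\tau)p^\nu_x(\tau)}.
\]
For each $\tau$, the inner sum is at least $\min(\mu(\tau),\nu(\tau))\cdot 2\abs{p^\nu-p^\mu}$ minus a correction from $\abs{\mu(\tau)-\nu(\tau)}$. A cleaner route is to argue by cases. Let $\delta=\TV(\mu_{N(v)},\nu_{N(v)})$. If $\delta\ge \rho^2 s/(4(\Delta+1))$ we are done. Otherwise, couple the neighbourhood marginals optimally; on the coupled part the $v$-spin laws differ by at least $\rho^2 s$ in total variation, so
\[
\TV(\mu_B,\nu_B)\ge(1-\delta)\rho^2 s/2-\delta.
\]
The factor $\Delta+1$ in the constant suggests the authors instead average over the vertices of $B$ or use a one-vertex-at-a-time telescoping across the at most $\Delta+1$ vertices of $B$. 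So if the case split fails to close when $\delta$ is comparable to $s$, we fall back on this: write $\mu_B$ and $\nu_B$ via chain-rule factorizations over the vertices of $B$ with $v$ last, and use that some factor in the telescoping sum must carry at least a $1/(\Delta+1)$ share of the discrepancy. Making this bookkeeping produce exactly $\rho^2/(4(\Delta+1))$ is the step we expect to be delicate. The inequality $\rho\le1/2$, which also gives $1-\rho\ge\rho$, is used throughout.
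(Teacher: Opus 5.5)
Your reduction to a vertex $v$ with $\abs{\log\lambda^\nu_v-\log\lambda^\mu_v}=s$, the projection to $N_G[v]$, and the single-site bounds $p\in[\rho,1-\rho]$ are all correct. The proof is not finished, though. The step you call the main obstacle is left open. The fallback you sketch (a chain-rule telescoping over $N_G[v]$ in which some factor carries a $1/(\Delta+1)$ share) is never set up or bounded, so it is not an argument. Your coupling inequality $(1-\delta)\rho^2s/2-\delta$ is true but lossy, and you never check that it beats your threshold.

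The missing step is one line, and it needs neither a threshold nor the factor $\Delta+1$. Write $\mu(\tau),\nu(\tau)$ for the laws of $\sigma_{N_G(v)}$, and let $\delta=\TV(\mu_{N_G(v)},\nu_{N_G(v)})$. Every field between $\lambda^\mu_v$ and $\lambda^\nu_v=\lambda^\mu_v\e^{s}$ lies in $[b,r]$. Hence $p(1-p)\ge\rho(1-\rho)$ along the whole path, so $p^\nu(\tau)-p^\mu(\tau)\ge\rho(1-\rho)s$ for every $\tau$; no factor $1/2$ is lost. Then
\[
\nu(\sigma_v=+1)-\mu(\sigma_v=+1)
=\sum_\tau\mu(\tau)\bigl(p^\nu(\tau)-p^\mu(\tau)\bigr)
+\sum_\tau\bigl(\nu(\tau)-\mu(\tau)\bigr)p^\nu(\tau)
\ge\rho(1-\rho)s-\delta,
\]
because $0\le p^\nu\le1$ and the negative part of $\nu(\cdot)-\mu(\cdot)$ has total mass $\delta$. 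Combined with $\TV(\mu,\nu)\ge\delta$, and using $\rho\le1/2$, this gives
\[
\TV(\mu,\nu)\ge\max\{\delta,\ \rho(1-\rho)s-\delta\}\ge\frac{\rho(1-\rho)}{2}\,s\ge\frac{\rho^2}{2}\,s\ge\kappa_{\mathrm I}s.
\]

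With this line inserted, your argument is a genuinely different and stronger proof than the paper's. The paper does not argue directly. It checks that both single-site conditional probabilities are at least $\rho$ under every partial pinning, since they are convex combinations of the full-neighbourhood conditionals. It then invokes \cref{lem:dtv-citation}(ii), which is Lemma~15 of Feng et al.\ translated to this parametrization, and the factor $\Delta+1$ in $\kappa_{\mathrm I}$ is inherited from there. Your route is self-contained and uses only the full-neighbourhood conditionals. It also improves the constant by at least a factor $2(\Delta+1)$. The paper's route is shorter but rests on the external lemma and its change of parameters.
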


The following is the specialization of
\cite[Lemma~15]{pmlr-v291-feng25a} to our parametrizations.  Indeed, up to a
configuration-independent factor, our Ising weights have
$J_{uv}=-\tfrac12\log\beta$ and $h_v=\tfrac12\log\lambda_v$; hence the
$\rho^2/2$ bound there becomes the $\rho^2/4$ bound below.

\begin{lemma}[\cite{pmlr-v291-feng25a}, Lemma~15]
\label{lem:dtv-citation}
\begin{enumerate}[label=\textnormal{(\roman*)},leftmargin=2.6em]
  \item Under the assumptions of \cref{lem:bound-dtv},
  \[
    \TV(\mu,\nu)
    \ge\frac{1}{5000}
      \max_{v\in V}\abs{\lambda_v^\mu-\lambda_v^\nu}.
  \]

  \item Let $\mu=\mu_{G,\beta,\lambda^\mu}$ and
  $\nu=\mu_{G,\beta,\lambda^\nu}$ be two Ising measures with the same edge
  activity $\beta>0$.  Suppose that, for some $\rho\in(0,1]$, after every
  partial pinning and for every vertex $v$ left unpinned, both measures
  satisfy
  \[
    \Pr(\sigma_v=\omega\mid\text{the pinning})\ge\rho
    \qquad(\omega\in\{-1,+1\}).
  \]
  Then
  \[
    \TV(\mu,\nu)
    \ge\frac{\rho^2}{4}
      \max_{v\in V}
      \frac{\abs{\log\lambda_v^\mu-\log\lambda_v^\nu}}
           {\deg_G(v)+1}.
  \]
\end{enumerate}
\end{lemma}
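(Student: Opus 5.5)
We would prove both parts directly rather than transcribing the parametrization of \cite{pmlr-v291-feng25a}. The tool is a one-vertex coupling argument. Fix the vertex $v$ attaining the maximum. Write $\tau$ for a configuration on $V\setminus\{v\}$ and set $P_\tau=\mu(\sigma_{V-v}=\tau)$, $Q_\tau=\nu(\sigma_{V-v}=\tau)$. Let $x_\tau$ and $y_\tau$ be the conditional probabilities of $\sigma_v=+1$ given $\tau$ under $\mu$ and $\nu$.

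The key algebraic step is the following. Put $p_\pm=\mu(\sigma_v=\pm1,\tau)$ and $q_\pm=\nu(\sigma_v=\pm1,\tau)$. Then
\[
  (1-y_\tau)(p_+-q_+)-y_\tau(p_--q_-)=P_\tau(x_\tau-y_\tau).
\]
Both coefficients lie in $[0,1]$, so $\abs{p_+-q_+}+\abs{p_--q_-}\ge P_\tau\abs{x_\tau-y_\tau}$. Summing over $\tau$ gives
\[
  \TV(\mu,\nu)\ge\tfrac12\,\E_{\tau\sim\mu}\abs{x_\tau-y_\tau}.
\]
The rest of the argument is a lower bound on $\abs{x_\tau-y_\tau}$.

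\emph{Ising, part (ii).} By \eqref{eq:common-weight-ratio}, for every $\tau$ the log-odds satisfy
\[
  \log\frac{x_\tau}{1-x_\tau}-\log\frac{y_\tau}{1-y_\tau}=\log\lambda^\mu_v-\log\lambda^\nu_v=:\delta.
\]
This holds because both odds are $\lambda_v\beta^{(\#\text{disagreeing edges if }+)-(\#\text{if }-)}$, and the $\beta$-factor is the same for $\mu$ and $\nu$. Conditioning on $\tau$ is a partial pinning, so $x_\tau,y_\tau\in[\rho,1-\rho]$.

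The map from log-odds to probability has derivative $p(1-p)$. On the segment between the two log-odds this derivative is minimized at an endpoint, because $p(1-p)$ is unimodal in the logit. Hence the mean value theorem gives
\[
  \abs{x_\tau-y_\tau}\ge\rho(1-\rho)\abs\delta\ge\tfrac{\rho^2}{1}\cdot\tfrac12\abs\delta .
\]
Therefore $\TV\ge\rho^2\abs\delta/4$. This is at least the claimed bound, since $\deg_G(v)+1\ge1$; the weaker degree-dependent form is what one gets by specializing the cited statement.

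\emph{Hard-core, part (i).} Here $x_\tau=0=y_\tau$ unless no neighbour of $v$ is occupied under $\tau$. When no neighbour is occupied,
\[
  x_\tau=\frac{\lambda^\mu_v}{1+\lambda^\mu_v},\qquad
  \abs{x_\tau-y_\tau}=\frac{\abs{\lambda^\mu_v-\lambda^\nu_v}}{(1+\lambda^\mu_v)(1+\lambda^\nu_v)}\ge\frac{\abs{\lambda^\mu_v-\lambda^\nu_v}}{(1+r)^2}.
\]
It therefore remains to bound $\mu(N_G(v)\text{ unoccupied})$ below. We reveal the neighbours one at a time. Whatever the other spins are, each neighbour $u$ is unoccupied with probability at least $1/(1+\lambda_u)\ge1/(1+r)$. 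This gives
\[
  \mu(N_G(v)\text{ unoccupied})\ge(1+\lambda_{\mathrm c}(\Delta))^{-\Delta}.
\]

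\emph{Main obstacle.} The step we expect to be hardest is making this bound an absolute constant independent of $\Delta$, together with the factor $(1+r)^{-2}$, so that the final constant is $1/5000$. The plan uses $\lambda_{\mathrm c}(\Delta)\le e/(\Delta-2)$, which gives $(1+\lambda_{\mathrm c})^{-\Delta}\ge e^{-e\Delta/(\Delta-2)}$. For large $\Delta$ this is about $e^{-e}$. For small $\Delta$ we would check the values directly; for example, $\Delta=3$ gives $\lambda_{\mathrm c}=4$ and the bound $5^{-3}$.

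The crude product of these estimates for $\Delta=3$ is close to $1/5000$ but does not clearly beat it. If it falls short, we would sharpen the argument in one of two ways. The first is to compare the two cross terms more carefully, using $P_\tau$ together with $Q_\tau$ instead of the coarse $(1+r)^{-2}$. The second, fallback, is to cite \cite[Lemma~15]{pmlr-v291-feng25a} directly for the numerical constant, since only the existence of some constant $\kappa_0=\Theta(b)$ matters for the algorithm.
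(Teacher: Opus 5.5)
Your part (ii) is correct and takes a different route from the paper. The paper gives no argument of its own: it imports \cite[Lemma~15]{pmlr-v291-feng25a} and only translates the parametrization ($J_{uv}=-\tfrac12\log\beta$, $h_v=\tfrac12\log\lambda_v$). Your identity $(1-y_\tau)(p_+-q_+)-y_\tau(p_--q_-)=P_\tau(x_\tau-y_\tau)$ checks out. Together with the logit mean-value step it yields $\TV(\mu,\nu)\ge\tfrac12\rho(1-\rho)\abs{\log\lambda^\mu_v-\log\lambda^\nu_v}$ for every $v$. This is self-contained and stronger than the stated bound, since it has no $\deg_G(v)+1$ loss.

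Part (i), however, is not established with the constant $1/5000$. Your estimates combine to $\TV(\mu,\nu)\ge\abs{\lambda^\mu_v-\lambda^\nu_v}/\bigl(2(1+r)^{\Delta+2}\bigr)$. For $\Delta=3$ the hypotheses only give $r<\lambda_{\mathrm c}(3)=4$, and $2(1+r)^5>5000$ once $r>2500^{1/5}-1\approx3.78$. As $r\uparrow4$ your constant tends to $1/6250$. So the case you call ``close'' is an actual failure, and neither remedy is carried out. The fallback of citing \cite{pmlr-v291-feng25a} is exactly what the paper does, but then there is no proof on your side. Separately, the inequality $\lambda_{\mathrm c}(\Delta)\le\e/(\Delta-2)$ is backwards: $\lambda_{\mathrm c}(\Delta)=\frac{1}{\Delta-2}\bigl(1+\frac{1}{\Delta-2}\bigr)^{\Delta-1}\ge\frac{\e}{\Delta-2}$, e.g.\ $\lambda_{\mathrm c}(3)=4$. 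What is true is $\lambda_{\mathrm c}(\Delta)\le\e(\Delta-1)/(\Delta-2)^2$, which still gives a $\Delta$-uniform bound.

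A small sharpening of your own coupling step closes the gap. Bound both coefficients by $\max\{y_\tau,1-y_\tau\}$ rather than by $1$, so that $\TV(\mu,\nu)\ge\frac12\E_{\tau\sim\mu}\bigl[\abs{x_\tau-y_\tau}/\max\{y_\tau,1-y_\tau\}\bigr]$. When $N_G(v)$ is unoccupied under $\tau$, the integrand equals $\abs{\lambda^\mu_v-\lambda^\nu_v}/\bigl((1+\lambda^\mu_v)\max\{1,\lambda^\nu_v\}\bigr)$. For $\Delta=3$, with all fields below $4$, this exceeds $\abs{\lambda^\mu_v-\lambda^\nu_v}/20$. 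Combined with $\mu(N_G(v)\text{ unoccupied})\ge(1+r)^{-3}>1/125$, it gives $\TV(\mu,\nu)>\abs{\lambda^\mu_v-\lambda^\nu_v}/5000$.

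For $\Delta\ge4$ your crude constant $2(1+\lambda_{\mathrm c}(\Delta))^{\Delta+2}$ is already below $5000$. It is about $753$ at $\Delta=4$, and smaller for $5\le\Delta\le9$ by direct evaluation. For $\Delta\ge10$ it is at most $2\e^{4.4}$, via $(1+\lambda_{\mathrm c})^{\Delta+2}\le\exp((\Delta+2)\lambda_{\mathrm c})$: the exponent $\frac{\Delta+2}{\Delta-2}\bigl(1+\frac1{\Delta-2}\bigr)^{\Delta-1}$ is decreasing in $\Delta$ and about $4.33$ at $\Delta=10$.
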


\begin{proof}[Proof of \cref{lem:bound-dtv}]
Choose $u\in V$ attaining the maximum in the definition of $s$ and assume,
without loss of generality, that $\lambda_u^\mu\ge\lambda_u^\nu$.  By
\cref{lem:dtv-citation}\textnormal{(i)},
\[
  \TV(\mu,\nu)
  \ge\frac{1}{5000}\abs{\lambda_u^\mu-\lambda_u^\nu}
  =\frac{1}{5000}\lambda_u^\nu
    \left(\frac{\lambda_u^\mu}{\lambda_u^\nu}-1\right)
  \ge\frac{b}{5000}(\e^s-1)
  \ge\frac{b}{5000}s.
\]
\end{proof}

\begin{proof}[Proof of \cref{lem:ising-parameter-separation}]
Fix $v\in V$ and a spin configuration $\tau$ on $N_G(v)$.  Under either
measure, the conditional probability of a positive spin at $v$ is
\[
  \Pr(\sigma_v=+1\mid\sigma_{N_G(v)}=\tau)
  =\frac{\lambda_v\beta^{m_-(\tau)-m_+(\tau)}}
         {1+\lambda_v\beta^{m_-(\tau)-m_+(\tau)}},
\]
where $m_+(\tau)$ and $m_-(\tau)$ are the numbers of positive and negative
spins in $\tau$.  Since
\[
  \abs{m_-(\tau)-m_+(\tau)}
  \le\deg_G(v)\le\Delta,
\]
the definition of $\hat\beta$ gives
\[
  \hat\beta^\Delta
  \le\beta^{m_-(\tau)-m_+(\tau)}
  \le\hat\beta^{-\Delta}.
\]
It follows that
\[
\begin{aligned}
  \Pr(\sigma_v=+1\mid\sigma_{N_G(v)}=\tau)
  &=\frac{\lambda_v\beta^{m_-(\tau)-m_+(\tau)}}
          {1+\lambda_v\beta^{m_-(\tau)-m_+(\tau)}}
    \ge\frac{b\,\hat\beta^\Delta}
             {1+b\,\hat\beta^\Delta},\\
  \Pr(\sigma_v=-1\mid\sigma_{N_G(v)}=\tau)
  &=\frac{1}
          {1+\lambda_v\beta^{m_-(\tau)-m_+(\tau)}}
    \ge\frac{1}{1+r\,\hat\beta^{-\Delta}}.
\end{aligned}
\]
Under any partial pinning that leaves $v$ unpinned, its conditional spin
probabilities are convex combinations of the displayed full-neighborhood
conditional probabilities over the compatible configurations of $N_G(v)$.
Thus both conditional spin probabilities remain at least $\rho$ after every
partial pinning.
Applying \cref{lem:dtv-citation}\textnormal{(ii)} with this value of $\rho$
and using the definition of $s$ gives
\[
  \TV(\mu,\nu)
  \ge\frac{\rho^2}{4}
       \max_{u\in V}
       \frac{\abs{\log\lambda_u^\mu-\log\lambda_u^\nu}}
            {\deg_G(u)+1}
  \ge\frac{\rho^2}{4(\Delta+1)}s.
\]
\end{proof}

\subsection{The oracle reduction}
\label{sec:oracle-reduction}

\begin{proposition}
\label{prop:oracle-to-tv}
Fix real numbers $0<b<r<\infty$ and one of the two model families described
above.  Suppose that there is a constant $\kappa\in(0,1]$ such that, for every
$\lambda^\mu,\lambda^\nu\in[b,r]^V$,
\begin{equation}
  \TV(\pi_{\lambda^\mu},\pi_{\lambda^\nu})
  \ge\kappa
  \norm{\log\lambda^\nu-\log\lambda^\mu}_\infty.
  \label{eq:oracle-reduction-separation}
\end{equation}
Suppose also that, for every $\lambda\in[b,r]^V$ and
$\varepsilon'\in(0,1)$, the following deterministic oracles are available,
running in time $T_{\widehat p}(n,\varepsilon')$ and
$T_{\widehat\phi}(n,\varepsilon')$, respectively, with constants depending
only on the fixed model parameters.
\begin{enumerate}[label=\textnormal{(\roman*)},leftmargin=2.6em]
  \item Given $u\in V$, the marginal-probability oracle returns
  $\widehat p$ satisfying
  \[
    \abs{\widehat p-
      \Pr_{\sigma\sim\pi_\lambda}(\sigma_u=+1)}
    \le\varepsilon'.
  \]

  \item Given coefficients $a_v\in[-1,1]$ and $t\in\R$, define
  $A(\sigma)=\sum_{v:\,\sigma_v=+1}a_v$.
  The characteristic-function oracle returns $\widehat\Phi_{\pi_\lambda,A}(t)$
  satisfying
  \[
    \abs{
      \widehat\Phi_{\pi_\lambda,A}(t)
      -\Phi_{\pi_\lambda,A}(t)
    }
    \le\varepsilon'.
  \]
\end{enumerate}
Then, for every $\lambda^\mu,\lambda^\nu\in[b,r]^V$ and
$\varepsilon\in(0,1)$, the distance
$\TV(\pi_{\lambda^\mu},\pi_{\lambda^\nu})$ can be deterministically
approximated within relative error $\varepsilon$.
The running time is
\[
\begin{aligned}
  \widetilde O\Biggl(
    &\frac{n^3}{\varepsilon\kappa}
      T_{\widehat p}\left(n,\frac{\varepsilon\kappa}{128n}\right)
      +\frac{n^3}{(\varepsilon\kappa)^3}+\frac{n}{\varepsilon\kappa}
      T_{\widehat\phi}\left(
        n,\frac{\varepsilon\kappa}
        {1536n\lceil384n/(\varepsilon\kappa)\rceil+2\varepsilon\kappa}
      \right)
  \Biggr),
\end{aligned}
\]
with the same dependence on the fixed model parameters.
The output is zero if and only if
$\lambda^\mu=\lambda^\nu$.
\end{proposition}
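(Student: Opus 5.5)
The algorithm assembles the pieces of \cref{sec:normalization,sec:fourier-expectation} into one pipeline. First it compares $\lambda^\mu$ and $\lambda^\nu$ coordinatewise. If they agree, it outputs $0$, which is exact since then $\mu=\nu$. Otherwise it computes $s>0$, the normalized coefficients $a_v\in[-1,1]$, and the statistic $A$ as in \cref{sec:normalization}. By \cref{lem:normalized-tv}, the target $q=\TV(\mu,\nu)/s$ equals $\E_{\sigma\sim\mu}g_s(A(\sigma)-c)$. By \eqref{eq:oracle-reduction-separation}, $q\ge\kappa$. It therefore suffices to compute $\widehat q$ with $\abs{\widehat q-q}\le\varepsilon\kappa$ and output $\widehat d=s\max\{0,\widehat q\}$. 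Clipping at zero cannot increase the error because $q\ge0$. Then $\abs{\widehat d-\TV}\le s\varepsilon\kappa\le\varepsilon\TV(\mu,\nu)$, which is the relative guarantee.

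Next, split the additive budget $\varepsilon\kappa$ between the estimate of the center and the estimate of the expectation. Apply \cref{lem:thermodynamic-identity} with accuracy $\varepsilon_1=\varepsilon\kappa/16$ to get $\widehat c\in[-n,n]$ with $\abs{\widehat c-c}\le\varepsilon_1$. This uses marginal-oracle calls with accuracy $\varepsilon_1/(8n)=\varepsilon\kappa/(128n)$, which produces the first term of the running time. Define $F(x)=g_s(x-\widehat c)$ on $[-n,n]$. Since $g_s$ is $1$-Lipschitz, $\abs{F(A(\sigma))-g_s(A(\sigma)-c)}\le\varepsilon_1$ pointwise, so replacing $c$ by $\widehat c$ changes the expectation by at most $\varepsilon_1$.

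Before invoking \cref{lem:fourier-expectation}, check its hypotheses on $F$. $F$ is $1$-Lipschitz and exactly evaluable in $O(1)$ time in the arithmetic model. For the range, $x-\widehat c\in[-2n,2n]$, so $0\le F(x)\le(-(x-\widehat c))_+\le 2n$ by \cref{lem:normalized-tv}, and hence $F$ maps into $[0,2n]$. Now apply \cref{lem:fourier-expectation} to $\pi=\mu$ with accuracy $\varepsilon_2=\varepsilon\kappa/16$. The characteristic-function oracle (ii) is queried at $\lambda=\lambda^\mu\in[b,r]^V$ with the same $a_v$. This yields $\widehat M=\widehat q$ with $\abs{\widehat q-\E_\mu F(A)}\le\varepsilon_2$. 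The total error is then $\varepsilon_1+\varepsilon_2\le\varepsilon\kappa/8<\varepsilon\kappa$.

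Substituting $\varepsilon_2=\varepsilon\kappa/16$ into \cref{lem:fourier-expectation} gives the stated oracle accuracy. Indeed, $\varepsilon_2/(96n\lceil 24n/\varepsilon_2\rceil+2\varepsilon_2)$ becomes $\varepsilon\kappa/(1536n\lceil384n/(\varepsilon\kappa)\rceil+2\varepsilon\kappa)$. It also gives the $n^3/(\varepsilon\kappa)^3$ cost for computing the coefficients with $T_F=O(1)$. The constant $\kappa\le1$ ensures all accuracies lie in $(0,1)$, as the lemmas require.

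For the ``if and only if'' claim, the case of equal inputs is exact. For distinct inputs, $\abs{\widehat q-q}<\varepsilon\kappa<\kappa\le q$ forces $\widehat q>0$, so $\widehat d>0$. The main point needing care is the range check on $F$ and confirming that every oracle query stays inside $[b,r]^V$, which holds because the geometric path is coordinatewise between $\lambda^\mu$ and $\lambda^\nu$. The rest is bookkeeping of constants.
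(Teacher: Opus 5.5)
Your proposal is correct and follows the paper's proof essentially step for step. It uses the same exact-equality check, the same split of the budget into $\varepsilon\kappa/16$ for $\widehat c$ (via \cref{lem:thermodynamic-identity}) and $\varepsilon\kappa/16$ for the Fourier expectation (via \cref{lem:fourier-expectation} applied to $F(x)=g_s(x-\widehat c)$), the same clipped output $\widehat d=s\max\{0,\widehat q\}$, and the same positivity argument for the zero-output characterization.
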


\begin{proof}
The deterministic algorithm is as follows.
\begin{enumerate}[leftmargin=2.4em]
  \item Compare $\lambda^\mu$ and $\lambda^\nu$ exactly.  If they agree,
  return $0$.

  \item Compute $s$, the coefficients $a_v$, and the statistic $A$ as in
  \cref{sec:normalization}.  Let
  \[
    \varepsilon'=\frac{\varepsilon\kappa}{16}.
  \]
  Since $0<\kappa\le1$, we have $\varepsilon'\in(0,1)$.

  \item Use the algorithmic assertion of
  \cref{lem:thermodynamic-identity}, with the marginal-probability oracle in
  assumption~\textnormal{(i)}, to compute $\widehat c\in[-n,n]$ satisfying
  \[
    \abs{\widehat c-c}\le\varepsilon'.
  \]

  \item Define
  \[
    F(x)=g_s(x-\widehat c),
    \qquad x\in[-n,n].
  \]
  By \cref{lem:normalized-tv}, $F$ is $1$-Lipschitz and nonnegative, and
  $F(x)\le(\widehat c-x)_+\le2n$.  It is exactly evaluable in constant time in the
  stipulated arithmetic model.  Apply
  \cref{lem:fourier-expectation}, using the characteristic-function oracle
  in assumption~\textnormal{(ii)}, to compute $\widehat q$ satisfying
  \[
    \abs{
      \widehat q-\E_{\sigma\sim\mu}g_s(A(\sigma)-\widehat c)
    }
    \le\varepsilon'.
  \]

  \item Return
  \[
    \widehat d=s\max\{0,\widehat q\}.
  \]
\end{enumerate}

For the analysis, define
\[
  q=\E_{\sigma\sim\mu}g_s(A(\sigma)-c)
    =\frac{\TV(\mu,\nu)}s.
\]
By \eqref{eq:oracle-reduction-separation}, $q\ge\kappa$.  The $1$-Lipschitz
property of $g_s$ and the guarantee for $\widehat c$ give
\[
  \abs{
    \E_{\sigma\sim\mu}g_s(A(\sigma)-\widehat c)-q
  }
  \le\varepsilon'.
\]
Consequently,
\[
  \abs{\widehat q-q}
  \le2\varepsilon'
  =\frac{\varepsilon\kappa}{8}
  \le\frac{\varepsilon q}{8}.
\]
If the input vectors are distinct, then $\varepsilon<1$ and $q\ge\kappa$
imply $\widehat q>0$.  Therefore
\[
  \abs{\frac{\widehat d}{\TV(\mu,\nu)}-1}
  =\abs{\frac{\widehat q}{q}-1}
  \le\frac\varepsilon8
  \le\varepsilon.
\]

Finally, applying \cref{lem:thermodynamic-identity} with
$\varepsilon'=\varepsilon\kappa/16$ gives $O(n^3/(\varepsilon\kappa))$
marginal-oracle calls, each with accuracy $\varepsilon\kappa/(128n)$.
Since $T_F(n)=O(1)$, \cref{lem:fourier-expectation} uses
$\widetilde O(n^3/(\varepsilon\kappa)^3)$ time for the coefficients and
$O(n/(\varepsilon\kappa))$ characteristic-function oracle calls, each with
accuracy
$\frac{\varepsilon\kappa}
{1536n\lceil384n/(\varepsilon\kappa)\rceil+2\varepsilon\kappa}$.
The remaining operations take $O(n)$ time.  The complete algorithm therefore
runs in time
\[
\begin{aligned}
  \widetilde O\Biggl(
    &\frac{n^3}{\varepsilon\kappa}
      T_{\widehat p}\left(n,\frac{\varepsilon\kappa}{128n}\right)
      +\frac{n^3}{(\varepsilon\kappa)^3}\\
    &+\frac{n}{\varepsilon\kappa}
      T_{\widehat\phi}\left(
        n,\frac{\varepsilon\kappa}
        {1536n\lceil384n/(\varepsilon\kappa)\rceil+2\varepsilon\kappa}
      \right)
  \Biggr),
\end{aligned}
\]
with the asserted parameter dependence.
\end{proof}

\section{The Lee--Yang zero-free regime}
\label{sec:lee-yang-regime}

We apply the framework of \cref{sec:common-framework} to the ferromagnetic
Ising model.  The following lemma extends the deterministic approximation
result of Liu, Sinclair, and Srivastava
\cite[Theorem~1.1]{liuSinclairSrivastava2017ising} from a uniform activity to
nonuniform complex vertex activities.

\begin{lemma}
\label{lem:lee-yang-complex-oracle}
Fix an integer $\Delta\ge1$ and constants $0<\beta\le1$ and $0<r<1$.
There are constants $a_{\mathrm{LY}},b_{\mathrm{LY}}\ge1$, depending only on
$\Delta$ and $r$, and a deterministic algorithm with the following guarantee.
Let $H$ be
any graph of maximum degree at most $\Delta$, let $z\in\C^{V(H)}$ satisfy
\[
  0<\abs{z_v}\le r
  \qquad(v\in V(H)),
\]
and let $\varepsilon\in(0,1)$.  The algorithm returns a complex number
$\widehat Z_H(\beta,z)$ satisfying
\[
  \abs{\widehat Z_H(\beta,z)-Z_H(\beta,z)}
  \le\varepsilon\abs{Z_H(\beta,z)}
\]
in time
\[
  O_{\Delta,\beta,r}\left(
    (1+\abs{V(H)})^{a_{\mathrm{LY}}}
    \varepsilon^{-b_{\mathrm{LY}}}
  \right).
\]
\end{lemma}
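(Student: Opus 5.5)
We plan to follow the Barvinok--Patel--Regts interpolation scheme, as implemented for the Ising model by Liu, Sinclair, and Srivastava, with a one-variable polynomial obtained by scaling all activities simultaneously.

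\textbf{Zero-free disk.} Fix $H$ and $z$ as in the statement, and set
\[
  p(x)=Z_H(\beta,xz/r),\qquad x\in\C .
\]
This is a polynomial of degree at most $N=\abs{V(H)}$, and $p(0)=1$, coming from the all-minus configuration. For $\abs{x}<1$ every coordinate satisfies $\abs{xz_v/r}<1$. The multivariate Lee--Yang theorem for ferromagnetic pair interactions ($0<\beta\le1$) says $Z_H(\beta,w)\ne0$ whenever $\abs{w_v}<1$ for all $v$. Hence $p$ has no zeros in the open unit disk, and the target is $p(r)$ with $r<1$ fixed.

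\textbf{Truncated Taylor series.} Since $p$ is zero-free on the unit disk, $\log p$ is analytic there. Writing $p(x)=\prod_j(1-x/\zeta_j)$ with $\abs{\zeta_j}\ge1$, the Taylor tail satisfies
\[
  \abs{\log p(r)-T_m(r)}\le N\frac{r^{m+1}}{(m+1)(1-r)}.
\]
So $m=O_r(\log(N/\varepsilon))$ terms suffice for additive error $\varepsilon/2$ in $\log p(r)$. Exponentiating then gives relative error $\varepsilon$ in $Z_H(\beta,z)$, because $\abs{\e^{\eta}-1}\le 2\abs{\eta}$ for small $\eta$.

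\textbf{Computing the coefficients.} The first $m$ Taylor coefficients of $\log p$ are determined by the first $m$ coefficients of $p$ through the Newton identities. The $k$-th coefficient of $p$ is $r^{-k}\sum_{\abs{S}=k}z^S\beta^{\abs{\delta_H(S)}}$, a sum over $k$-subsets $S$ of plus spins. This sum is a multiplicative graph invariant with vertex and edge weights, and we compute it as Patel and Regts do. We write it as a linear combination of induced-subgraph counts of connected bounded-degree graphs on at most $k$ vertices, using the standard fact that the inverse power sums of $p$ are additive over connected pieces. Connected induced subgraphs of size $k$ containing a given vertex can be enumerated in $N\cdot(\e\Delta)^{O(k)}$ time. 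With $k\le m=O(\log(N/\varepsilon))$, the total cost is
\[
  N^{O(1)}(\e\Delta)^{O(\log(N/\varepsilon))}=\poly(N,1/\varepsilon),
\]
with exponents $a_{\mathrm{LY}},b_{\mathrm{LY}}$ depending only on $\Delta$ and $r$. Nonuniform complex weights cause no difficulty: the weights enter only through local products $z^S\beta^{\abs{\delta_H(S)}}$ attached to each induced connected piece, with edges leaving $S$ contributing $\beta$ factors. The computation therefore works with one variable $x$ after plugging in the given $z$.

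\textbf{Main obstacle.} The delicate step is the bounded-induced-graph-counting argument, in particular handling the boundary factor $\beta^{\abs{\delta_H(S)}}$. This factor depends on the neighborhood of $S$, not only on $H[S]$. We would absorb it by defining the polynomial through the edge-weighted form
\[
  \sum_S \prod_{v\in S}(xz_v/r)\prod_{e\in\delta(S)}\beta,
\]
and applying Patel--Regts to the induced graph together with its neighborhood. The condition $0<\abs{z_v}$ in the statement is not needed for this argument; at most it avoids degenerate vertices.
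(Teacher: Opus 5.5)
Your analytic steps match the paper's. Your $p(x)=Z_H(\beta,xz/r)$ is the paper's $Q(x/r)$, and zero-freeness on $\abs{x}<1$ uses the same multivariate Lee--Yang input. The tail bound, the choice $m=O_r(\log(N/\varepsilon))$ and the final exponentiation are also the same, and you are right that $z_v\ne0$ plays no role. The one step that is not yet a proof is the coefficient computation, which is also where you diverge from the paper. The paper never puts the Ising polynomial into Patel--Regts' induced-subgraph form. It runs the Liu--Sinclair--Srivastava coefficient algorithm (proof of their Theorem~3.1), whose enumerated objects are connected collections of (hyper)edges meeting the plus-set. So the boundary edges belong to the local data by construction. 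The nonuniform $z_v$ are encoded as singleton hyperedges with $\varphi_{\{v\}}(+)=z_v$ and a common activity $t$.

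In your route, Patel--Regts needs the $k$-th coefficient to be a sum over $k$-sets $S$ of a quantity determined by the (weighted) induced piece on $S$. It also needs the polynomial to be multiplicative over disjoint unions of such pieces. The factor $\beta^{\abs{\delta_H(S)}}$ breaks this, and ``apply Patel--Regts to the induced graph together with its neighbourhood'' does not repair it as stated. For instance, grouping terms by $U=N_H[S]$ fails because whether $N_H[S]=U$ cannot be read off from $H[U]$. On the path $a$--$b$--$c$, the edge $H[\{a,b\}]$ arises from $S=\{a\}$, but inside it $S=\{b\}$ also appears to have closed neighbourhood $\{a,b\}$, with the wrong boundary.

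A clean fix is to push the boundary into vertex weights computed once from the ambient graph. Since $\abs{\delta_H(S)}=\sum_{v\in S}\deg_H(v)-2\abs{E(H[S])}$,
\[
  p(x)=\sum_{S\subseteq V(H)}x^{\abs S}\prod_{v\in S}w_v\,\beta^{-2\abs{E(H[S])}},
  \qquad
  w_v=\frac{z_v\,\beta^{\deg_H(v)}}{r}.
\]
This weighted polynomial is multiplicative over disjoint unions of vertex-weighted graphs, and its coefficients depend only on $(H[S],w|_S)$. Newton's identities, additivity of the power sums, and M\"obius inversion then give $p_k=\sum_U c_k(H[U],w|_U)$ over connected $U$ with $\abs U\le k$. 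Note that it is the power sums, not the coefficients, that localize to connected pieces. Each $c_k$ is computable by inclusion--exclusion in $\exp(O(k))$ time, and there are at most $N(\e\Delta)^k$ such $U$, so your running-time accounting goes through. Alternatively, expanding $\beta^{\deg_H(v)}=\sum_{T\subseteq N_H(v)}(\beta-1)^{\abs T}$ turns the Ising coefficients into a genuine induced-subgraph expression on sets of size at most $(\Delta+1)k$.
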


The proof restricts the multivariate partition function to the univariate
polynomial
\[
  Q(t)=Z_H\bigl(\beta,(tz_v)_{v\in V(H)}\bigr).
\]
It then combines the multivariate Lee--Yang theorem with the coefficient
algorithm of Liu et al.  For completeness, the details are given in
Appendix~\ref{app:lee-yang-complex-oracle}.

\begin{corollary}
\label{cor:lee-yang-ising-oracles}
Fix the constants in \cref{lem:lee-yang-complex-oracle}, let $H$ have maximum
degree at most $\Delta$, and let $\lambda\in[b,r]^{V(H)}$, where
$0<b<r<1$.  For every $\varepsilon'\in(0,1)$, each of the following
quantities can be deterministically approximated to additive error
$\varepsilon'$ in time
\[
  O_{\Delta,\beta,r}\left(
    (1+\abs{V(H)})^{a_{\mathrm{LY}}}
    (\varepsilon')^{-b_{\mathrm{LY}}}
  \right):
\]
\begin{enumerate}[label=\textnormal{(\roman*)},leftmargin=2.6em]
  \item given $u\in V(H)$, the positive-spin marginal
  \[
    \Pr_{\sigma\sim\mu_{H,\beta,\lambda}}(\sigma_u=+1);
  \]

  \item given coefficients $a_v\in[-1,1]$ and $t\in\R$, the characteristic
  function
  \[
    \Phi_{\mu_{H,\beta,\lambda},A}(t),
    \qquad
    A(\sigma)=\sum_{v:\,\sigma_v=+1}a_v.
  \]
\end{enumerate}
\end{corollary}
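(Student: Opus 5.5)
Both parts reduce to ratios of partition functions whose complex activities still lie in the punctured polydisc of \cref{lem:lee-yang-complex-oracle}; we then apply \cref{lem:quotient-estimate} to turn relative errors into additive ones.

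For (ii), use \eqref{eq:characteristic-partition-ratio}:
\[
  \Phi_{\mu_{H,\beta,\lambda},A}(t)
  =\frac{Z_H(\beta,z)}{Z_H(\beta,\lambda)},
  \qquad z_v=\lambda_v\e^{\iu ta_v}.
\]
Since $\abs{z_v}=\lambda_v\in[b,r]$, we have $0<\abs{z_v}\le r<1$, so \cref{lem:lee-yang-complex-oracle} applies to both $z$ and the real vector $\lambda$. Run it with relative accuracy $\eta=\varepsilon'/3$ to get $\widehat x$ and $\widehat y$. Then
\[
  \abs{\widehat y-y}\le\eta y,\qquad
  \abs{\widehat x-x}\le\eta\abs{x}\le\eta y,
\]
where $y=Z_H(\beta,\lambda)>0$ and $x=Z_H(\beta,z)$. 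The key point is $\abs{x}\le y$, which holds by the triangle inequality: every weight $w_{\beta,z}(\sigma)$ has modulus $w_{\beta,\lambda}(\sigma)$ because $\beta>0$. \cref{lem:quotient-estimate} then gives additive error at most $2\eta/(1-\eta)\le\varepsilon'$. The cost is two oracle calls with $\eta=\Theta(\varepsilon')$, which fits the stated bound after absorbing the factor $3^{b_{\mathrm{LY}}}$ into the constant.

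For (i), write
\[
  \Pr_{\sigma\sim\mu_{H,\beta,\lambda}}(\sigma_u=+1)
  =\frac{Z_H(\beta,\lambda)-Z_H(\beta,\lambda')}{Z_H(\beta,\lambda)},
\]
where $\lambda'$ is obtained from $\lambda$ by changing only the $u$-coordinate. A direct zero-pinning is unavailable because the lemma requires $\abs{z_v}>0$. Instead, use the fact that $Z_H(\beta,\lambda)$ is affine in $\lambda_u$:
\[
  Z_H(\beta,\lambda)=Z^-+\lambda_uZ^+,
\]
with $Z^-,Z^+>0$. Take $\lambda'_u=\lambda_u/2$ (still in $(0,r]$) and set $D=Z_H(\beta,\lambda)-Z_H(\beta,\lambda')=(\lambda_u/2)Z^+$. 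Then the marginal is $\lambda_uZ^+/Z_H(\beta,\lambda)=2D/Z_H(\beta,\lambda)$.

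The subtraction of the two estimates may lose relative accuracy, but only additive accuracy relative to $y=Z_H(\beta,\lambda)$ is needed. With relative accuracy $\eta$ on both evaluations, and using $Z_H(\beta,\lambda')\le y$, the estimate of $2D$ has error at most $4\eta y$, while $2D\le y$. Applying the argument of \cref{lem:quotient-estimate} with $x=2D$ (so $\abs{x}\le y$) and choosing $\eta=\varepsilon'/12$ gives additive error at most $\varepsilon'$.

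I expect no real obstacle. The only points needing care are the observation $\abs{Z_H(\beta,z)}\le Z_H(\beta,\abs z)$, which lets the relative errors be measured against the positive denominator, and avoiding zero activities in part (i), which the halving trick handles. One more subtlety: the running time in \cref{lem:lee-yang-complex-oracle} may depend on $b$ only through $r$, and we never use $b$ except to ensure $\lambda_v>0$.
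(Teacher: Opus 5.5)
Your proof is correct. Part (ii) is the paper's argument; the bound $\abs{Z_H(\beta,z)}\le Z_H(\beta,\lambda)$ is what lets \cref{lem:quotient-estimate} apply.

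Part (i) takes a different route. The paper never changes $\lambda_u$. It conditions on $\sigma_u=-1$ and deletes $u$, absorbing each disagreeing-edge factor into the neighbours' fields, $\lambda_v\mapsto\beta\lambda_v$ for $v\in N_H(u)$. Because $0<\beta\le1$, these fields stay in $(0,r]$. So $\Pr(\sigma_u=-1)=Z_{H-u}(\beta,\lambda^{u,-})/Z_H(\beta,\lambda)$ is a ratio of two positive partition functions, handled directly by \cref{lem:quotient-estimate}, and the $+1$ marginal is one minus this. The zero-activity obstruction you mention is thus sidestepped by deletion.

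Your finite difference $Z_H(\beta,\lambda)-Z_H(\beta,\lambda')=(\lambda_u/2)Z^+$ uses only multilinearity and monotonicity in $\lambda_u$. It needs no graph surgery and does not use $\beta\le1$ at all. The price is subtracting two comparable estimates. This is harmless only because the target error is additive and you measure everything against $y=Z_H(\beta,\lambda)$. It costs a constant factor in accuracy: relative accuracy $\varepsilon'/12$ per call instead of $\varepsilon'/8$.

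Two cosmetic points. Your first displayed formula for the marginal is correct only for $\lambda'_u=0$; with $\lambda'_u=\lambda_u/2$ the marginal is $2D/y$, as you state later. Also, the partition-function oracle returns complex numbers, so you should take the real part and project onto $[0,1]$ at the end; this cannot increase the error.
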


\begin{proof}
For the first assertion, condition on $\sigma_u=-1$ and delete $u$.  This
replaces the external field $\lambda_v$ of each neighbor
$v\in N_H(u)$ by $\beta\lambda_v$.  More precisely, for
$v\in V(H)\setminus\{u\}$, set
\[
  \lambda^{u,-}_v=
  \begin{cases}
    \beta\lambda_v,&v\in N_H(u),\\
    \lambda_v,&v\notin N_H(u).
  \end{cases}
\]
Then
\begin{equation}
  \Pr_{\sigma\sim\mu_{H,\beta,\lambda}}(\sigma_u=-1)
  =\frac{Z_{H-u}(\beta,\lambda^{u,-})}
         {Z_H(\beta,\lambda)}.
  \label{eq:lee-yang-negative-marginal}
\end{equation}
Since $0<\beta\le1$, every coordinate of $\lambda^{u,-}$ has modulus at most
$r$.  Apply \cref{lem:lee-yang-complex-oracle} to the two partition functions
in \eqref{eq:lee-yang-negative-marginal}, with relative accuracy
$\varepsilon'/8$.  The exact ratio is a probability, so
\cref{lem:quotient-estimate} gives additive error less than
$\varepsilon'$.  Taking the real part and projecting onto $[0,1]$ cannot
increase this error; if the resulting estimate is $\widehat p_-$, return
$1-\widehat p_-$ for the positive-spin marginal in assertion~\textnormal{(i)}.

For the second assertion, \eqref{eq:characteristic-partition-ratio} becomes
\begin{equation}
  \Phi_{\mu_{H,\beta,\lambda},A}(t)
  =\frac{Z_H\bigl(\beta,
        (\lambda_v\e^{\iu ta_v})_{v\in V(H)}\bigr)}
       {Z_H(\beta,\lambda)}.
  \label{eq:lee-yang-characteristic-ratio}
\end{equation}
Every twisted coordinate has modulus $\lambda_v\le r$.  Relative approximations with accuracy
$\varepsilon'/8$, followed by \cref{lem:quotient-estimate}, therefore give
the claimed additive approximation.

For the running time, let $n=\abs{V(H)}$ and let $T_Z(n,\delta)$ be a
worst-case running-time upper bound for the partition-function algorithm in
\cref{lem:lee-yang-complex-oracle} on graphs with at most $n$ vertices.
Each marginal query uses two partition-function calls at relative accuracy
$\varepsilon'/8$, together with $O_\Delta(n+1)$ operations to construct the
deleted graph and its fields.  Each characteristic-function query likewise
uses two calls at accuracy $\varepsilon'/8$, with $O(n+1)$ operations to
construct the twisted fields.  Thus both oracle costs are bounded by
\[
  2T_Z(n,\varepsilon'/8)+O_\Delta(n+1)
  =O_{\Delta,\beta,r}\left(
    (n+1)^{a_{\mathrm{LY}}}(\varepsilon')^{-b_{\mathrm{LY}}}
  \right),
\]
using the stated upper bound for $T_Z$ and $a_{\mathrm{LY}},b_{\mathrm{LY}}\ge1$.
The bound for the characteristic-function oracle is uniform in $t$ in the
stipulated arithmetic model.
\end{proof}

\begin{proof}[Proof of \cref{thm:ising-lee-yang}]
First suppose that $\lambda^\mu,\lambda^\nu\in[b,r]^V$.
\Cref{lem:ising-parameter-separation} supplies a positive separation constant
$\kappa_{\mathrm I}$, while
\cref{cor:lee-yang-ising-oracles} supplies the marginal-probability and
characteristic-function oracles along the entire geometric interpolation path.
The common reduction gives the claimed multiplicative approximation.
For the running time, write $n=\abs V\ge1$; the empty graph is handled in
constant time.  The marginal and characteristic-function accuracies in
\cref{prop:oracle-to-tv} are respectively
$\Theta(\varepsilon\kappa_{\mathrm I}/n)$ and
$\Theta((\varepsilon\kappa_{\mathrm I})^2/n^2)$.
Since $\kappa_{\mathrm I}$ is a positive constant depending only on
$\Delta,\beta,b,r$, substituting the bounds of
\cref{cor:lee-yang-ising-oracles} into that proposition gives
\begin{equation}
  \widetilde O_{\Delta,\beta,b,r}\left(
    \frac{n^{a_{\mathrm{LY}}+b_{\mathrm{LY}}+3}}
         {\varepsilon^{b_{\mathrm{LY}}+1}}
    +\frac{n^{a_{\mathrm{LY}}+2b_{\mathrm{LY}}+1}}
         {\varepsilon^{2b_{\mathrm{LY}}+1}}
  \right).
  \label{eq:lee-yang-total-time}
\end{equation}
The two terms account for thermodynamic integration and
characteristic-function queries, respectively.  The Jackson coefficient
cost $\widetilde O(n^3/\varepsilon^3)$ is absorbed by the second term,
since $a_{\mathrm{LY}},b_{\mathrm{LY}}\ge1$.  If $b_{\mathrm{LY}}\ge2$,
the second term also dominates the first.

Now suppose that
$\lambda^\mu,\lambda^\nu\in[r^{-1},b^{-1}]^V$.  Global spin reversal
satisfies
\[
  w_{\beta,\lambda}(-\sigma)
  =\left(\prod_{v\in V}\lambda_v\right)
    w_{\beta,\lambda^{-1}}(\sigma).
\]
It therefore maps $\mu_{G,\beta,\lambda}$ to
$\mu_{G,\beta,\lambda^{-1}}$ and preserves total variation distance.  The
reciprocal external fields lie in $[b,r]^V$, so the first case applies.
Computing these reciprocal fields takes $O(n)$ operations, which are
absorbed by \eqref{eq:lee-yang-total-time}.  The zero-output characterization
follows from the common reduction in both
cases.
\end{proof}

\section{The algorithm in the uniqueness regime}
\label{sec:uniqueness-extension}

In this section, we verify the two oracle assumptions of
\cref{prop:oracle-to-tv} for
\[
  r=(1-\eta)\lambda_{\mathrm c}(\Delta).
\]
Vertex-occupation marginals are obtained from a local complex neighborhood of
the positive-real interval.  Additive approximations to characteristic
functions are obtained by combining phase damping with an enumeration of
occupation patterns on the large-phase coordinates.

\subsection{Local complex partition-function oracle}
\label{sec:local-complex}

\begin{lemma}
\label{lem:local-complex-oracle}
Fix $\Delta\ge3$, $\eta\in(0,1)$, and $b>0$, and let
$r=(1-\eta)\lambda_{\mathrm c}(\Delta)$.  Assume $b<r$.  There exist constants
$\theta_0=\theta_0(\Delta,\eta,b)\in(0,\pi]$ and
$\eta_0=\eta_0(\Delta,\eta,b)\in(0,1/2)$ and a deterministic
algorithm with the following guarantee.  Let $H$ be any induced subgraph of a
graph of maximum degree at most $\Delta$, let
$\lambda\in[b,r]^{V(H)}$, and suppose
\begin{equation*}
  z_v=\lambda_v\e^{\iu\theta_v},
  \qquad
  \abs{\theta_v}\le\theta_0
  \quad (v\in V(H)).
\end{equation*}
Then $Z_H(z)\ne0$, and, for every $\varepsilon'\in(0,1)$, the algorithm returns a
complex number $\widehat Z_H(z)$ such that
\begin{equation*}
  \abs{\widehat Z_H(z)-Z_H(z)}
  \le\varepsilon'\abs{Z_H(z)}.
\end{equation*}
For $m=\abs{V(H)}\ge1$, its running time is
\[
  \widetilde O_{\Delta,\eta,b}\left(
    m\left(\frac{m}{\varepsilon'}\right)^{\alpha_{\mathrm{HC}}}
  \right),
\]
where $\alpha_{\mathrm{HC}}=\log(\Delta-1)/[-\log(1-\eta_0)]>1$.
If $m=0$, the algorithm returns $1$ in constant time.
\end{lemma}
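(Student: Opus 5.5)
We follow the route sketched in \cref{sec:technical-overview}: a complex contraction on the pruned Weitz self-avoiding-walk tree, then zero-freeness via the deletion identity, then truncation to get an algorithm.

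\emph{Step 1: real contraction.} Work in the potential $\varphi(t)=2\operatorname{arcsinh}\sqrt t$ and set $\mathcal I=[0,\varphi(r)]$. The potential-function analysis of Li et al.~\cite{li2013correlation}, for activities below $\lambda_{\mathrm c}(\Delta)$, gives pointwise gradient bounds for $F^{\varphi}_{a,d}$ on $\mathcal I^d$, for each $d\le\Delta-1$, of the form $\norm{\nabla F^{\varphi}_{a,d}}_1<1$. These are uniform over $a\in[b,r]$ because $r<\lambda_{\mathrm c}(\Delta)$ is a fixed gap. Compactness of $[b,r]\times\mathcal I^d$, with finitely many $d$, upgrades this to a uniform bound $1-2\eta_0$. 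We take $\eta_0<1/4$.

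\emph{Step 2: complex neighborhood.} Each map $(a,\mathbf y)\mapsto F^{\varphi}_{a,d}(\mathbf y)$ is analytic near $[b,r]\times\mathcal I^d$, since $\varphi^{-1}(y)=\sinh^2(y/2)$ is entire and $1+\varphi^{-1}(y_j)$ stays away from $0$ there. The branch of $\varphi$ is well defined near $[0,\varphi(r)]$, with care near $0$: one uses $\varphi(t)\approx 2\sqrt t$ and defines the update in $y$-coordinates directly.

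By continuity of the gradient, there are $\delta,\theta_0>0$ with the following properties. If $\abs{\arg z}\le\theta_0$, $\abs z\in[b,r]$, and $\mathbf y$ lies in the $\delta$-neighborhood $\mathcal I_\delta$ of $\mathcal I$ in $\C$, then $\norm{\nabla F^{\varphi}_{z,d}}_1\le1-\eta_0$. Moreover $F^{\varphi}_{z,d}$ maps $\mathcal I_\delta^d$ into $\mathcal I_\delta$. To see the latter, compare with the real value at $(\abs z,\Re\mathbf y)$: the output stays within $(1-\eta_0)\delta+O(\theta_0)\le\delta$ of $\mathcal I$ once $\theta_0$ is small. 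The second term uses the fact that the $z$-derivative is bounded. Because the gradient bound is a pointwise bound at arbitrary points of the convex set $\mathcal I_\delta^d$, it allows vertex-dependent activities: each tree vertex $x$ uses its own $z_{\pi(x)}$.

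Induction up the SAW tree then gives two facts. First, every $y_x$ lies in $\mathcal I_\delta$; leaves, pinned vertices, and unpinned childless vertices have explicit values $0$ or $\varphi(z_{\pi(x)})$. Second, the root value has a well-defined ratio $R=\varphi^{-1}(y)$ with $\abs{1+R}\ge1/2$, after shrinking $\delta$. The root has up to $\Delta$ children, so we treat $d\le\Delta$ there, using only analyticity and boundedness; contraction is not needed at the root.

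\emph{Step 3: zero-freeness.} Induct on $\abs{V(H)}$ over induced subgraphs. Write $Z_H(z)=Z_{H-v}(z)\,(1+R_v)$, where $R_v=Z_{H-N[v]}(z)\,z_v/Z_{H-v}(z)$. By induction $Z_{H-v}\ne0$, so the ratio is defined. Weitz's identity, which is purely algebraic and holds wherever the intermediate denominators are nonzero (again by induction, since these are partition functions of smaller induced subgraphs with pinnings), identifies $R_v$ with the root ratio of the SAW tree. By Step 2 the factor $1+R_v$ is nonzero, so $Z_H\ne0$.

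\emph{Step 4: algorithm and error.} Order the vertices $v_1,\dots,v_m$ and telescope:
\[
Z_H(z)=\prod_{i}\bigl(1+R_{v_i}(H_i)\bigr),\qquad H_i=H-\{v_1,\dots,v_{i-1}\}.
\]
Estimate each $R$ by truncating the SAW tree at depth $L$, initializing the truncated leaves at $y=0$. By contraction the root error in $y$ is $O((1-\eta_0)^{L})$. Because $\varphi^{-1}$ is Lipschitz on the bounded set $\mathcal I_\delta$ and $\abs{1+R}\ge1/2$, this gives relative error $O((1-\eta_0)^L)$ per factor. Taking $L=\lceil\log(Cm/\varepsilon')/(-\log(1-\eta_0))\rceil$ gives total relative error at most $\varepsilon'$. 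Each tree has $O((\Delta-1)^L)=O((m/\varepsilon')^{\alpha_{\mathrm{HC}}})$ nodes, which gives the stated running time for the $m$ factors.

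That $\alpha_{\mathrm{HC}}>1$ requires $(\Delta-1)(1-\eta_0)>1$. If this fails, decrease $\eta_0$; a weaker contraction rate is still a valid bound.

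\textbf{Main obstacle.} Step 2 is the hard part: obtaining a uniform complex neighborhood that is forward-invariant under updates with differing activities at each vertex. The pointwise gradient bound on a convex product domain, together with the compactness perturbation argument, is what I would try first. A secondary delicate point is analyticity of $\varphi$ near $t=0$.
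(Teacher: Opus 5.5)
Your proposal is correct and follows essentially the same route as the paper. The shared steps are: uniform real contraction for $\varphi(t)=2\operatorname{arcsinh}\sqrt t$ by compactness over $[b,r]\times\mathcal I^d$; an analytic tube on which the complex update contracts and differs from the real update by $O(\abs{z-a})$; a separate bounded-derivative estimate at the root giving $\abs{1+R}\ge 1/2$; induction over induced subgraphs via the deletion identity for zero-freeness; and depth-$L$ truncation plus telescoping for the algorithm. The differences are cosmetic: you prove forward-invariance of the tube $\mathcal I_\delta$ directly, whereas the paper bounds $\abs{y_x-y_x^{\mathrm R}}\le C\xi/\eta_0$ against the real recursion with activities $\lambda_{\pi(x)}$ (your version covers the truncated recursion a bit more cleanly), and your worry about analyticity at $t=0$ is moot because $\varphi$ is only applied to $z\prod_j(1+R_j)^{-1}$, whose modulus is bounded below, while $\varphi^{-1}$ is entire.
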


We first introduce the self-avoiding-walk tree of Weitz \cite{weitz2006counting}.
\begin{lemma}[\cite{weitz2006counting}]
\label{lem:saw-tree-representation}
Fix a finite graph $G$, an ordering of the neighbors of every vertex, and
$v\in V(G)$.  One can construct a finite tree
$T=T_{\mathrm{SAW}}(G,v)$, rooted at $x_0$, with the standard Weitz boundary
condition and an endpoint map
\[
  \pi:V(T)\longrightarrow V(G),
  \qquad \pi(x_0)=v,
\]
so every tree vertex, and in particular every leaf, corresponds to a vertex
of $G$ and inherits its external field.  Given a feasible pinning
$\tau\in\{-1,+1\}^{\Lambda}$ on
$\Lambda\subseteq V(G)\setminus\{v\}$, stop a branch when it first reaches a
vertex $x$ with $\pi(x)\in\Lambda$ and pin $x$ to $\tau(\pi(x))$.  Denote this
induced leaf pinning, together with the standard Weitz boundary condition, by
$\widehat\tau$.  Then, for every vector $\lambda$ of positive external fields,
under the hard-core measures with the corresponding external fields,
\[
  \Pr\bigl(v\text{ is occupied in }G\mid\tau\bigr)
  =
  \Pr\bigl(x_0\text{ is occupied in }T\mid\widehat\tau\bigr).
\]
Equivalently, the corresponding root occupation ratios $R_{G,v}^{\tau}$ and
$R_{T,x_0}^{\widehat\tau}$ satisfy
$R_{G,v}^{\tau}=R_{T,x_0}^{\widehat\tau}$.  Indeed, the neighbor-by-neighbor
telescoping argument is an identity of rational functions in formal vertex
activities, so it also holds after nonuniform complex substitutions whenever
all relevant denominators are nonzero.  If $G$ has
maximum degree at most $\Delta$, every non-root vertex of $T$ has at most
$\Delta-1$ children.
\end{lemma}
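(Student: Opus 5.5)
We prove the identity by induction on the number of unpinned vertices, using Weitz's telescoping recursion for the root occupation ratio. We then observe that every step is an algebraic identity, which gives the complex extension. The degree bound is read off the construction at the end.

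\textbf{Step 1 (one-step recursion in $G$).} Let $v$ be unpinned with $\tau$ feasible, and let $u_1,\dots,u_d$ be its unpinned neighbours in the fixed order. If some neighbour of $v$ is pinned occupied, then $R^\tau_{G,v}=0$ and the corresponding tree leaf forces the root ratio to $0$ as well. Otherwise we form the auxiliary graph $G'$. It is obtained from $G$ by replacing $v$ with $d$ copies $v_1,\dots,v_d$, where $v_i$ is adjacent only to $u_i$, and each copy carries activity $\lambda_v^{1/d}$ (formally, any factorisation $\prod_i\lambda_{v_i}=\lambda_v$ works). In $G'$ we telescope:
\[
  R^\tau_{G,v}=\prod_{i=1}^d\frac{Z(v_1..v_i\text{ occ},\ v_{i+1}..v_d\text{ unocc})}{Z(v_1..v_{i-1}\text{ occ},\ v_i..v_d\text{ unocc})}.
\]
The $i$th factor is the occupation ratio of $v_i$ in $G'-\{v_j:j\ne i\}$ under the pinning $\tau_i$. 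This pinning extends $\tau$ by declaring $v_1,\dots,v_{i-1}$ occupied and $v_{i+1},\dots,v_d$ unoccupied; equivalently, in $G-v$ the vertices $u_1,\dots,u_{i-1}$ are forced unoccupied. Since $v_i$ is a leaf hanging off $u_i$, each factor equals $\lambda_{v_i}/(1+R^{\tau_i}_{G-v,u_i})$. Multiplying the factors gives
\[
  R^\tau_{G,v}=F_{\lambda_v,d}\bigl(R^{\tau_1}_{G-v,u_1},\dots,R^{\tau_d}_{G-v,u_d}\bigr).
\]
This is exactly the Weitz boundary rule: the $i$th child subtree of $x_0$ sees the earlier cycle-closing copies of $v$ pinned one way and the later copies pinned the other.

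\textbf{Step 2 (induction).} The tree satisfies the same recursion at its root by the definition of $T$ and $\widehat\tau$. By construction, the $i$th child subtree of $x_0$ is $T_{\mathrm{SAW}}(G-v,u_i)$ with the pinning $\widehat{\tau_i}$. Since $G-v$ has fewer unpinned vertices, the induction hypothesis identifies the child ratios. The base cases are a pinned vertex and an isolated unpinned vertex, where both ratios are $0$ or $\lambda_v$. The occupation probabilities then agree because $\Pr(\text{occ})=R/(1+R)$.

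\textbf{Step 3 (complex and nonuniform activities).} Each identity above is an equality of rational functions in formal indeterminates $(\lambda_w)_{w\in V}$. The only cleared denominators are $1+R$ terms and partition functions. Hence the identity persists under any complex substitution for which all these denominators are nonzero. The $d$th roots used in Step 1 can be avoided by distributing $\lambda_v$ entirely onto $v_1$, since only the product enters.

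\textbf{Step 4 (degree bound).} A non-root tree vertex $x$ corresponds to a self-avoiding walk ending at $\pi(x)$. The parent of $x$ corresponds to the previous walk vertex, which is a neighbour of $\pi(x)$ and cannot be revisited. So $x$ has at most $\deg(\pi(x))-1\le\Delta-1$ children.

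\textbf{Main obstacle.} The only delicate point is the bookkeeping in Step 1. We must check that the occupied/unoccupied assignment to earlier and later copies matches Weitz's rule for pinning cycle-closing leaves according to the neighbour order. The remaining steps are routine.
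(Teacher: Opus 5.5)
Your proposal is correct and is essentially Weitz's neighbor-by-neighbor telescoping argument. The paper does not reprove that argument; it cites it and adds exactly your Step~3 observation, namely that the identity is one of rational functions in the vertex activities and therefore survives nonuniform complex substitutions whenever all denominators are nonzero.

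The one point to phrase more carefully is the claim in Step~2 that the $i$th child subtree of $x_0$ is $T_{\mathrm{SAW}}(G-v,u_i)$ under $\widehat{\tau_i}$ ``by construction''. This holds only after pruning: a node over $u_j$ with $j<i$ is forced unoccupied by its occupied cycle-closing copy of $v$, and a copy reached via $u_j$ with $j>i$ is unoccupied and contributes a factor $1$. Keeping the copies $v_j$, $j\ne i$, as pinned vertices of the recursive instance, instead of passing to $G-v$, makes the correspondence literal.
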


Li et al.~\cite{li2013correlation} established correlation decay for
the hard-core model throughout the uniqueness regime.  Liu et
al.~\cite[Section~2.3]{liuSinclairSrivastava2019correlation} subsequently lifted this
analysis to a small complex neighborhood, obtaining both zero-freeness and
complex correlation-decay estimates.  Building on these results, we derive
below a deterministic approximation algorithm for the partition function,
uniformly for the nonuniform external fields considered here.

\begin{proof}[Proof of \cref{lem:local-complex-oracle}]
For positive real external fields, consider the pruned SAW tree.  We use the
standard convention that an occupied boundary leaf forces its parent to be
unoccupied, after which that parent's descendants are discarded; hence every
remaining pinned vertex has occupation ratio $R_x=0$.  For each remaining
tree vertex $x$, let $T_x$ be the descendant subtree rooted at $x$, with the
inherited external fields and boundary condition, and let $p_x^{T_x}(s)$ be the
marginal probability that the occupation indicator of $x$ equals
$s\in\{0,1\}$ in $T_x$.  Thus $s=1$ corresponds to spin $+1$ (occupied), and
$s=0$ corresponds to spin $-1$ (unoccupied).  Its
occupation ratio $R_x$ is
\[
  R_x:=\frac{p_x^{T_x}(1)}{p_x^{T_x}(0)}.
\]
If an unpinned vertex $x$ has children $x_1,\ldots,x_d$, then
\begin{equation}
  R_x=F_{\lambda_{\pi(x)},d}(R_{x_1},\ldots,R_{x_d}),
  \qquad
  F_{a,d}(R_1,\ldots,R_d)
  :=a\prod_{j=1}^d(1+R_j)^{-1}.
  \label{eq:local-complex-ratio-recurrence}
\end{equation}
Every non-root vertex has $d\leq\Delta-1$, and every real occupation ratio
$R_x$ satisfies $0\leq R_x\leq r$.

We use the potential and transformed recurrence
\[
  \varphi(t)=2\operatorname{arcsinh}\sqrt t,
  \qquad
  \varphi^{-1}(y)=\sinh^2(y/2),
  \qquad
  F_{z,d}^{\varphi}
  =\varphi\circ F_{z,d}\circ
    (\varphi^{-1},\ldots,\varphi^{-1}).
\]
On the analytic branches agreeing with the positive-real ones,
\begin{equation}
  F_{z,d}^{\varphi}(y_1,\ldots,y_d)
  =2\operatorname{arcsinh}\left(
    \sqrt z\prod_{j=1}^d\operatorname{sech}(y_j/2)
  \right).
  \label{eq:local-complex-transformed-recurrence}
\end{equation}
Put $\mathcal I=[0,\varphi(r)]$.  By
\cite[Theorem~2.14]{liuSinclairSrivastava2019correlation}, for each $a\in[b,r]$ and
$d\leq\Delta-1$,
\[
  \norm{\nabla F_{a,d}^{\varphi}(y)}_1<1-\eta_{a,d}
  \qquad(y\in\mathcal I^d),
\]
where the case $d=0$ is immediate.  The function on the left is continuous,
$[b,r]\times\mathcal I^d$ is compact, and there are only finitely many $d$.
Consequently, there is a single
$\eta_0=\eta_0(\Delta,\eta,b)\in(0,1/2)$ for which
\begin{equation}
  \norm{\nabla F_{a,d}^{\varphi}(y)}_1\leq1-2\eta_0
  \quad
  (a\in[b,r],\ y\in\mathcal I^d,\ d\leq\Delta-1).
  \label{eq:local-complex-real-contraction}
\end{equation}
Notice that this is a local estimate: $a$ is the external field of the current
vertex, so the children may have different external fields.

For real $a\in[b,r]$, $y_j\in\mathcal I$, and $d\leq\Delta$, we also have
\[
  \frac{b}{(1+r)^d}
  \leq F_{a,d}(\varphi^{-1}(y_1),\ldots,\varphi^{-1}(y_d))
  \leq r.
\]
For $0\leq d\leq\Delta$, put
\[
  \mathcal K_d:=[b,r]\times\mathcal I^d.
\]
Each $\mathcal K_d$ is compact, and there are only finitely many $d$.  Since
$\varphi^{-1}(\mathcal I)=[0,r]$, for $(a,\mathbf y)\in\mathcal K_d$ we have
\[
  \cosh(y_j/2)>0
  \quad(1\leq j\leq d),
  \qquad
  \sqrt{\frac{b}{(1+r)^d}}
  \leq
  \sqrt a\prod_{j=1}^d\operatorname{sech}(y_j/2)
  \leq\sqrt r.
\]
Thus the square-root and $\operatorname{arcsinh}$ arguments in
\eqref{eq:local-complex-transformed-recurrence} lie in compact subsets of
$(0,\infty)$ on the real sets $\mathcal K_d$, while the denominators
$\cosh(y_j/2)$ are bounded away from zero.  After shrinking the neighborhoods
of these sets, the arguments remain in the open right half-plane, where we take
the principal square root and the principal $\operatorname{arcsinh}$.  Hence
the maps $F_{z,d}^{\varphi}$ are jointly analytic on a common open neighborhood
of the finitely many sets $\mathcal K_d$.

With $\eta_0$ fixed as in \eqref{eq:local-complex-real-contraction}, continuity
of the gradients, the compactness of each $\mathcal K_d$, and the finite number
of possible values of $d$ allow us to choose uniform constants $\rho,\delta>0$ so that the
corresponding closed tubes remain in this common analytic neighborhood and the
first bound below holds.  Let
\[
  \mathcal D_\rho
  :=\{y\in\C:\operatorname{dist}(y,\mathcal I)\leq\rho\}.
\]
After fixing $\rho$ and $\delta$, choose $C=C(\Delta,\eta,b)\geq1$ large enough
to bound uniformly the relevant $z$-derivatives and message derivatives of
$F_{z,d}^{\varphi}$, as well as $(\varphi^{-1})'$.  If $a\in[b,r]$,
$\abs{z-a}\leq\delta$, and $y\in\mathcal D_\rho^d$, the maps in
\eqref{eq:local-complex-transformed-recurrence} are analytic on a neighborhood
of these points and
\begin{align}
  \sum_{j=1}^d
    \abs{\frac{\partial F_{z,d}^{\varphi}}{\partial y_j}(y)}
  &\leq1-\eta_0
  &&(d\leq\Delta-1),                                      \label{eq:local-complex-complex-contraction}\\
  \abs{F_{z,d}^{\varphi}(y)-F_{a,d}^{\varphi}(y)}
  &\leq C\abs{z-a}
  &&(d\leq\Delta),                                        \label{eq:local-complex-external-field-perturbation}\\
  \sum_{j=1}^d
    \abs{\frac{\partial F_{z,d}^{\varphi}}{\partial y_j}(y)}
  &\leq C
  &&(d\leq\Delta).                                        \label{eq:local-complex-root-derivative}
\end{align}
After decreasing $\rho$ if necessary, $\varphi^{-1}$ is $C$-Lipschitz on
$\mathcal D_\rho$ and
\begin{equation}
  1+\varphi^{-1}(y)=\cosh^2(y/2)\neq0
  \qquad(y\in\mathcal D_\rho).
  \label{eq:local-complex-denominator}
\end{equation}
This establishes the uniform compactness-and-analyticity step needed for the
nonuniform external fields and the boundary message $0$.

Fix $u\in V(H)$, construct $T_{\mathrm{SAW}}(H,u)$, and denote its root by
$x_0$, so that $\pi(x_0)=u$.  Starting at its leaves,
let $y_x^{\mathrm R}$ and $y_x$ be the transformed recursion values obtained from
\eqref{eq:local-complex-transformed-recurrence} using, respectively,
$\lambda_{\pi(x)}$ and $z_{\pi(x)}$; a remaining pinned leaf has transformed
recursion values $y_x=y_x^{\mathrm R}=0$.
Set
\[
  \xi:=\max_{v\in V(H)}\abs{z_v-\lambda_v}
  \leq r\theta_0,
  \qquad
  t:=\frac{C\xi}{\eta_0}.
\]
Choose $\theta_0\in(0,\pi]$ so small that
\[
  r\theta_0\leq\delta,
  \qquad
  \frac{Cr\theta_0}{\eta_0}\leq\frac\rho2.
\]
Proceed by induction from the leaves.  Let $x$ be a non-root vertex and assume
that $y_{x_j}\in\mathcal D_\rho$ and
$\abs{y_{x_j}-y_{x_j}^{\mathrm R}}\leq t$ for every child $x_j$.  Set
\[
  a_x:=\lambda_{\pi(x)},
  \qquad z_x:=z_{\pi(x)},
  \qquad
  \mathbf y:=(y_{x_1},\ldots,y_{x_d}),
  \qquad
  \mathbf y^{\mathrm R}:=(y_{x_1}^{\mathrm R},\ldots,y_{x_d}^{\mathrm R}).
\]
Insert $F_{z_x,d}^{\varphi}(\mathbf y^{\mathrm R})$ between
$y_x=F_{z_x,d}^{\varphi}(\mathbf y)$ and
$y_x^{\mathrm R}=F_{a_x,d}^{\varphi}(\mathbf y^{\mathrm R})$.  Since
$\mathcal D_\rho$ is convex, the mean-value formula,
\eqref{eq:local-complex-complex-contraction}, and
\eqref{eq:local-complex-external-field-perturbation} give
\begin{equation}
\begin{aligned}
  \abs{y_x-y_x^{\mathrm R}}
  &\leq
    \abs{F_{z_x,d}^{\varphi}(\mathbf y)
          -F_{z_x,d}^{\varphi}(\mathbf y^{\mathrm R})}
    +\abs{F_{z_x,d}^{\varphi}(\mathbf y^{\mathrm R})
          -F_{a_x,d}^{\varphi}(\mathbf y^{\mathrm R})} \\
  &\leq(1-\eta_0)
      \max_j\abs{y_{x_j}-y_{x_j}^{\mathrm R}}+C\abs{z_x-a_x} \\
  &\leq(1-\eta_0)t+C\xi=t.
\end{aligned}
  \label{eq:local-complex-transformed-value-error}
\end{equation}
Here and below, a maximum over no children is understood to be zero.  The last
equality uses $t=C\xi/\eta_0$.
Moreover, $t\leq\rho/2$ and $y_x^{\mathrm R}\in\mathcal I$, so
$y_x\in\mathcal D_\rho$.  This completes the induction.

The root may have $\Delta$ children.  Using
\eqref{eq:local-complex-external-field-perturbation} and
\eqref{eq:local-complex-root-derivative} at this final update gives
\[
  \abs{y_{x_0}-y_{x_0}^{\mathrm R}}\leq C(t+\xi).
\]
Decrease $\theta_0$, if necessary, so that
\[
  Cr\theta_0\left(\frac{C}{\eta_0}+1\right)\leq\rho.
\]
Then the displayed root error is at most $\rho$; since
$y_{x_0}^{\mathrm R}\in\mathcal I$, it follows that
$y_{x_0}\in\mathcal D_\rho$.  We may now use the Lipschitz property of
$\varphi^{-1}$ to obtain a constant $C_0=C_0(\Delta,\eta,b)$ such that the
recursive root ratios
$\widetilde R_{x_0}(z):=\varphi^{-1}(y_{x_0})$ and
$R_{x_0}(\lambda):=\varphi^{-1}(y_{x_0}^{\mathrm R})$ satisfy
\[
  \abs{\widetilde R_{x_0}(z)-R_{x_0}(\lambda)}\leq C_0\xi.
\]
Decrease $\theta_0$ once more so that $C_0r\theta_0\leq1/2$.  Equations
\eqref{eq:local-complex-denominator} and
\eqref{eq:local-complex-transformed-recurrence} show that all recursive ratios
$\widetilde R_x(z)$ at unpinned vertices satisfy the ordinary ratio recurrence
with nonzero denominators.
Moreover, since $R_{x_0}(\lambda)\geq0$,
\begin{equation}
  \abs{1+\widetilde R_{x_0}(z)}
  \geq1+R_{x_0}(\lambda)
       -\abs{\widetilde R_{x_0}(z)-R_{x_0}(\lambda)}
  \geq\frac12.
  \label{eq:local-complex-root-lower-bound}
\end{equation}

We now use the induction pattern underlying
\cite[Theorem~2.16]{liuSinclairSrivastava2019correlation}.  That theorem is stated for a common
external field; the following argument proves the coordinatewise extension.
The only change is to replace the scalar pair at a tree vertex $x$ by
$(\lambda_{\pi(x)},z_{\pi(x)})$, for which the uniform estimates above give
\eqref{eq:local-complex-transformed-value-error}.

For completeness, induct on the number of vertices of an induced subgraph
$J$.  If $J$ is nonempty and $u\in V(J)$, again denote the root of the SAW tree
for $(J,u)$ by $x_0$.  The induction hypothesis makes
\[
  R_{J,u}(z)
  :=z_u\frac{Z_{J-N_J[u]}(z)}{Z_{J-u}(z)}
\]
well defined.  For every recursive state $(K,w)$ below the root of this SAW
tree, $K$, $K-w$, and $K-N_K[w]$ are proper induced subgraphs of $J$.  Hence
the induction hypothesis gives
$Z_K(z),Z_{K-w}(z),Z_{K-N_K[w]}(z)\neq0$, and
\[
  1+R_{K,w}(z)=\frac{Z_K(z)}{Z_{K-w}(z)}\neq0.
\]
Thus every denominator in the complex SAW recurrence is nonzero, and its
boundary values and updates agree with those defining the recursive ratios
$\widetilde R_x(z)$.  The complex SAW-tree
identity therefore identifies
$R_{J,u}(z)$ with $\widetilde R_{x_0}(z)$.  Hence
\eqref{eq:local-complex-root-lower-bound} and the deletion identity give
\[
  Z_J(z)=Z_{J-u}(z)(1+R_{J,u}(z))\neq0.
\]
The empty graph starts the induction.  Thus the same $\theta_0$ gives
zero-freeness for every admissible induced subgraph.

It remains to describe the approximation algorithm.  To approximate
$R_{H,u}(z)$, truncate the pruned SAW tree at depth $L\geq1$.  Set
$y_x=y_x^{\mathrm R}=0$ at every new leaf and apply the recursions toward the
root.  The real values stay in $\mathcal I$.  The preceding comparison with
the real recursion therefore shows that all complex values stay in
$\mathcal D_\rho$; the root is again handled separately.  Put
\[
  B:=\sup_{y\in\mathcal D_\rho}\abs y<\infty.
\]
At each new leaf created by the truncation, setting its transformed recursion
value to zero introduces an error of at most $B$.  From these leaves up to the
root, the exact and truncated recursions use the same external fields, so
there is no external-field perturbation term: each of the $L-1$
non-root levels contracts the discrepancy by $1-\eta_0$, while the root map and
$\varphi^{-1}$ are
uniformly Lipschitz.  Thus the error is bounded by a constant times
$(1-\eta_0)^{L-1}$.  Absorbing $(1-\eta_0)^{-1}$ into the constant, we obtain an
$A=A(\Delta,\eta,b)\geq1$ such that
\begin{equation}
  \abs{\widehat R_{H,u}(z)-R_{H,u}(z)}
  \leq A(1-\eta_0)^L.
  \label{eq:local-complex-truncation-error}
\end{equation}

Let $m=\abs{V(H)}$.  If $m=0$, return $1$.  Otherwise, order the vertices as
$v_1,\ldots,v_m$ and let $H_j=H-\{v_1,\ldots,v_{j-1}\}$.  Repeated use of the
deletion identity gives
\begin{equation}
  Z_H(z)=\prod_{j=1}^m(1+R_{H_j,v_j}(z)).
  \label{eq:local-complex-self-reduction}
\end{equation}
Choose
\[
  L=\max\left\{1,
    \left\lceil\frac{\log(8Am/\varepsilon')}{-\log(1-\eta_0)}\right\rceil
  \right\}
  =O_{\Delta,\eta,b}\left(\log\frac{m+1}{\varepsilon'}\right)
\]
so that the right-hand side of
\eqref{eq:local-complex-truncation-error} is at most
$\varepsilon'/(8m)$.  Write $R_j=R_{H_j,v_j}(z)$ and let $\widehat R_j$ be its
truncated-tree approximation.  By
\eqref{eq:local-complex-root-lower-bound},
\[
  e_j:=\frac{\widehat R_j-R_j}{1+R_j}
  \quad\text{satisfies}\quad
  \abs{e_j}\leq\frac{\varepsilon'}{4m}.
\]
Therefore the output
\[
  \widehat Z_H(z):=\prod_{j=1}^m(1+\widehat R_j)
\]
satisfies
\[
  \abs{\frac{\widehat Z_H(z)}{Z_H(z)}-1}
  \leq\prod_{j=1}^m(1+\abs{e_j})-1
  \leq\exp(\varepsilon'/4)-1
  \leq\varepsilon'.
\]

A depth-$L$ SAW tree has at most
\[
  1+\Delta\sum_{\ell=0}^{L-1}(\Delta-1)^\ell
  =1+\frac{\Delta((\Delta-1)^L-1)}{\Delta-2}
  =O_\Delta((\Delta-1)^L)
\]
vertices before pruning.  Generate the truncated tree directly by a
depth-first traversal, keeping the current path and fixed neighbor orders.
At each node, scanning the path to detect closed cycles and determine their
Weitz boundary conditions takes $O_\Delta(L)$ time.  An occupied boundary
leaf pins its parent to zero, so that parent's descendants need not be
generated.  Each recursion update takes $O_\Delta(1)$ arithmetic operations.
Preparing each induced subgraph $H_j$ costs $O_\Delta(m)$ time even if it is
built afresh.  Thus the $m$ ratio computations and the final product take
\[
  O_\Delta\left(m^2+mL(\Delta-1)^L\right)
\]
time.  Put
\[
  \alpha_{\mathrm{HC}}
  =\frac{\log(\Delta-1)}{-\log(1-\eta_0)}>1.
\]
By the choice of $L$,
\[
  (\Delta-1)^L
  \le(\Delta-1)
      \left(\frac{8Am}{\varepsilon'}\right)^{\alpha_{\mathrm{HC}}}.
\]
Since $A$ and $\eta_0$ depend only on $\Delta,\eta,b$,
$L=O_{\Delta,\eta,b}(\log((m+1)/\varepsilon'))$, and
$\alpha_{\mathrm{HC}}>1$, the total running time is
\[
  \widetilde O_{\Delta,\eta,b}\left(
    m\left(\frac{m}{\varepsilon'}\right)^{\alpha_{\mathrm{HC}}}
  \right),
\]
as claimed.
\end{proof}

\begin{corollary}
\label{cor:real-marginal-oracle}
Under the hypotheses and notation of \cref{lem:local-complex-oracle}, let
$u\in V(H)$, write $m=\abs{V(H)}\ge1$, and let
$\lambda\in[b,r]^{V(H)}$.  For every
$\varepsilon'\in(0,1)$, the occupation probability
$\Pr_{\sigma\sim\mu_{H,\lambda}}(\sigma_u=+1)$ can be approximated to
additive error $\varepsilon'$ in time
\[
  \widetilde O_{\Delta,\eta,b}\left(
    m^{1+\alpha_{\mathrm{HC}}}
    (\varepsilon')^{-\alpha_{\mathrm{HC}}}
  \right).
\]
\end{corollary}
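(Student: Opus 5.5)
We follow the same template as the marginal assertion of \cref{cor:lee-yang-ising-oracles}: write the marginal as a ratio of two partition functions at positive real fields, approximate both with the local complex oracle of \cref{lem:local-complex-oracle} (with all phases zero), and convert relative to additive error with \cref{lem:quotient-estimate}.

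\textbf{The identity.} Conditioning on $u$ being unoccupied deletes $u$ without changing any field, so
\[
  \Pr_{\sigma\sim\mu_{H,\lambda}}(\sigma_u=-1)
  =\frac{Z_{H-u}(\lambda|_{V(H)\setminus\{u\}})}{Z_H(\lambda)}.
\]
Unlike the Ising case, no neighbor fields are modified. The graph $H-u$ is again an induced subgraph of a graph of maximum degree at most $\Delta$, and its fields still lie in $[b,r]$. Hence \cref{lem:local-complex-oracle} applies to both numerator and denominator with $\theta_v=0$. If $m=1$, the numerator is the empty-graph value $1$, which the lemma returns exactly.

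\textbf{Error analysis.} Call the oracle on both partition functions with relative accuracy $\varepsilon'/8$. Both values are positive reals, and the numerator $x=Z_{H-u}$ and denominator $y=Z_H$ satisfy $0<x\le y$. Therefore
\[
  \abs{\widehat x-x}\le(\varepsilon'/8)x\le(\varepsilon'/8)y,
  \qquad
  \abs{\widehat y-y}\le(\varepsilon'/8)y .
\]
By \cref{lem:quotient-estimate}, the quotient $\widehat x/\widehat y$ is within $(\varepsilon'/4)/(1-\varepsilon'/8)<\varepsilon'$ of the true unoccupation probability. Taking the real part and projecting onto $[0,1]$ cannot increase this error. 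Subtracting the result from $1$ gives the occupation probability with additive error below $\varepsilon'$.

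\textbf{Running time.} There are two oracle calls on graphs with at most $m$ vertices, each at accuracy $\varepsilon'/8$. By \cref{lem:local-complex-oracle}, each costs
\[
  \widetilde O_{\Delta,\eta,b}\bigl(m(8m/\varepsilon')^{\alpha_{\mathrm{HC}}}\bigr)
  =\widetilde O_{\Delta,\eta,b}\bigl(m^{1+\alpha_{\mathrm{HC}}}(\varepsilon')^{-\alpha_{\mathrm{HC}}}\bigr),
\]
since the factor $8^{\alpha_{\mathrm{HC}}}$ depends only on $\Delta,\eta,b$. Building $H-u$ takes $O_\Delta(m)$ time, which is absorbed.

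There is no real obstacle here. The only points to check are that $H-u$ remains within the oracle's hypotheses (it does, since that lemma is uniform over induced subgraphs) and that the monotonicity $Z_{H-u}\le Z_H$ supplies the hypothesis $\abs{x}\le y$ of \cref{lem:quotient-estimate}.
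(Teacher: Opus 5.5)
Your proof is correct and follows the paper's argument. Both make two zero-phase calls to \cref{lem:local-complex-oracle} at relative accuracy $\varepsilon'/8$, then apply \cref{lem:quotient-estimate} and project onto $[0,1]$. The only cosmetic difference is that you estimate the unoccupation probability $Z_{H-u}(\lambda)/Z_H(\lambda)$ and subtract it from $1$. The paper instead works directly with the occupation ratio $\lambda_u Z_{H-N_H[u]}(\lambda)/Z_H(\lambda)$. Both numerators are dominated by $Z_H(\lambda)$, so the quotient lemma applies in the same way.
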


\begin{proof}
For positive external fields,
\[
  p=\Pr_{\sigma\sim\mu_{H,\lambda}}(\sigma_u=+1)
  =\lambda_u\frac{Z_{H-N_H[u]}(\lambda)}{Z_H(\lambda)}.
\]
Set
\[
  x:=\lambda_u Z_{H-N_H[u]}(\lambda),
  \qquad
  y:=Z_H(\lambda),
\]
so that $p=x/y$.
Apply \cref{lem:local-complex-oracle}, with all phases zero, to the two
partition functions with relative accuracy $\varepsilon'/8$.  Let $\widehat x$
be $\lambda_u$ times the approximation to $Z_{H-N_H[u]}(\lambda)$, and let
$\widehat y$ be the approximation to $Z_H(\lambda)$.  Then
$\abs{\widehat x-x}\le(\varepsilon'/8)x\le(\varepsilon'/8)y$ and
$\abs{\widehat y-y}\le(\varepsilon'/8)y$.  Since $p\le1$,
\cref{lem:quotient-estimate} gives error less than $\varepsilon'$.
Taking the real part and projecting it onto $[0,1]$ cannot increase the
error.  Constructing $H-N_H[u]$ and the restricted field vector takes
$O_\Delta(m)$ time.  There are two partition-function oracle calls, each
with relative accuracy $\varepsilon'/8$ on a graph with at most $m$
vertices; an empty residual graph is handled in constant time.  By
\cref{lem:local-complex-oracle}, these calls take total time
\[
  \widetilde O_{\Delta,\eta,b}\left(
    m\left(\frac{8m}{\varepsilon'}\right)^{\alpha_{\mathrm{HC}}}
  \right).
\]
The remaining arithmetic takes constant time, so the preprocessing and both
oracle calls are covered by the stated bound.
\end{proof}

\subsection{All-phase characteristic-function oracle}
\label{sec:global-phase}

\begin{lemma}
\label{lem:phase-damping}
Fix an integer $\Delta\ge0$ and real numbers $0<b<r<\infty$.  Let
$G=(V,E)$ have maximum degree at most $\Delta$, let
$\lambda\in[b,r]^V$, and let
$\theta=(\theta_v)_{v\in V}\in[-\pi,\pi]^V$.  Fix
$\alpha\in(0,\pi]$ and let
$W_\alpha=\set{v\in V:\abs{\theta_v}\ge\alpha}$.  There is a constant
$\gamma=\gamma(\Delta,b,r,\alpha)>0$ such that
\begin{equation}
  \abs{
    \E_{\sigma\sim\mu_{G,\lambda}}
      \exp\!\left(\iu\sum_{v\in V}
        \theta_v\one_{\{\sigma_v=+1\}}\right)
  }
  \le\exp(-\gamma\abs{W_\alpha}).
  \label{eq:phase-damping}
\end{equation}
\end{lemma}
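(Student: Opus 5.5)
The plan is to extract a family of mutually non-interacting large-phase vertices, condition on everything else, and use independence to turn the expectation into a product of single-vertex factors, each bounded away from $1$ in modulus.

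\emph{Step 1: selection.} Work in the square graph $G^2$, where $u\sim v$ whenever $\operatorname{dist}_G(u,v)\le2$. It has maximum degree at most $\Delta^2$. Greedily choose $S\subseteq W_\alpha$ independent in $G^2$, so that
\[
  \abs S\ge\frac{\abs{W_\alpha}}{\Delta^2+1}.
\]
Distinct vertices of $S$ are then non-adjacent and share no common neighbour.

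\emph{Step 2: conditioning.} Write $\Lambda=V\setminus S$ and condition on $\sigma_\Lambda$. Under the hard-core measure, the spins $(\sigma_v)_{v\in S}$ are conditionally independent, because no edge joins two vertices of $S$. For each $v\in S$:
\begin{itemize}[leftmargin=2.4em]
  \item if some neighbour of $v$ is occupied, $\sigma_v$ is forced to be unoccupied;
  \item otherwise $\sigma_v=+1$ with probability $p_v=\lambda_v/(1+\lambda_v)\in[b/(1+b),\,r/(1+r)]$.
\end{itemize}
Hence
\[
  \E\Bigl[\e^{\iu\sum_{v\in S}\theta_v\one_{\{\sigma_v=+1\}}}\Bigm|\sigma_\Lambda\Bigr]
  =\prod_{v\in S}f_v,
\]
where $f_v=1$ if $v$ is blocked, and $f_v=(1-p_v)+p_v\e^{\iu\theta_v}$ if $v$ is free.

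\emph{Step 3: one-site damping.} For a free vertex, the elementary identity $\abs{(1-p)+p\e^{\iu\theta}}^2=1-2p(1-p)(1-\cos\theta)$ applies. Using $\abs{\theta_v}\in[\alpha,\pi]$, this gives $\abs{f_v}\le\e^{-c_1}$ with
\[
  c_1=p_*(1-p^*)(1-\cos\alpha)>0,
\]
where $p_*=b/(1+b)$ and $p^*=r/(1+r)$.

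The contribution of $\Lambda$ is the unimodular factor $\e^{\iu\sum_{v\in\Lambda}\theta_v\one_{\{\sigma_v=+1\}}}$, which depends only on $\sigma_\Lambda$. Taking absolute values inside the outer expectation therefore gives
\[
  \abs{\E[\cdots]}\le\E\bigl[\e^{-c_1 N}\bigr],
  \qquad N=\#\{v\in S\text{ free}\}.
\]

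\emph{Step 4: many free vertices (main obstacle).} It remains to bound $\E\bigl[\e^{-c_1N}\bigr]$. A vertex $v$ is free exactly when $N_G(v)$ is fully unoccupied, an event determined by $\sigma_\Lambda$.

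Conditional on any spins outside $N_G(v)$, all of $N_G(v)$ is unoccupied with probability at least $(1+r)^{-\Delta}$. This follows by revealing the vertices of $N_G(v)$ one at a time, each unoccupied with conditional probability at least $1/(1+r)$. The neighbourhoods $N_G(v)$, $v\in S$, are pairwise disjoint because $S$ is independent in $G^2$. Exposing them sequentially and conditioning on all other spins, each $v\in S$ is free with conditional probability at least $q=(1+r)^{-\Delta}$, whatever happened before. Hence $N$ stochastically dominates $\mathrm{Bin}(\abs S,q)$, and
\[
  \E\bigl[\e^{-c_1N}\bigr]\le\bigl(1-q(1-\e^{-c_1})\bigr)^{\abs S}.
\]

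Care is needed here: the sequential-exposure and domination argument must be carried out properly, since conditioning on $S$-spins is avoided by working with the marginal law of $\sigma_\Lambda$ and using the spatial Markov property. Combining the bounds gives
\[
  \gamma=\frac{-\log\bigl(1-q(1-\e^{-c_1})\bigr)}{\Delta^2+1}>0,
\]
which depends only on $\Delta,b,r,\alpha$. The case $\Delta=0$ is the product case, where every vertex is free.
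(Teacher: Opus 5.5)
Your proof is correct, but it takes a different route from the paper's. The paper packs centers at mutual distance at least four, takes the closed neighborhoods $N_G[v]$ as blocks, and conditions on everything outside the blocks. Inside each block, the all-unoccupied configuration and the configuration with only $v$ occupied are always feasible, and each has conditional probability at least $p_0=\min\{1,b\}/(1+r)^{\Delta+1}$. The symmetrization identity $1-\abs{\E[U_v\mid\mathcal F]}^2=\tfrac12\E[\abs{U_v-U_v'}^2\mid\mathcal F]$ then gives a per-block factor $\rho<1$ for every outside configuration, so no second probabilistic step is needed.

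You instead condition only on $V\setminus S$, which is essentially the argument the paper uses for its Ising damping lemma. This gives exact Bernoulli factors but leaves the blocking of each center as a random event. You then control the number of free centers by exposing the disjoint neighborhoods $N_G(v_j)$ one at a time. That step is sound as you first state it. Each event $\{N_G(v_j)\text{ unoccupied}\}$ has conditional probability at least $(1+r)^{-\Delta}$ given any feasible configuration outside $N_G(v_j)$, including the $S$-spins. The earlier events are measurable with respect to those spins, so iterated conditioning gives $\E[\e^{-c_1N}]\le(1-q(1-\e^{-c_1}))^{\abs S}$ directly. Your caveat about avoiding conditioning on $S$-spins is therefore unnecessary.

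Your route gives a better constant. The packing factor is $\Delta^2+1$ instead of $D_\Delta\approx\Delta^3$. The damping is linear in $(1+r)^{-\Delta-1}$ rather than quadratic in $p_0$. Hence $\gamma$ is larger and the running-time exponent $\chi_{\mathrm{HC}}=(\log 2)/\gamma$ is smaller. The cost is the extra domination argument.

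The paper's main-theorem proof uses $\gamma<\log 2$ for the particular constant produced in this lemma. Your constant still satisfies this, since $-\log(1-q(1-\e^{-c_1}))\le c_1\le 1/2$.
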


\begin{proof}
Greedily choose a maximal set $S\subseteq W_\alpha$ in which every pair of
distinct vertices is at graph distance at least four.  A radius-three
ball contains at most
\begin{equation*}
  D_\Delta
  =1+\Delta+\Delta(\Delta-1)+\Delta(\Delta-1)^2
\end{equation*}
vertices, so maximality gives $\abs S\ge\abs{W_\alpha}/D_\Delta$.

For $v\in S$, let $B_v=N_G[v]$.  These blocks are pairwise disjoint, and no
edge joins two different blocks; otherwise their centers would be at distance
at most three.  Let $\sigma\sim\mu_{G,\lambda}$, and let
$\mathcal F_{\mathrm{out}}$ be the sigma-algebra generated by the spins
outside $\bigcup_{v\in S}B_v$.  Conditional on any feasible configuration of
these outside spins, the conditional Gibbs distribution factors across the
blocks.  Inside $B_v$, both the all-negative configuration and the
configuration in which only $v$ is positive remain feasible.  The conditional
block partition function is at most
$(1+r)^{\abs{B_v}}\le(1+r)^{\Delta+1}$, so each of these two
configurations has conditional probability at least
\begin{equation*}
  p_0=\frac{\min\{1,b\}}{(1+r)^{\Delta+1}}.
\end{equation*}

For the conditional spins in $B_v$, define
\[
  U_v=\exp\!\left(
    \iu\sum_{u\in B_v:\,\sigma_u=+1}\theta_u
  \right),
\]
and, conditional on $\mathcal F_{\mathrm{out}}$, let $U_v'$ be an independent
copy of $U_v$.  Let $A_v$ be the event that all spins in $B_v$ are negative,
and let $C_v$ be the event that $v$ is positive and every other spin in $B_v$
is negative.  By the preceding paragraph,
\[
  \Pr(A_v\mid\mathcal F_{\mathrm{out}}),
  \Pr(C_v\mid\mathcal F_{\mathrm{out}})\ge p_0.
\]
Moreover, $U_v=1$ on $A_v$ and $U_v=\e^{\iu\theta_v}$ on $C_v$.
Since $\abs{U_v}=1$ and the two copies are conditionally independent,
\begin{equation*}
\begin{aligned}
  1-\abs{\E[U_v\mid\mathcal F_{\mathrm{out}}]}^2
  &=\frac12\E[\abs{U_v-U_v'}^2
                  \mid\mathcal F_{\mathrm{out}}] \\
  &\ge\frac12\,(2p_0^2)
       \abs{1-\e^{\iu\theta_v}}^2 \\
  &=4p_0^2\sin^2(\theta_v/2)
   \ge4p_0^2\sin^2(\alpha/2).
\end{aligned}
\end{equation*}
Indeed, the inequality keeps only the two ordered cases in which one copy is
in $A_v$ and the other is in $C_v$.  The last inequality follows from
$v\in S\subseteq W_\alpha$ and $\theta_v\in[-\pi,\pi]$.
Since $p_0<1/2$, every conditional characteristic-function factor has modulus
at most
\[
  \rho
  =\sqrt{1-4p_0^2\sin^2(\alpha/2)},
  \qquad 0<\rho<1.
\]
Conditional independence shows that the conditional product of all block
factors has modulus at most $\rho^{\abs S}$.  The phase outside the blocks is
$\mathcal F_{\mathrm{out}}$-measurable and has modulus one, and averaging over
$\mathcal F_{\mathrm{out}}$ cannot increase the bound.  Hence
\eqref{eq:phase-damping} holds with
$\gamma=-\log(\rho)/D_\Delta$.
\end{proof}

\begin{lemma}
\label{lem:additive-phase-oracle}
Fix $\Delta\ge3$, $\eta\in(0,1)$, and $b>0$, and let
$r=(1-\eta)\lambda_{\mathrm c}(\Delta)$.  Assume $b<r$.  Let
$G=(V,E)$ have maximum degree at most $\Delta$, write $n=\abs V$, let
$\lambda\in[b,r]^V$, and let
$\theta=(\theta_v)_{v\in V}$ be a vector of real phases.  For every
$\varepsilon'\in(0,1)$, there is a deterministic algorithm returning a complex number
$\widehat\Phi_\lambda(\theta)$ such that
\begin{equation}
  \abs{
    \widehat\Phi_\lambda(\theta)
    -\E_{\sigma\sim\mu_{G,\lambda}}
       \exp\!\left(\iu\sum_{v\in V}
             \theta_v\one_{\{\sigma_v=+1\}}\right)
  }
  \le\varepsilon'.
  \label{eq:additive-phase-oracle}
\end{equation}
Let $T(n,\delta)\ge n$ bound the running time of the algorithm in
\cref{lem:local-complex-oracle} on graphs with at most $n$ vertices and
relative accuracy $\delta$.
For $n\ge1$, the running time is
\[
  O_{\Delta,\eta,b}\left(
    \left(\frac1{\varepsilon'}\right)^{(\log 2)/\gamma}
    T\left(n,\frac{\varepsilon'}{64}\right)
  \right),
\]
where $\theta_0$ is as in
\cref{lem:local-complex-oracle}, and $\gamma>0$ is the damping constant from
\cref{lem:phase-damping} with $\alpha=\theta_0/2$.
If $n=0$, the algorithm returns $1$ in constant time.
\end{lemma}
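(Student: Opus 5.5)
The algorithm first reduces each $\theta_v$ modulo $2\pi$ to $\overline\theta_v\in[-\pi,\pi]$. This leaves the exponential unchanged because the exponent multiplies an indicator. Set $z_v=\lambda_v\e^{\iu\overline\theta_v}$, so the target equals $Z_G(z)/Z_G(\lambda)$ by \eqref{eq:characteristic-partition-ratio}. Then form $W=\set{v:\abs{\overline\theta_v}\ge\theta_0/2}$ and apply \cref{lem:phase-damping} with $\alpha=\theta_0/2$ to get the damping constant $\gamma$. The proof splits into two cases according to whether $\abs W>\gamma^{-1}\log(8/\varepsilon')$.

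\emph{Large $W$.} Here \cref{lem:phase-damping} gives $\abs{\Phi}\le\e^{-\gamma\abs W}<\varepsilon'/8$, so returning $0$ satisfies \eqref{eq:additive-phase-oracle}. The cost is $O(n)$ time to compute $W$.

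\emph{Small $W$.} We use the decomposition \eqref{eq:tech-overview-large-phase-decomposition}. It is proved by splitting each independent set $I$ of $G$ as $J=I\cap W$ and $I\setminus W$. The second part is an independent set of $H_J=G-(W\cup N_G(J))$, and every such pair recombines to an independent set of $G$. Every vertex of $H_J$ lies outside $W$, so its phase satisfies $\abs{\overline\theta_v}<\theta_0/2\le\theta_0$. Since $H_J$ is an induced subgraph, \cref{lem:local-complex-oracle} applies: it returns $\widehat Z_{H_J}$ with relative error $\varepsilon'/64$, and $1$ for empty $H_J$. The numerator estimate is $\widehat N=\sum_J z^J\widehat Z_{H_J}$, with $z^J$ computed exactly. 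Its error is at most $(\varepsilon'/64)\sum_J\abs{z^J}\abs{Z_{H_J}(z)}$. Bound each term by the triangle inequality:
\[
\abs{z^J}\abs{Z_{H_J}(z)}\le\lambda^J Z_{H_J}(\lambda).
\]
Summing over $J$, the same decomposition at the real activities $\lambda$ gives
\[
\sum_J\lambda^J Z_{H_J}(\lambda)=Z_G(\lambda).
\]
So $\abs{\widehat N-Z_G(z)}\le(\varepsilon'/64)Z_G(\lambda)$, where we write $x=Z_G(z)$ and $y=Z_G(\lambda)$, with $\abs{x}\le y$. Compute $\widehat y$ via \cref{lem:local-complex-oracle} with zero phases and accuracy $\varepsilon'/64$. \cref{lem:quotient-estimate} then gives error $2(\varepsilon'/64)/(1-\varepsilon'/64)<\varepsilon'$. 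Return $\widehat\Phi_\lambda(\theta)=\widehat N/\widehat y$.

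\emph{Running time.} Enumerating subsets of $W$, at most
\[
2^{\abs W}\le 2\,(8/\varepsilon')^{(\log2)/\gamma}=O_{\Delta,\eta,b}\bigl((1/\varepsilon')^{(\log 2)/\gamma}\bigr)
\]
of them, we check independence and build each $H_J$ in $O_\Delta(n)$ time. This is absorbed into $T(n,\varepsilon'/64)\ge n$, and each oracle call costs at most $T(n,\varepsilon'/64)$ by monotonicity in the number of vertices. The constant $\gamma$ depends only on $\Delta,b,r,\theta_0$, which are all functions of $(\Delta,\eta,b)$, so the bound matches the statement.

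There is no deep obstacle. The step needing the most care is the error accounting, which avoids cancellation blow-up: we bound against $Z_G(\lambda)$ rather than $\abs{Z_G(z)}$, using the positivity identity. We also need the choice of threshold on $\abs W$ to make the enumeration polynomial in $1/\varepsilon'$.
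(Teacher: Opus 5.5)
Your proposal is correct and follows the paper's proof essentially step for step: the principal-phase reduction, the threshold $\gamma^{-1}\log(8/\varepsilon')$ on $\abs W$ from \cref{lem:phase-damping}, and the exact decomposition over $J\in\cI(G[W])$ are all the paper's. So are the positivity bound $\sum_J\abs{z^JZ_{H_J}(z)}\le Z_G(\lambda)$ that prevents cancellation blow-up, relative accuracy $\varepsilon'/64$ for every oracle call, and \cref{lem:quotient-estimate} for the final division. The remaining differences are cosmetic: you drop the ceiling on the threshold, and you don't single out the $n=0$ case, which your algorithm already handles by returning $1/1$.
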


\begin{proof}
If $n=0$, return $1$.  Otherwise, assume $n\ge1$.
Write
\[
  \Phi_\lambda(\theta)
  =\E_{\sigma\sim\mu_{G,\lambda}}
      \exp\!\left(\iu\sum_{v\in V}
        \theta_v\one_{\{\sigma_v=+1\}}\right).
\]
Let $\overline\theta=(\overline\theta_v)_{v\in V}\in[-\pi,\pi]^V$ be the
coordinatewise principal representative of $\theta$.  Since the integrand is
$2\pi$-periodic in every coordinate,
$\Phi_\lambda(\theta)=\Phi_\lambda(\overline\theta)$.  Define the large-phase set
\begin{equation*}
  W=\set{v\in V:\abs{\overline\theta_v}\ge\theta_0/2},
\end{equation*}
where $\theta_0$ is the local-tube radius from
\cref{lem:local-complex-oracle}.  Thus the phase at every vertex outside $W$
has modulus strictly less than $\theta_0/2$, whereas the phases at vertices in
$W$ have distance at least $\theta_0/2$ from $2\pi\mathbb Z$.

Apply \cref{lem:phase-damping} with
$\overline\theta$ and $\alpha=\theta_0/2$, and let $\gamma>0$ be its damping
constant.  Define
\begin{equation*}
  K
  =\left\lceil\gamma^{-1}\log(8/\varepsilon')\right\rceil.
\end{equation*}
If $\abs W>K$, the algorithm returns zero.  By
\eqref{eq:phase-damping} and the equality
$\Phi_\lambda(\theta)=\Phi_\lambda(\overline\theta)$,
\[
  \abs{\Phi_\lambda(\theta)}
  \le\e^{-\gamma\abs W}<\varepsilon'/8.
\]

It remains to consider $\abs W\le K$.  Define
$z_v=\lambda_v\e^{\iu\overline\theta_v}$ and partition the feasible
configurations according to
$J=\set{v\in W:\sigma_v=+1}$.  For every $J\in\cI(G[W])$, define
\begin{equation*}
  H_J=G-\bigl(W\cup N_G(J)\bigr).
\end{equation*}
Then the exact decomposition is
\begin{equation}
  Z_G(z)
  =\sum_{J\in\cI(G[W])}
      z^J Z_{H_J}(z|_{V(H_J)}).
  \label{eq:bad-set-decomposition}
\end{equation}
Every vertex of $H_J$ is outside $W$, so its phase is at most
$\theta_0/2$ in modulus.  The safety margin in
\cref{lem:local-complex-oracle} therefore applies to every internal partition
function in \eqref{eq:bad-set-decomposition}.

Relative errors in these terms are not amplified by cancellation.  Indeed,
\begin{equation}
\begin{aligned}
  \sum_{J\in\cI(G[W])}
    \abs{z^J Z_{H_J}(z|_{V(H_J)})}
  &\le\sum_{J\in\cI(G[W])}
       \lambda^J Z_{H_J}(\lambda|_{V(H_J)}) \\
  &=Z_G(\lambda).
\end{aligned}
  \label{eq:no-cancellation-amplification}
\end{equation}
The equality is the positive-real version of the same decomposition according
to the positive spins in $W$.
Approximate every internal complex partition function to relative error
$\varepsilon'/64$.  Summing these errors and using
\eqref{eq:no-cancellation-amplification} bounds the total additive error in the
numerator by $(\varepsilon'/64)Z_G(\lambda)$.  Approximate the positive denominator
$Z_G(\lambda)$ to the same relative accuracy.  More explicitly, define
\[
  Z_\theta=Z_G(z),\qquad Z_0=Z_G(\lambda),
\]
let $\widehat Z_\theta$ be the enumerated sum of the approximate internal
partition functions, and let $\widehat Z_0$ be the denominator approximation.
Since $\Phi_\lambda(\theta)=Z_\theta/Z_0$, the algorithm returns
$\widehat\Phi_\lambda(\theta)=\widehat Z_\theta/\widehat Z_0$.
Then $\abs{\widehat Z_\theta-Z_\theta}\le(\varepsilon'/64)Z_0$,
$\abs{\widehat Z_0-Z_0}\le(\varepsilon'/64)Z_0$, and
$\abs{Z_\theta}\le Z_0$.
\Cref{lem:quotient-estimate} bounds the quotient error by
$2(\varepsilon'/64)/(1-\varepsilon'/64)\le\varepsilon'/16$.

There are at most
\begin{equation}
  2^{\abs W}
  \le2^K
  =2^{\lceil\gamma^{-1}\log(8/\varepsilon')\rceil}
  \le2\left(\frac8{\varepsilon'}\right)^{(\log 2)/\gamma}
  \label{eq:bad-set-enumeration-cost}
\end{equation}
subsets to inspect.  The algorithm enumerates them, discards those that are not
independent, constructs $H_J$, and invokes the local partition-function oracle.
The principal-phase reduction and construction of $W$ take $O_\Delta(n)$ time.
Inspecting each subset, constructing $H_J$, and computing $z^J$ take
$O_\Delta(n)$ time per subset.  There are at most $2^{\abs W}$ internal oracle
calls and one denominator call.  Each call uses relative accuracy
$\varepsilon'/64$ on a graph with at most $n$ vertices, and therefore takes
at most $T(n,\varepsilon'/64)$ time, with empty graphs handled in constant time.

The work outside the oracle calls takes $O_\Delta(n2^{\abs W})$ time.
By \eqref{eq:bad-set-enumeration-cost}, the total running time is
\[
  O_{\Delta,\eta,b}\left(
    \left(\frac1{\varepsilon'}\right)^{(\log 2)/\gamma}
    T\left(n,\frac{\varepsilon'}{64}\right)
  \right).
\]
\end{proof}

\begin{corollary}
\label{cor:uniqueness-characteristic}
Fix $\Delta\ge3$, $\eta\in(0,1)$, and $b>0$, and let
$r=(1-\eta)\lambda_{\mathrm c}(\Delta)$.  Assume $b<r$.  Let
$G=(V,E)$ have maximum degree at most $\Delta$, write $n=\abs V$, let
$\lambda\in[b,r]^V$, let
$\mu=\mu_{G,\lambda}$, and let
$a_v\in[-1,1]$ and
$A(\sigma)=\sum_{v:\,\sigma_v=+1}a_v$.  For every real frequency $t$ and
every $\varepsilon'\in(0,1)$, one can deterministically compute
$\widehat\Phi_{\mu,A}(t)$ satisfying
\begin{equation*}
  \abs{\widehat\Phi_{\mu,A}(t)-\Phi_{\mu,A}(t)}
  \le\varepsilon'
\end{equation*}
in time
\[
  \widetilde O_{\Delta,\eta,b}\left(
    n^{1+\alpha_{\mathrm{HC}}}
    (\varepsilon')^{-(\alpha_{\mathrm{HC}}+\chi_{\mathrm{HC}})}
  \right)
  \qquad(n\ge1),
\]
where
\[
  \alpha_{\mathrm{HC}}:=\frac{\log(\Delta-1)}{-\log(1-\eta_0)}>1,
  \qquad
  \chi_{\mathrm{HC}}:=\frac{\log 2}{\gamma}
\]
and $\eta_0$ is as in \cref{lem:local-complex-oracle}, while
$\gamma$ is the damping constant in
\cref{lem:additive-phase-oracle}.  Both exponents depend only on
$\Delta,\eta,b$.  If $n=0$, the algorithm returns $1$ in constant time.
\end{corollary}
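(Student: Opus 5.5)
The corollary follows by specializing \cref{lem:additive-phase-oracle} to the phase vector $\theta_v=ta_v$ and substituting the running-time bound of \cref{lem:local-complex-oracle}. For $n=0$ we return $1$, which is exact because the only configuration is empty. For $n\ge1$ we first form the phase vector $\theta=(ta_v)_{v\in V}$ in $O(n)$ arithmetic operations.

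By the definition of $A$,
\[
  \Phi_{\mu,A}(t)
  =\E_{\sigma\sim\mu}\exp\Bigl(\iu t\sum_{v:\,\sigma_v=+1}a_v\Bigr)
  =\E_{\sigma\sim\mu_{G,\lambda}}\exp\Bigl(\iu\sum_{v\in V}\theta_v\one_{\{\sigma_v=+1\}}\Bigr).
\]
This is exactly the quantity approximated in \eqref{eq:additive-phase-oracle}. \Cref{lem:additive-phase-oracle} places no restriction on the size of the real phases, because it reduces them modulo $2\pi$ internally. Invoking it with accuracy $\varepsilon'$ therefore returns $\widehat\Phi_{\mu,A}(t)$ with the required additive error, and the correctness claim needs no further work.

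For the running time, \cref{lem:additive-phase-oracle} gives
$O_{\Delta,\eta,b}\bigl((1/\varepsilon')^{\chi_{\mathrm{HC}}}\,T(n,\varepsilon'/64)\bigr)$, with $\chi_{\mathrm{HC}}=(\log 2)/\gamma$. By \cref{lem:local-complex-oracle}, we may take
\[
  T(n,\delta)=\widetilde O_{\Delta,\eta,b}\bigl(n(n/\delta)^{\alpha_{\mathrm{HC}}}\bigr).
\]
This is a valid bound for all graphs with at most $n$ vertices, since the bound is monotone in the vertex count and empty residual graphs cost $O(1)$. It also satisfies $T(n,\delta)\ge n$. Substituting $\delta=\varepsilon'/64$ turns $64^{\alpha_{\mathrm{HC}}}$ into a constant depending only on $\Delta,\eta,b$. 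We get
\[
  \widetilde O_{\Delta,\eta,b}\bigl(n^{1+\alpha_{\mathrm{HC}}}(\varepsilon')^{-(\alpha_{\mathrm{HC}}+\chi_{\mathrm{HC}})}\bigr).
\]
The $O(n)$ preprocessing is absorbed into this bound.

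The only point requiring care is that the exponents depend only on $\Delta,\eta,b$. The constant $\eta_0$ depends only on these by \cref{lem:local-complex-oracle}. The constant $\gamma=\gamma(\Delta,b,r,\theta_0/2)$ from \cref{lem:phase-damping} also depends only on them, because $r=(1-\eta)\lambda_{\mathrm c}(\Delta)$ and $\theta_0=\theta_0(\Delta,\eta,b)$. The $\widetilde O$ hides the logarithmic factor coming from the tree depth $L=O(\log(n/\varepsilon'))$.
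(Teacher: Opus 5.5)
Your proposal is correct and matches the paper's proof. Both invoke \cref{lem:additive-phase-oracle} with $\theta_v=ta_v$, relying on its internal reduction of phases modulo $2\pi$, then substitute the bound of \cref{lem:local-complex-oracle} at relative accuracy $\varepsilon'/64$, absorbing the $O(n)$ phase construction and the factor $64^{\alpha_{\mathrm{HC}}}$ into the $\widetilde O_{\Delta,\eta,b}$ bound; your extra checks that $T(n,\delta)\ge n$ is monotone in the vertex count and that $\gamma$ depends only on $\Delta,\eta,b$ fill in details the paper leaves implicit.
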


\begin{proof}
Use \cref{lem:additive-phase-oracle} with $\theta_v=ta_v$ and denote its
output by $\widehat\Phi_{\mu,A}(t)$.
\eqref{eq:characteristic-partition-ratio} identifies the requested
characteristic function with the corresponding partition-function ratio.
The all-phase oracle reduces the phases modulo $2\pi$ internally.
For $n\ge1$, constructing the phase vector takes $O(n)$ time, followed by
one call to the all-phase oracle with additive accuracy $\varepsilon'$.
Combining its enumeration bound with
\cref{lem:local-complex-oracle} gives total time
\[
  O(n)+\widetilde O_{\Delta,\eta,b}\left(
    (\varepsilon')^{-\chi_{\mathrm{HC}}}
    n\left(\frac{64n}{\varepsilon'}\right)^{\alpha_{\mathrm{HC}}}
  \right),
\]
which is the claimed bound.  When $n=0$, the statistic is identically zero,
so its characteristic function is $1$.
\end{proof}

\subsection{Proof of
  \texorpdfstring{\cref{thm:uniqueness-main}}{Theorem 1.3}}
\label{sec:main-proof}

\begin{proof}[Proof of \cref{thm:uniqueness-main}]
Let $r=(1-\eta)\lambda_{\mathrm c}(\Delta)$.
\Cref{cor:real-marginal-oracle}, obtained by specializing
\cref{lem:local-complex-oracle} to zero phase, supplies the
marginal-probability oracle required by \cref{prop:oracle-to-tv}.
\Cref{cor:uniqueness-characteristic}, based on the all-phase construction in
\cref{lem:additive-phase-oracle}, supplies the characteristic-function oracle
required there.  Both guarantees are uniform for graphs of
maximum degree at most $\Delta$ and external fields in $[b,r]$.
The separation hypothesis holds with $\kappa=b/5000$ by
\cref{lem:parameter-lower-bound}.

Applying \cref{prop:oracle-to-tv} now gives the stated deterministic
multiplicative approximation.  For its running time, assume $n\ge1$ and put
$q=\varepsilon\kappa$.  The two oracle accuracies in that proposition are
\[
\begin{aligned}
  \delta_p&=\frac{q}{128n}=\Theta(q/n),\\
  \delta_\phi&=
    \frac{q}{1536n\lceil384n/q\rceil+2q}
    =\Theta(q^2/n^2).
\end{aligned}
\]
The explicit bounds in \cref{cor:real-marginal-oracle} and
\cref{cor:uniqueness-characteristic} therefore give
\[
\begin{aligned}
  T_{\widehat p}(n,\delta_p)
  &=\widetilde O_{\Delta,\eta,b}\left(
    n^{1+2\alpha_{\mathrm{HC}}}q^{-\alpha_{\mathrm{HC}}}
  \right),\\
  T_{\widehat\phi}(n,\delta_\phi)
  &=\widetilde O_{\Delta,\eta,b}\left(
    n^{1+3\alpha_{\mathrm{HC}}+2\chi_{\mathrm{HC}}}
    q^{-2\alpha_{\mathrm{HC}}-2\chi_{\mathrm{HC}}}
  \right).
\end{aligned}
\]
The thermodynamic-integration and Fourier-coefficient costs are
$\widetilde O_{\Delta,\eta,b}
(n^{4+2\alpha_{\mathrm{HC}}}/\varepsilon^{1+\alpha_{\mathrm{HC}}})$
and $\widetilde O(n^3/\varepsilon^3)$, respectively.
Both are dominated by the cost of the characteristic-function queries.
Indeed, with the phase threshold $\theta_0/2\le\pi/2$, the choice in the
proof of \cref{lem:phase-damping} has $\rho>1/\sqrt2$ and hence
$\gamma=-\log(\rho)/D_\Delta<\log2$, so $\chi_{\mathrm{HC}}>1$.
Together with $\alpha_{\mathrm{HC}}>1$, $n\ge1$, and $\varepsilon<1$,
this gives the total running time
\begin{equation}
  \widetilde O_{\Delta,\eta,b}\left(
    \frac{n^{2+3\alpha_{\mathrm{HC}}+2\chi_{\mathrm{HC}}}}
         {\varepsilon^{1+2\alpha_{\mathrm{HC}}+2\chi_{\mathrm{HC}}}}
  \right),
  \label{eq:hardcore-total-time}
\end{equation}
where $\kappa=b/5000$ has been absorbed into the dependence on the fixed
parameters.  Since $\alpha_{\mathrm{HC}}$ and $\chi_{\mathrm{HC}}$ also depend
only on these parameters, this is
$\poly_{\Delta,\eta,b}(n,1/\varepsilon)$.
If $n=0$, the two input field vectors are empty and the algorithm returns
$0$ in constant time.
The positivity argument in the proof of \cref{prop:oracle-to-tv}
gives the zero-output characterization.
This completes the proof.
\end{proof}

\section{The antiferromagnetic Ising correlation-decay regime}
\label{sec:ising-correlation-decay-regime}

Throughout this section, fix $\Delta\ge3$ and $\beta>1$ with
$\beta\ne\Delta/(\Delta-2)$, and let
$[\lambda_-,\lambda_+]$ be the fixed interval in
\cref{thm:ising-correlation-decay}.  Thus this interval is contained in one
connected component of $\mathcal U_{\beta,\Delta}$.

For a graph $H=(V(H),E(H))$, a set $W\subseteq V(H)$, and a pinning
$\tau\in\{-1,+1\}^W$, define the pinned partition function, with the external
fields at the pinned vertices omitted, by
\begin{equation}
  Z_H^\tau(\beta,z)
  =
  \sum_{\substack{\sigma\in\{-1,+1\}^{V(H)}\\
                  \sigma|_W=\tau}}
    \beta^{\abs{\set{\{u,v\}\in E(H):\sigma_u\ne\sigma_v}}}
    \prod_{\substack{v\in V(H)\setminus W:\\ \sigma_v=+1}}z_v,
  \qquad
  z\in\C^{V(H)\setminus W}.
  \label{eq:ising-pinned-partition-function}
\end{equation}

\subsection{Local complex partition-function oracle}
\label{sec:ising-cd-local-complex}

\begin{lemma}
\label{lem:ising-cd-local-complex-oracle}
Fix the constants above.  There exist constants
$\theta_0\in(0,\pi]$ and $\eta_0\in(0,1/2)$, depending only on
$\Delta,\beta,\lambda_-,\lambda_+$, and a
deterministic algorithm with the following guarantee.  Let $H$ be any graph
of maximum degree at most $\Delta$, write $n=\abs{V(H)}$, let
$W\subseteq V(H)$, and let
$\tau\in\{-1,+1\}^W$.  Suppose that
\[
  z_v=\lambda_v\e^{\iu\theta_v},
  \qquad
  \lambda_v\in[\lambda_-,\lambda_+],
  \qquad
  \abs{\theta_v}\le\theta_0
  \quad(v\in V(H)\setminus W).
\]
Then $Z_H^\tau(\beta,z)\ne0$, and, for every
$\varepsilon'\in(0,1)$, the algorithm returns a complex number
$\widehat Z_H^\tau(\beta,z)$ such that
\[
  \abs{\widehat Z_H^\tau(\beta,z)-Z_H^\tau(\beta,z)}
  \le\varepsilon'\abs{Z_H^\tau(\beta,z)}.
\]
For $n\ge1$, its running time is
\[
  O_{\Delta,\beta,\lambda_-,\lambda_+}\left(
    n^2\left(\frac n{\varepsilon'}\right)^{\alpha_{\mathrm I}}
  \right).
\]
Here $\alpha_{\mathrm I}:=\log(\Delta-1)/[-\log(1-\eta_0)]>1$.
The guarantee is uniform over $H$, $W$, and $\tau$, with $n$ counting both
pinned and unpinned vertices.  If $n=0$, the algorithm returns $1$ in constant
time.
\end{lemma}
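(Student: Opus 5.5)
The proof follows the template of \cref{lem:local-complex-oracle}, with the hard-core recursion replaced by the two-spin (antiferromagnetic Ising) recursion.

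\textbf{Pinnings and the recursion.} First absorb the pinning $\tau$ into the external fields. Each pinned vertex $w$ contributes a factor $\beta$ to every edge joining it to an unpinned neighbour of opposite spin. Deleting $W$ and multiplying $z_v$ by $\beta^{\pm1}$ for each pinned neighbour therefore expresses $Z_H^\tau(\beta,z)$ as $\beta^{c(\tau)}$ times an unpinned partition function on $H-W$; here $c(\tau)$ counts the disagreeing edges inside $W$. The modified fields, however, leave $[\lambda_-,\lambda_+]$, so I would rather keep the pinning inside the SAW tree. For the two-spin system, Weitz's \cref{lem:saw-tree-representation} holds verbatim for arbitrary pinnings; it is again an identity of rational functions. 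The ratio $R_x=Z_x^{(+)}/Z_x^{(-)}$ then satisfies
\[
  R_x=F_{z,d}(\mathbf R)=z\prod_{j=1}^d\frac{\beta+R_j}{1+\beta R_j}.
\]
Pinned leaves carry the ratio $0$ or $\infty$. To avoid $\infty$, I would absorb each pinned child directly into its parent's field as the constant factor $\beta^{\pm1}$, so that every remaining tree vertex is unpinned and has bounded messages.

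\textbf{Real contraction and its complex extension.} Use the potential function of Li, Lu and Yin~\cite{li2013correlation} for antiferromagnetic two-spin systems; as in \cite{liuSinclairSrivastava2019correlation}, take $\varphi(R)=\log R$ or their $\operatorname{arcsinh}$-type potential. This gives $\norm{\nabla F^\varphi_{a,d}}_1<1$ for $a$ in the fixed interval and $d\le\Delta-1$, on the compact interval of real messages. That interval is bounded away from $0$ and $\infty$ because $\beta>1$ forces $R\in[a\beta^{-d},a\beta^d]$.

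One care point: the field at a vertex is $a\beta^{k}$ after absorbing its pinned children, which may leave the uniqueness interval. Contraction, however, is needed only with respect to the unpinned children, so the analysis applies to $F_{a,d'}$ with $d'\le d$ free children times a bounded constant. Equivalently, the effective field lies in the compact set $\{a\beta^k: a\in[\lambda_-,\lambda_+],\ \abs k\le\Delta\}$. The point to verify is that the contraction estimate of \cite{li2013correlation} for a vertex of total degree at most $\Delta$ covers such vertices.

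Given real contraction, compactness and finiteness of $d$ yield a uniform $\eta_0$. Analyticity then gives a complex tube with constants $\rho,\delta,C$ satisfying the analogues of \eqref{eq:local-complex-complex-contraction}--\eqref{eq:local-complex-root-derivative}. The induction comparing $y_x$ with $y_x^{\mathrm R}$ then works exactly as before, with $t=C\xi/\eta_0$ and a separate bound at the root.

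\textbf{Zero-freeness, approximation, and the main obstacle.} Zero-freeness follows by induction on the number of unpinned vertices, using the identity
\[
  Z_J^\tau=Z_J^{\tau\cup\{u\mapsto-\}}\bigl(1+R_{J,u}^\tau\bigr)
\]
and the root bound $\abs{1+\widetilde R_{x_0}}\ge\tfrac12$. Here $R\ge0$ on the real side. Since real ratios are bounded, one may equally work with $\Re$ of the complex ratio being near a positive real. Note that pinning $u$ to $-$ keeps the instance inside the same class, because $(H,W\cup\{u\},\tau')$ is again admissible. This is precisely why the lemma is stated uniformly over pinnings.

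The algorithm telescopes $Z_H^\tau$ by successively pinning the unpinned vertices to $-1$. At the end, the fully pinned partition function is $\beta^{\#\text{disagreements}}$, computed exactly. Each ratio is computed on a depth-$L$ truncated SAW tree with $L=O(\log(n/\varepsilon'))$. Each of the $n$ steps rebuilds an instance in $O(n)$ time and explores $(\Delta-1)^L$ tree nodes; this accounts for the $n^2(n/\varepsilon')^{\alpha_{\mathrm I}}$ bound.

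The main obstacle is the real contraction step: obtaining a uniform potential-function contraction that tolerates absorbed pinned-neighbour factors and nonuniform fields in the fixed component of $\mathcal U_{\beta,\Delta}$. I would first try to quote \cite[Theorem~2.14-type statement]{liuSinclairSrivastava2019correlation} for antiferromagnetic Ising, and argue compactness over $a$ and the effective degree.
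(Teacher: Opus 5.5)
Your overall architecture matches the paper's: the ratio recursion with the pinned neighbours folded into a factor $\beta^{\iota}$, a potential plus compactness and an analytic tube, the real/complex comparison with a separate root step, zero-freeness by induction on unpinned vertices via $Z^\tau=Z^{\tau\cup\{u=-1\}}(1+R)$, truncation, and telescoping down to $\beta^{q}$. However, the real contraction step, which you yourself flag as the main obstacle, is left open, and none of the tools you propose for it works.

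\emph{The logarithmic potential.} The contribution of a child with ratio $R$ is $\frac{(\beta^2-1)R}{(\beta+R)(1+\beta R)}$. This is maximized at $R=1$ with value $\frac{\beta-1}{\beta+1}$, and $(\Delta-1)\frac{\beta-1}{\beta+1}<1$ exactly when $\beta<\Delta/(\Delta-2)$. For $\beta>\Delta/(\Delta-2)$ and field intervals close to $\lambda_{\mathrm c}(\beta,\Delta)^{\pm1}$, the real message interval contains ratios on the far side of the critical fixed point (or contains $R=1$). There the one-step $\ell_1$ bound exceeds $1$, which is why special potentials are needed.

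\emph{The other suggestions.} The $\operatorname{arcsinh}$ potential and \cite[Theorem~2.14]{liuSinclairSrivastava2019correlation} are the hard-core tools.

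\emph{The effective-field reformulation.} This does not reduce to any available estimate. First, $a\beta^{\iota}$ can leave the component of $\mathcal U_{\beta,\Delta}$ containing $[\lambda_-,\lambda_+]$. Second, for a non-logarithmic potential $\psi$ the factor does not drop out of the transformed gradient, since $\partial_j\psi(cF)=c\,\psi'(cF)\,\partial_jF\ne\psi'(F)\,\partial_jF$ in general. So contraction for $F_{a,d'}$ does not transfer to $\beta^{\iota}F_{a,d'}$.

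The missing idea is how the paper closes this step. Use the Sinclair--Srivastava--Thurley potential $\psi(R)=\log\frac{(1+\zeta)R+\zeta}{\zeta R+1+\zeta}$ with $\zeta=\zeta(\beta,\Delta)$ independent of the field. Field-independence means one coordinate system serves every vertex even though activities differ; a field-dependent potential would force you to compose mismatched coordinates, to which no cited contraction applies.

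Keep each vertex's own field $a\in[\lambda_-,\lambda_+]$, and encode the pinned neighbours as frozen inputs of the full $(\Delta-1)$-input recursion:
\begin{itemize}
  \item $|\iota|$ inputs at ratio $0$ if $\iota>0$, or at ratio $\infty$ if $\iota<0$ (probability $0$ or $1$, both finite in potential coordinates);
  \item all remaining slots at ratio $1$.
\end{itemize}
Since $g_\beta(0)=\beta$, $g_\beta(\infty)=\beta^{-1}$ and $g_\beta(1)=1$, this reproduces $a\beta^{\iota}\prod_j g_\beta(x_j)$ whenever $d+\abs{\iota}\le\Delta-1$.

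The gradient in the free coordinates is then a subvector of the full transformed gradient, so \cite[Theorem~2.9]{liuSinclairSrivastava2019correlation} bounds its $\ell_1$-norm below $1$. Compactness of the field interval times the real message box, over the finitely many pairs $(d,\iota)$, gives the uniform $\eta_0$. With this in place the rest of your plan goes through; your root bound $\abs{1+\widetilde R_{x_0}}\ge\frac12$ serves the error analysis as well as the paper's $\ge1$.
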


\begin{proof}
Put $\Lambda=[\lambda_-,\lambda_+]$.  By the reduction preceding
\cite[Lemma~2.8]{liuSinclairSrivastava2019correlation}, we may split each pinned
vertex of degree $d$ into $d$ leaves of the same pinned spin, one on each
incident edge.  Since pinned vertices carry no external-field factors, this
preserves $Z_H^\tau(\beta,z)$ and the maximum-degree bound, while increasing
the graph size by at most a factor of $\Delta$.  We apply the same convention
after each subsequent pinning and continue to use the same graph notation.
Thus all pinned vertices may be assumed to be leaves.

For a pinned instance $(G,W,\tau)$ and an unpinned vertex $u$, define,
whenever the denominator is nonzero,
\begin{equation}
  R_{G,u}^\tau(\beta,z)
  :=z_u
    \frac{Z_G^{\tau\cup\{u=+1\}}
      (\beta,z|_{V(G)\setminus(W\cup\{u\})})}
         {Z_G^{\tau\cup\{u=-1\}}
      (\beta,z|_{V(G)\setminus(W\cup\{u\})})}.
  \label{eq:ising-cd-ratio}
\end{equation}
Suppose that $u$ has unpinned neighbors $v_1,\ldots,v_d$.  Let
\[
  d_u^\pm:=\#\{v\in W:v\sim u,\ \tau_v=\pm1\},
  \qquad
  \iota_u:=d_u^- -d_u^+.
\]
Although \cite[Lemma~2.8]{liuSinclairSrivastava2019correlation} is stated with
a common scalar external field, its telescoping proof establishes an identity in the
vertex activities and therefore gives, coordinatewise,
\begin{equation}
  \begin{aligned}
    R_{G,u}^\tau(\beta,z)
    &=\mathcal T_{z_u,d,\iota_u}(r_1,\ldots,r_d),\\
    \mathcal T_{z,d,\iota}(x_1,\ldots,x_d)
    &:=z\beta^\iota\prod_{j=1}^d g_\beta(x_j),\\
    g_\beta(x)&:=\frac{\beta+x}{1+\beta x}.
  \end{aligned}
  \label{eq:ising-cd-ratio-recurrence}
\end{equation}
Here $r_j$ is the ratio at $v_j$ in the split instance $G^{(j)}$, which has one
fewer unpinned vertex and $\deg_{G^{(j)}}(v_j)\le\Delta-1$; each node uses its
own field.

For $x\ge0$ and $\beta>1$,
\[
  \beta^{-1}\le g_\beta(x)\le\beta.
\]
Since $d+\abs{\iota_u}\le\Delta$, every positive-real recursion value with
fields in $\Lambda$ consequently lies in the fixed interval
\begin{equation}
  r_{\min}\le R\le r_{\max},
  \qquad
  r_{\min}:=\lambda_-\beta^{-\Delta}>0,
  \qquad
  r_{\max}:=\lambda_+\beta^\Delta.
  \label{eq:ising-cd-real-ratio-range}
\end{equation}

Using the potential underlying
\cite[Theorem~2.9]{liuSinclairSrivastava2019correlation}, expressed in our
ratio coordinate, set
\[
  \psi(R):=\log\frac{(1+\zeta)R+\zeta}{\zeta R+1+\zeta},
  \qquad R>0,
\]
where $\zeta=\zeta(\beta,\Delta)>0$ is independent of the field, as in
\cite[Section~3]{sinclairSrivastavaThurley2012approximation}.
This field-independence allows the same potential to be used at vertices
with different activities.  This map is
real analytic and strictly increasing.  It is the odds-ratio form of the
probability-coordinate potential of Sinclair et
al.~\cite{sinclairSrivastavaThurley2012approximation}: substituting
$p=R/(1+R)$ into $\log((p+\zeta)/(1-p+\zeta))$ gives $\psi(R)$.  Use the
analytic branches agreeing with the positive-real functions, and set
\[
  \mathcal S_{z,d,\iota}
  :=\psi\circ\mathcal T_{z,d,\iota}\circ
    (\psi^{-1},\ldots,\psi^{-1}),
  \qquad
  \mathcal Y:=[\psi(r_{\min}),\psi(r_{\max})].
\]
All real messages lie in $\mathcal Y$ by
\eqref{eq:ising-cd-real-ratio-range}.

For $M\in\{\Delta-1,\Delta\}$, put
\[
  \mathfrak A_M
  :=\{(d,\iota)\in\mathbb Z_{\ge0}\times\mathbb Z:
       d+\abs\iota\le M\}.
\]
For $(d,\iota)\in\mathfrak A_{\Delta-1}$, the pinned inputs can be
included in the full $(\Delta-1)$-input probability recursion: insert
$|\iota|$ inputs equal to $0$ if $\iota>0$ and to $1$ if $\iota<0$,
and set any remaining dummy inputs to $1/2$.
Indeed, $g_\beta(0)=\beta$, $g_\beta(\infty)=\beta^{-1}$, and
$g_\beta(1)=1$, so this reproduces the factor $\beta^\iota$ without
changing the free-input recursion.  After applying the potential, the
gradient in the free coordinates is a subvector of the full transformed
gradient.  Therefore, by
\cite[Theorem~2.9]{liuSinclairSrivastava2019correlation}, the
probability--ratio change of coordinates above, the compactness of
$\Lambda\times\mathcal Y^d$ for each fixed
$(d,\iota)\in\mathfrak A_{\Delta-1}$, and the finiteness of
$\mathfrak A_{\Delta-1}$, we have
\[
  \Gamma
  :=\max_{\substack{(d,\iota)\in\mathfrak A_{\Delta-1}\\
                     a\in\Lambda,\ \mathbf y\in\mathcal Y^d}}
       \norm{\nabla\mathcal S_{a,d,\iota}(\mathbf y)}_1
  <1,
\]
where the gradient is zero when $d=0$.  Fix
$\eta_0\in(0,1/2)$ such that $\Gamma\le1-2\eta_0$.  With this choice of
$\eta_0$, the exponent in the statement is
\begin{equation}
  \alpha_{\mathrm I}
  :=\frac{\log(\Delta-1)}{-\log(1-\eta_0)}>1.
  \label{eq:ising-cd-runtime-exponent}
\end{equation}
Thus
\begin{equation}
  \norm{\nabla\mathcal S_{a,d,\iota}(\mathbf y)}_1
  \le1-2\eta_0
  \quad
  (a\in\Lambda,\ \mathbf y\in\mathcal Y^d,\ d+\abs\iota\le\Delta-1).
  \label{eq:ising-cd-real-contraction}
\end{equation}

We next give arguments similar to those in
\cite[Corollary~2.10]{liuSinclairSrivastava2019correlation}.  On the common real
message interval $\mathcal Y$, the inverse potential is
\[
  \psi^{-1}(y)
  =\frac{(1+\zeta)\e^y-\zeta}{1+\zeta-\zeta\e^y}.
\]
Hence $\psi^{-1}(\mathcal Y)=[r_{\min},r_{\max}]$.  In particular, for every
$(d,\iota)\in\mathfrak A_\Delta$, $a\in\Lambda$, and
$\mathbf y\in\mathcal Y^d$,
\[
  1+\beta\psi^{-1}(y_j)>0
  \quad(1\le j\le d),
  \qquad
  \mathcal T_{a,d,\iota}
  \bigl(\psi^{-1}(y_1),\ldots,\psi^{-1}(y_d)\bigr)
  \in[r_{\min},r_{\max}].
\]
Consequently, on each compact set
\[
  \mathcal K_{d,\iota}:=\Lambda\times\mathcal Y^d,
  \qquad (d,\iota)\in\mathfrak A_\Delta,
\]
all denominators in $\psi^{-1}$ and $\mathcal S_{z,d,\iota}$ are bounded away
from zero, while all logarithm arguments lie in a compact subset of
$(0,\infty)$.  After shrinking the neighborhoods, these arguments remain in
the open right half-plane, where we take the principal logarithm.  Thus the
resulting maps are jointly analytic on a common open neighborhood of the sets
$\mathcal K_{d,\iota}$.  With $\eta_0$ fixed as above, continuity of the
gradients, the compactness of each $\mathcal K_{d,\iota}$, and the finiteness
of $\mathfrak A_\Delta$ allow us to choose uniform constants
$\rho,\delta>0$ so that the corresponding closed tubes remain in this common
neighborhood and the first bound below holds.
After fixing such $\rho$ and $\delta$, choose $C\ge1$ large enough to bound the
relevant $z$-derivatives and message derivatives, as well as
$(\psi^{-1})'$.  Let
\[
  \mathcal Y_\rho
  :=\{y\in\C:\operatorname{dist}(y,\mathcal Y)\le\rho\}.
\]
After decreasing $\rho$ and $\delta$ if necessary, whenever $a\in\Lambda$,
$\abs{z-a}\le\delta$, and $\mathbf y\in\mathcal Y_\rho^d$, we have
\begin{align}
  \norm{\nabla\mathcal S_{z,d,\iota}(\mathbf y)}_1
  &\le1-\eta_0
  &&(d+\abs\iota\le\Delta-1),
  \label{eq:ising-cd-complex-contraction}\\
  \abs{\mathcal S_{z,d,\iota}(\mathbf y)
        -\mathcal S_{a,d,\iota}(\mathbf y)}
  &\le C\abs{z-a}
  &&(d+\abs\iota\le\Delta),
  \label{eq:ising-cd-field-perturbation}\\
  \norm{\nabla\mathcal S_{z,d,\iota}(\mathbf y)}_1
  &\le C
  &&(d+\abs\iota\le\Delta).
  \label{eq:ising-cd-root-derivative}
\end{align}
Moreover, $\psi^{-1}$ is $C$-Lipschitz on $\mathcal Y_\rho$.  Shrinking $\rho$
if necessary,
\begin{equation}
  1+\beta\psi^{-1}(y)\ne0
  \quad(y\in\mathcal Y_\rho).
  \label{eq:ising-cd-denominators}
\end{equation}

Iterating the recurrence \eqref{eq:ising-cd-ratio-recurrence} from \cite[Lemma~2.8]{liuSinclairSrivastava2019correlation}
gives a finite computation tree because each step removes an unpinned vertex.
Here
$d+\abs\iota\le\Delta-1$ at non-root nodes, while the
root may have $d+\abs\iota=\Delta$.  Denote the root by $x_0$.  For each node
$x$, let $\pi(x)$ be the graph vertex it represents; in particular,
$\pi(x_0)=u$.  Let $y_x^{\mathrm R}$ and $y_x$ be its transformed messages
under fields
$a_x=\lambda_{\pi(x)}$ and $z_x=z_{\pi(x)}$, respectively.  For real fields,
all restricted partition functions are positive, so
$\psi^{-1}(y_x^{\mathrm R})$ is the corresponding graph ratio.
Put
\[
  \xi:=\max_{v\in V(G)\setminus W}\abs{z_v-\lambda_v}
  \le\lambda_+\theta_0,
  \qquad
  t:=\frac{C\xi}{\eta_0}.
\]
Choose $\theta_0\in(0,\pi]$ small enough that
\begin{equation}
  \lambda_+\theta_0\le\delta,
  \qquad
  \frac{C\lambda_+\theta_0}{\eta_0}\le\frac\rho2.
  \label{eq:ising-cd-theta-first-choice}
\end{equation}
Starting at the leaves, the mean-value formula on the convex set
$\mathcal Y_\rho$, together with \eqref{eq:ising-cd-complex-contraction} and
\eqref{eq:ising-cd-field-perturbation}, gives at every non-root node
\begin{align}
  \abs{y_x-y_x^{\mathrm R}}
  &\le(1-\eta_0)
       \max_j\abs{y_{x_j}-y_{x_j}^{\mathrm R}}+C\xi \notag\\
  &\le(1-\eta_0)t+C\xi=t.
  \label{eq:ising-cd-nonroot-stability}
\end{align}
Here the maximum is $0$ at a leaf.  Since $y_x^{\mathrm R}\in\mathcal Y$ and
$t\le\rho/2$, this also keeps every non-root complex message in
$\mathcal Y_\rho$.

At the root, \eqref{eq:ising-cd-field-perturbation} and
\eqref{eq:ising-cd-root-derivative} instead yield
\begin{equation}
  \abs{y_{x_0}-y_{x_0}^{\mathrm R}}
  \le C(t+\xi).
  \label{eq:ising-cd-root-message-stability}
\end{equation}
After shrinking $\theta_0$, this bound is at most $\rho/2$, so the root message
also lies in $\mathcal Y_\rho$.  Since $\psi^{-1}$ is Lipschitz there, the formal
root ratio
$\widetilde R_{G,u}(\beta,z):=\psi^{-1}(y_{x_0})$ satisfies, for some
parameter-dependent constant $C_0$,
\begin{equation}
  \abs{\widetilde R_{G,u}(\beta,z)
        -R_{G,u}^\tau(\beta,\lambda)}
  \le C_0\xi.
  \label{eq:ising-cd-root-ratio-stability}
\end{equation}
The real ratio is well defined.  Decreasing $\theta_0$ once more so that
$C_0\lambda_+\theta_0\le r_{\min}/2$, and using
\eqref{eq:ising-cd-real-ratio-range},
\begin{equation}
  \Re\widetilde R_{G,u}(\beta,z)\ge\frac{r_{\min}}2>0,
  \qquad
  \abs{1+\widetilde R_{G,u}(\beta,z)}\ge1.
  \label{eq:ising-cd-ratio-lower-bound}
\end{equation}

We now use simultaneous induction on the number $m$ of unpinned vertices to
show that every pinned partition function is nonzero and every ratio
\eqref{eq:ising-cd-ratio} equals its formal computation-tree value (hence
satisfies \eqref{eq:ising-cd-ratio-lower-bound}).  For $m=0$, the partition
function is the nonzero monomial $\beta^q$.

Assume both claims through $m-1$ and take an instance with $m\ge1$ and root
$u$.  Pinning $u$ leaves $m-1$ unpinned vertices, so the two restricted
partition functions in \eqref{eq:ising-cd-ratio} are nonzero.  Each child
instance in \eqref{eq:ising-cd-ratio-recurrence} also has $m-1$ unpinned
vertices.  Substituting the inductively identified child ratios into
\eqref{eq:ising-cd-ratio-recurrence}, which is well defined by
\eqref{eq:ising-cd-denominators}, identifies the parent ratio with its formal
root value.  Finally,
\begin{equation}
  Z_G^\tau(\beta,z)
  =Z_G^{\tau\cup\{u=-1\}}(\beta,z)
     \bigl(1+R_{G,u}^\tau(\beta,z)\bigr),
  \label{eq:ising-cd-deletion-identity}
\end{equation}
where the fields are restricted to the still-unpinned vertices.  Its first
factor is nonzero by induction and its second by
\eqref{eq:ising-cd-ratio-lower-bound}, completing the induction and proving
zero-freeness for every $H,W,\tau$.

To approximate a ratio at $u$, truncate its computation tree at depth $L$ and
assign every new leaf the fixed transformed message
\[
  y_\star:=\psi(r_{\min})\in\mathcal Y.
\]
The real truncated messages stay in $\mathcal Y$ by
\eqref{eq:ising-cd-real-ratio-range}; the stability argument above, including
the root estimate, then keeps the complex truncated messages in
$\mathcal Y_\rho$.  Comparing the exact and truncated recursions, we find that
each non-root level contracts by $1-\eta_0$, while the root map and $\psi^{-1}$ have bounded
derivatives.  The leaf discrepancy is uniformly bounded because
$\mathcal Y_\rho$ is bounded.  Hence, for some
$A=A(\Delta,\beta,\lambda_-,\lambda_+)\ge1$,
\begin{equation}
  \abs{\widehat R_{G,u}(\beta,z)-R_{G,u}^\tau(\beta,z)}
  \le A(1-\eta_0)^L.
  \label{eq:ising-cd-truncation-error}
\end{equation}

Finally, enumerate the $m=\abs{V(H)\setminus W}$ unpinned vertices as
$u_1,\ldots,u_m$.  Put $\tau_0=\tau$, and for $1\le j\le m$ let
\[
  \tau_j=\tau_{j-1}\cup\{u_j=-1\},
  \qquad
  R_j:=R_{H,u_j}^{\tau_{j-1}}(\beta,z),
\]
with the fields restricted to the still-unpinned vertices as in
\eqref{eq:ising-cd-ratio}.
Repeated use of \eqref{eq:ising-cd-deletion-identity} gives
\begin{equation}
  Z_H^\tau(\beta,z)
  =Z_H^{\tau_m}(\beta,z)\prod_{j=1}^m(1+R_j)
  =\beta^{q(\tau_m)}\prod_{j=1}^m(1+R_j),
  \label{eq:ising-cd-self-reduction}
\end{equation}
where $q(\tau_m)$ is the number of disagreeing edges in the unique fully
pinned configuration.

If $m=0$, the algorithm counts the disagreeing edges in $O_\Delta(n)$ time
and returns this exact value; the empty graph is handled in constant time.
Otherwise, choose
\[
  L=\max\left\{1,
    \left\lceil
      \frac{\log(8Am/\varepsilon')}{-\log(1-\eta_0)}
    \right\rceil\right\}.
\]
Approximate every $R_j$ by the depth-$L$ recursion and return
\[
  \widehat Z_H^\tau(\beta,z)
  :=\beta^{q(\tau_m)}\prod_{j=1}^m(1+\widehat R_j).
\]
By \eqref{eq:ising-cd-ratio-lower-bound}, $\abs{1+R_j}\ge1$, while
\eqref{eq:ising-cd-truncation-error} gives
\[
  \abs{e_j}\le\frac{\varepsilon'}{8m},
  \qquad
  e_j:=\frac{\widehat R_j-R_j}{1+R_j}.
\]
It follows from \eqref{eq:ising-cd-self-reduction} that
\[
  \abs{\frac{\widehat Z_H^\tau(\beta,z)}
                 {Z_H^\tau(\beta,z)}-1}
  \le\prod_{j=1}^m(1+\abs{e_j})-1
  \le\exp(\varepsilon'/8)-1
  \le\varepsilon'.
\]

The root of a computation tree has at most $\Delta$ children, and every other
node has at most $\Delta-1$.  A depth-$L$ tree therefore has at most
\[
  N_L:=1+\Delta\sum_{j=0}^{L-1}(\Delta-1)^j
  =O_\Delta\bigl((\Delta-1)^L\bigr)
  =O_{\Delta,\beta,\lambda_-,\lambda_+}\left(
    \left(\frac m{\varepsilon'}\right)^{\alpha_{\mathrm I}}
  \right)
\]
nodes.  Throughout the successive pinnings and splittings, each original
vertex either remains unpinned or contributes at most $\Delta$ pinned leaves.
Thus every recursive instance has $O_\Delta(n)$ vertices and edges.  Preparing
the initial instance takes $O_\Delta(n)$ time; even if we copy the entire
instance at every tree node, constructing its child instances and evaluating
its recurrence take $O_\Delta(n)$ time per node.  We construct at most $m$
trees, and counting $q(\tau_m)$ and forming the final product take
$O_\Delta(n+m)$ additional time.  Hence the total running time is
\[
\begin{aligned}
  &O_{\Delta,\beta,\lambda_-,\lambda_+}\left(
    n+mn\left(\frac m{\varepsilon'}\right)^{\alpha_{\mathrm I}}
  \right)\\
  &\qquad=O_{\Delta,\beta,\lambda_-,\lambda_+}\left(
    n^2\left(\frac n{\varepsilon'}\right)^{\alpha_{\mathrm I}}
  \right),
\end{aligned}
\]
using $1\le m\le n$.  The case $m=0$ also satisfies the claimed bound for
$n\ge1$.
\end{proof}

\begin{corollary}
\label{cor:ising-cd-marginal}
Fix the constants above.  Let $H$ have maximum degree at most $\Delta$, and
let $\lambda\in[\lambda_-,\lambda_+]^{V(H)}$.  Write $n=\abs{V(H)}$.
For every $u\in V(H)$ and
$\varepsilon'\in(0,1)$, the positive-spin marginal
\[
  \Pr_{\sigma\sim\mu_{H,\beta,\lambda}}(\sigma_u=+1)
\]
can be deterministically approximated to additive error $\varepsilon'$ in
time
\[
  O_{\Delta,\beta,\lambda_-,\lambda_+}\left(
    n^2\left(\frac n{\varepsilon'}\right)^{\alpha_{\mathrm I}}
  \right).
\]
\end{corollary}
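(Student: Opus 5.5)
We mirror the proofs of \cref{cor:lee-yang-ising-oracles}(i) and \cref{cor:real-marginal-oracle}: express the marginal as a ratio of two pinned partition functions at positive real fields, approximate each by \cref{lem:ising-cd-local-complex-oracle} with all phases zero, and combine the estimates with \cref{lem:quotient-estimate}.

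Concretely, take $W=\{u\}$ and the two pinnings $\tau^\pm=\{u=\pm1\}$. Since pinned vertices carry no field factor in \eqref{eq:ising-pinned-partition-function}, we have
\[
  \Pr_{\sigma\sim\mu_{H,\beta,\lambda}}(\sigma_u=+1)
  =\frac{\lambda_u Z_H^{\tau^+}(\beta,\lambda')}
        {Z_H^{\tau^+}(\beta,\lambda')\lambda_u+Z_H^{\tau^-}(\beta,\lambda')},
  \qquad \lambda'=\lambda|_{V(H)\setminus\{u\}}.
\]
Set $x=\lambda_u Z_H^{\tau^+}(\beta,\lambda')$ and $y=x+Z_H^{\tau^-}(\beta,\lambda')$. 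Both summands of $y$ are positive because the fields are real and positive, so $0<x\le y$. Apply \cref{lem:ising-cd-local-complex-oracle} with $\theta_v=0$ for all unpinned $v$, which is admissible since $\lambda'\in[\lambda_-,\lambda_+]^{V(H)\setminus\{u\}}$, to each of the two pinned partition functions with relative accuracy $\varepsilon'/8$. Let $\widehat x$ be $\lambda_u$ times the first estimate and $\widehat y=\widehat x+\widehat Z_H^{\tau^-}$. Since both exact terms are positive, the error bounds add to give $\abs{\widehat x-x}\le(\varepsilon'/8)x\le(\varepsilon'/8)y$ and $\abs{\widehat y-y}\le(\varepsilon'/8)y$. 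Then \cref{lem:quotient-estimate} bounds the error of $\widehat x/\widehat y$ by $2(\varepsilon'/8)/(1-\varepsilon'/8)<\varepsilon'$. Taking the real part and projecting onto $[0,1]$ cannot increase this error, since the target is a real probability.

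For the running time, note that the pinned instance has the same $n$ vertices, and the lemma counts pinned vertices in $n$. Each of the two oracle calls therefore costs $O_{\Delta,\beta,\lambda_-,\lambda_+}(n^2(8n/\varepsilon')^{\alpha_{\mathrm I}})$, and the constant $8^{\alpha_{\mathrm I}}$ is absorbed into the parameter dependence. The remaining arithmetic takes $O(1)$ time and building the pinnings takes $O(n)$ time. If $n=1$, then $m=0$ after pinning and the lemma computes the value exactly, which is still within the bound.

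There is no serious obstacle. The one point to check is that the pinned formulation matches the lemma's hypotheses: the lemma's uniformity over $H,W,\tau$ permits pinning $u$ directly, so no deletion with modified fields is needed. This matters because field modifications like those in the Lee--Yang case would multiply neighbor fields by $\beta^{\pm1}$ and could leave $[\lambda_-,\lambda_+]$.
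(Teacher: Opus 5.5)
Your proposal is correct and matches the paper's proof. Both write the marginal as $\lambda_u Z_H^{\tau^+}$ over the full partition function, make two zero-phase calls to \cref{lem:ising-cd-local-complex-oracle} at relative accuracy $\varepsilon'/8$, and finish with \cref{lem:quotient-estimate} followed by the real part and projection onto $[0,1]$. The only difference is cosmetic: the paper approximates the denominator $Z_H(\beta,\lambda)$ directly with the empty pinning, whereas you assemble it as $\widehat x+\widehat Z_H^{\tau^-}$, which is equally valid because both summands are positive.
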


\begin{proof}
Let $\tau^+$ be the pinning of $u$ to $+1$.  By
\eqref{eq:ising-pinned-partition-function},
\[
  \Pr_{\sigma\sim\mu_{H,\beta,\lambda}}(\sigma_u=+1)
  =
  \lambda_u
  \frac{Z_H^{\tau^+}
    (\beta,\lambda|_{V(H)\setminus\{u\}})}
       {Z_H(\beta,\lambda)}.
\]
Apply \cref{lem:ising-cd-local-complex-oracle}, with all phases zero, to the
pinned numerator and the unpinned denominator, using relative accuracy
$\varepsilon'/8$.  Multiplication by the exact factor $\lambda_u$ preserves
the relative error.  Since the exact ratio is a probability,
\cref{lem:quotient-estimate} gives additive error less than
$\varepsilon'$.  Taking the real part and projecting onto $[0,1]$ cannot
increase the error.  Preparing the two instances takes $O_\Delta(n)$ time.
The two partition-function calls each use relative accuracy
$\varepsilon'/8$, and the remaining arithmetic takes constant time, giving
the displayed bound by \cref{lem:ising-cd-local-complex-oracle}.
\end{proof}

\subsection{All-phase characteristic-function oracle}
\label{sec:ising-cd-global-phase}

\begin{lemma}
\label{lem:ising-cd-phase-damping}
Fix the constants above, let $H=(V,E)$ have maximum degree at most $\Delta$,
and let $\lambda\in[\lambda_-,\lambda_+]^V$.  For
$\theta\in[-\pi,\pi]^V$ and $\alpha\in(0,\pi]$, let
\[
  W_\alpha=\set{v\in V:\abs{\theta_v}\ge\alpha}.
\]
There is a constant $\gamma>0$, depending only on
$\Delta,\beta,\lambda_-,\lambda_+$, and $\alpha$, such that
\begin{equation}
  \abs{
    \E_{\sigma\sim\mu_{H,\beta,\lambda}}
      \exp\!\left(\iu\sum_{v\in V}
        \theta_v\one_{\{\sigma_v=+1\}}\right)
  }
  \le\exp(-\gamma\abs{W_\alpha}).
  \label{eq:ising-cd-phase-damping}
\end{equation}
\end{lemma}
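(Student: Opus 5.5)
I would follow the proof of \cref{lem:phase-damping} essentially verbatim, replacing the hard-core block estimates by Ising ones. The Ising model has full support, so the argument is in fact simpler.

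First, greedily choose a maximal set $S\subseteq W_\alpha$ whose vertices are pairwise at graph distance at least two in $H$. Each selected vertex excludes at most $1+\Delta$ vertices of $W_\alpha$ (itself and its neighbours), so $\abs S\ge\abs{W_\alpha}/(\Delta+1)$. Singleton blocks $B_v=\{v\}$ suffice here. Let $\mathcal F_{\mathrm{out}}$ be generated by the spins on $V\setminus S$. Since $S$ is an independent set, conditional on $\mathcal F_{\mathrm{out}}$ the spins $(\sigma_v)_{v\in S}$ are independent. Each $\sigma_v$ has the one-vertex conditional law computed in the proof of \cref{lem:ising-parameter-separation}. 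With $\hat\beta=\beta^{-1}$ (as $\beta>1$) and $b=\lambda_-$, $r=\lambda_+$, both spin values have conditional probability at least
\[
  p_0=\min\left\{\frac{\lambda_-\beta^{-\Delta}}{1+\lambda_-\beta^{-\Delta}},\ \frac{1}{1+\lambda_+\beta^{\Delta}}\right\}\in(0,1/2].
\]
This holds uniformly over all outside configurations, because every configuration is feasible.

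Next, factor the characteristic function. Write $\exp(\iu\sum_v\theta_v\one_{\{\sigma_v=+1\}})$ as an $\mathcal F_{\mathrm{out}}$-measurable unimodular factor times $\prod_{v\in S}U_v$, where $U_v=\e^{\iu\theta_v\one_{\{\sigma_v=+1\}}}$. The edges inside $S$ are absent, and the edges from $S$ to the outside only affect the conditional laws of the $U_v$, not the phase. By conditional independence, $\abs{\E[\prod_{v\in S}U_v\mid\mathcal F_{\mathrm{out}}]}=\prod_{v\in S}\abs{\E[U_v\mid\mathcal F_{\mathrm{out}}]}$.

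For each factor, use the same two-copy identity:
\[
  1-\abs{\E[U_v\mid\mathcal F_{\mathrm{out}}]}^2=\tfrac12\E[\abs{U_v-U_v'}^2\mid\mathcal F_{\mathrm{out}}].
\]
Keep only the two ordered cases in which one copy has $\sigma_v=-1$ and the other has $\sigma_v=+1$. This gives the lower bound $p_0^2\abs{1-\e^{\iu\theta_v}}^2=4p_0^2\sin^2(\theta_v/2)\ge4p_0^2\sin^2(\alpha/2)$, since $\abs{\theta_v}\in[\alpha,\pi]$. Hence each factor has modulus at most
\[
  \rho=\sqrt{1-4p_0^2\sin^2(\alpha/2)}\in[0,1).
\]
Positivity of $\rho^2$ follows from $p_0\le1/2$. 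If $\rho=0$, the claim holds trivially for any $\gamma$, so replace $\rho$ by $\max\{\rho,1/2\}$ to keep $\gamma$ finite. Averaging over $\mathcal F_{\mathrm{out}}$ does not increase the bound $\rho^{\abs S}\le\rho^{\abs{W_\alpha}/(\Delta+1)}$. This gives \eqref{eq:ising-cd-phase-damping} with $\gamma=-\log\rho/(\Delta+1)$, which depends only on $\Delta,\beta,\lambda_-,\lambda_+,\alpha$.

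There is no serious obstacle. The only point needing care is verifying the uniform conditional lower bound $p_0$ for both spin values at a selected vertex given \emph{all} other spins. This is exactly the full-neighbourhood computation already carried out in the proof of \cref{lem:ising-parameter-separation}: the conditional law of $\sigma_v$ given everything else depends only on $\sigma_{N(v)}$, and the exponent $m_--m_+$ lies in $[-\Delta,\Delta]$. Note that the argument uses neither the uniqueness assumption nor $\beta\ne\Delta/(\Delta-2)$.
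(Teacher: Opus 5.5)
Your proposal is correct and follows the paper's proof essentially step for step. Both arguments take a maximal independent set $S$ of $H[W_\alpha]$ with $\abs S\ge\abs{W_\alpha}/(\Delta+1)$, condition on the spins outside $S$, apply the uniform two-sided bound $p_0$ from the one-vertex conditional law, and arrive at the same constant $\gamma=-\frac{1}{2(\Delta+1)}\log\bigl(1-4p_0^2\sin^2(\alpha/2)\bigr)$. The only cosmetic difference is that you use the two-copy identity where the paper directly computes $\abs{(1-q_v)+q_v\e^{\iu\theta_v}}^2=1-4q_v(1-q_v)\sin^2(\theta_v/2)$, which is the same quantity; your $\rho=0$ safeguard is also unnecessary, since $\beta>1$ forces $p_0<1/2$.
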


\begin{proof}
Set
\[
  \rho
  =
  \min\left\{
    \frac{\lambda_-\beta^{-\Delta}}
         {1+\lambda_-\beta^{-\Delta}},
    \frac{1}{1+\lambda_+\beta^\Delta}
  \right\}>0.
\]
Choose a maximal independent set $S$ of $H[W_\alpha]$.  Since every vertex
of $W_\alpha$ lies in $S$ or has a neighbor in $S$,
\[
  \abs S\ge\frac{\abs{W_\alpha}}{\Delta+1}.
\]
Condition on all spins outside $S$.  The spins at vertices of $S$ are then
independent.  For $v\in S$, write
\[
  q_v
  =\Pr(\sigma_v=+1\mid\sigma_{V\setminus S}).
\]
The conditional-probability bounds established in the proof of
\cref{lem:ising-parameter-separation}, specialized to $\beta>1$, give
$q_v,1-q_v\ge\rho$.  Therefore
\begin{align*}
  \abs{(1-q_v)+q_v\e^{\iu\theta_v}}^2
  &=1-4q_v(1-q_v)\sin^2(\theta_v/2)\\
  &\le1-4\rho^2\sin^2(\alpha/2).
\end{align*}
By conditional independence, the modulus of the conditional characteristic
function is bounded by
$\bigl(1-4\rho^2\sin^2(\alpha/2)\bigr)^{\abs S/2}$.
Averaging over the spins outside $S$ cannot increase this bound.  Hence
\eqref{eq:ising-cd-phase-damping} holds with
\[
  \gamma
  =-\frac{1}{2(\Delta+1)}
    \log\!\left(1-4\rho^2\sin^2(\alpha/2)\right)>0.
\]
\end{proof}

\begin{lemma}
\label{lem:ising-cd-additive-phase-oracle}
Let $H=(V,E)$ have maximum degree at most
$\Delta$, write $n=\abs V$, let $\lambda\in[\lambda_-,\lambda_+]^V$, and let
$\theta=(\theta_v)_{v\in V}$ be a vector of real phases.  For every
$\varepsilon'\in(0,1)$, there is a deterministic algorithm returning a
complex number $\widehat\Phi_\lambda(\theta)$ such that
\begin{equation}
  \abs{
    \widehat\Phi_\lambda(\theta)
    -
    \E_{\sigma\sim\mu_{H,\beta,\lambda}}
      \exp\!\left(\iu\sum_{v\in V}
        \theta_v\one_{\{\sigma_v=+1\}}\right)
  }
  \le\varepsilon'.
  \label{eq:ising-cd-additive-phase-oracle}
\end{equation}
Let $T(n,\delta)\ge n$ bound the running time of the algorithm in
\cref{lem:ising-cd-local-complex-oracle} on graphs with at most $n$ vertices
and relative accuracy $\delta$, uniformly over all pinnings.
For $n\ge1$, its running time is
\[
  O_{\Delta,\beta,\lambda_-,\lambda_+}\left(
    \left(\frac1{\varepsilon'}\right)^{(\log 2)/\gamma}
    T\left(n,\frac{\varepsilon'}{64}\right)
  \right),
\]
where $\theta_0$ is as in \cref{lem:ising-cd-local-complex-oracle}, and
$\gamma>0$ is the damping constant from \cref{lem:ising-cd-phase-damping}
with $\alpha=\theta_0/2$.
If $n=0$, the algorithm returns $1$ in constant time.
\end{lemma}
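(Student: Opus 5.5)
The plan mirrors the proof of \cref{lem:additive-phase-oracle}. The one real change is the decomposition over large-phase vertices. In the Ising model every configuration is feasible, so instead of removing neighborhoods we pin the large-phase vertices. Fixing their spins gives pinned partition functions $Z_H^\tau$, which is exactly what \cref{lem:ising-cd-local-complex-oracle} is designed to handle.

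\textbf{Phase reduction and the large-$W$ case.} If $n=0$, return $1$. Otherwise reduce $\theta$ coordinatewise modulo $2\pi$ to $\overline\theta\in[-\pi,\pi]^V$; this leaves the characteristic function unchanged. Set $W=\set{v:\abs{\overline\theta_v}\ge\theta_0/2}$ and $K=\lceil\gamma^{-1}\log(8/\varepsilon')\rceil$, where $\gamma$ comes from \cref{lem:ising-cd-phase-damping} with $\alpha=\theta_0/2$. If $\abs W>K$, return $0$. By \eqref{eq:ising-cd-phase-damping} the true value then has modulus below $\varepsilon'/8$.

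\textbf{Pinning decomposition.} If $\abs W\le K$, put $z_v=\lambda_v\e^{\iu\overline\theta_v}$. Splitting the sum according to $\tau=\sigma|_W\in\{-1,+1\}^W$ gives the exact identity
\[
  Z_H(\beta,z)=\sum_{\tau\in\{-1,+1\}^W}
  \Bigl(\prod_{v\in W:\,\tau_v=+1}z_v\Bigr)
  Z_H^\tau\bigl(\beta,z|_{V\setminus W}\bigr).
\]
Every unpinned vertex has phase of modulus less than $\theta_0/2\le\theta_0$. Hence \cref{lem:ising-cd-local-complex-oracle} applies to each $Z_H^\tau$ with relative accuracy $\varepsilon'/64$, uniformly over pinnings.

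To rule out error amplification through cancellation, we use the triangle inequality termwise. Since $\abs{z_v}=\lambda_v$ and $\beta>0$, each term satisfies $\abs{Z_H^\tau(\beta,z)}\le Z_H^\tau(\beta,\lambda)$. The same decomposition at the real fields then gives
\[
  \sum_\tau\Bigl(\prod_{\tau_v=+1}\abs{z_v}\Bigr)\abs{Z_H^\tau(\beta,z)}\le Z_H(\beta,\lambda).
\]
Consequently the numerator estimate has additive error at most $(\varepsilon'/64)Z_H(\beta,\lambda)$. Next, approximate the real denominator $Z_H(\beta,\lambda)$ to the same relative accuracy, using the empty pinning and zero phases. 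Since $\abs{Z_H(\beta,z)}\le Z_H(\beta,\lambda)$, \cref{lem:quotient-estimate} bounds the quotient error by $\varepsilon'/16$.

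\textbf{Running time.} There are at most $2^{\abs W}\le2^K\le2(8/\varepsilon')^{(\log2)/\gamma}$ pinnings. Each needs $O_\Delta(n)$ setup time and one oracle call costing $T(n,\varepsilon'/64)$. Using $T\ge n$, this yields the stated bound.

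The main point needing care is the nonnegativity bound $\abs{Z_H^\tau(\beta,z)}\le Z_H^\tau(\beta,\lambda)$. It follows directly because pinned weights are monomials with positive real coefficients $\beta^{(\cdot)}$, so the plan carries through without obstruction.
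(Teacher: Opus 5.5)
Your proposal is correct and follows the paper's proof essentially step for step. It uses the same phase reduction and returns zero once $\abs W$ exceeds $\lceil\gamma^{-1}\log(8/\varepsilon')\rceil$. It also uses the same pinning decomposition over $\tau\in\{-1,+1\}^W$, bounded termwise by $Z_H(\beta,\lambda)$, with relative accuracy $\varepsilon'/64$ for every pinned term and the denominator, then applies \cref{lem:quotient-estimate} and the same $2^{\abs W}$ running-time count.
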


\begin{proof}
If $n=0$, return $1$.  Otherwise, assume $n\ge1$.
Let $\overline\theta\in[-\pi,\pi]^V$ be the coordinatewise principal
representative of $\theta$.  The characteristic function is unchanged by
this reduction.  Define
\[
  W=\set{v\in V:\abs{\overline\theta_v}\ge\theta_0/2},
\]
where $\theta_0$ is supplied by
\cref{lem:ising-cd-local-complex-oracle}.  Apply
\cref{lem:ising-cd-phase-damping} with $\alpha=\theta_0/2$, and let
$\gamma>0$ be the resulting constant.  Set
\[
  L
  =\left\lceil
     \gamma^{-1}\log(8/\varepsilon')
    \right\rceil.
\]
If $\abs W>L$, the algorithm returns zero.  In this case
\eqref{eq:ising-cd-phase-damping} bounds the additive error by
$\varepsilon'/8$.

Suppose now that $\abs W\le L$, and set
\[
  z_v=\lambda_v\e^{\iu\overline\theta_v}
  \qquad(v\in V).
\]
For every $\tau\in\{-1,+1\}^W$, write
\[
  c_\tau=\prod_{v\in W:\tau_v=+1}z_v.
\]
Partitioning the configurations according to their spins on $W$ gives
\begin{equation}
  Z_H(\beta,z)
  =
  \sum_{\tau\in\{-1,+1\}^W}
    c_\tau
    Z_H^\tau
      \bigl(\beta,z|_{V\setminus W}\bigr).
  \label{eq:ising-cd-bad-phase-decomposition}
\end{equation}
Every field supplied to the pinned partition functions on the right-hand
side has phase magnitude less than $\theta_0/2$.  Hence
\cref{lem:ising-cd-local-complex-oracle} applies to every term.

Moreover, the triangle inequality and positivity of all coefficients give
\begin{align}
  \sum_{\tau\in\{-1,+1\}^W}
    \abs{
      c_\tau
      Z_H^\tau
        \bigl(\beta,z|_{V\setminus W}\bigr)
    }
  &\le
  \sum_{\tau\in\{-1,+1\}^W}
    \left(\prod_{v\in W:\tau_v=+1}\lambda_v\right)
    Z_H^\tau
      \bigl(\beta,\lambda|_{V\setminus W}\bigr)\notag\\
  &=Z_H(\beta,\lambda).
  \label{eq:ising-cd-no-cancellation}
\end{align}
Approximate every pinned partition function in
\eqref{eq:ising-cd-bad-phase-decomposition} to relative error
$\varepsilon'/64$, and let $\widehat Z_\theta$ be the resulting sum.
Also approximate $Z_H(\beta,\lambda)$ to the same relative accuracy using the
empty pinning, and denote the result by $\widehat Z_0$.  Then
\[
  \abs{\widehat Z_\theta-Z_H(\beta,z)}
  \le\frac{\varepsilon'}{64}Z_H(\beta,\lambda),
  \qquad
  \abs{\widehat Z_0-Z_H(\beta,\lambda)}
  \le\frac{\varepsilon'}{64}Z_H(\beta,\lambda).
\]
The algorithm returns
$\widehat\Phi_\lambda(\theta)=\widehat Z_\theta/\widehat Z_0$.
By \eqref{eq:ising-cd-no-cancellation},
$\abs{Z_H(\beta,z)}\le Z_H(\beta,\lambda)$, so
\cref{lem:quotient-estimate} gives the claimed additive error.

Finally,
\[
  2^{\abs W}
  \le2^L
  =2^{\lceil\gamma^{-1}\log(8/\varepsilon')\rceil}
  \le
  2\left(\frac8{\varepsilon'}\right)^{
       (\log 2)/\gamma}.
\]
The construction of $W$ takes $O_\Delta(n)$ time.
For each assignment $\tau$, preparing the pinned instance and computing
$c_\tau$ take $O_\Delta(n)$ time.  There are $2^{\abs W}$ pinned oracle calls
and one denominator call with empty pinning.  Each call uses relative accuracy
$\varepsilon'/64$ on a graph with $n$ vertices and therefore takes at most
$T(n,\varepsilon'/64)$ time.  Including the enumeration, weighted sum, and
final division, the displayed enumeration bound and
$T(n,\varepsilon'/64)\ge n$ give total running time
\[
  O_{\Delta,\beta,\lambda_-,\lambda_+}\left(
    \left(\frac1{\varepsilon'}\right)^{(\log 2)/\gamma}
    T\left(n,\frac{\varepsilon'}{64}\right)
  \right).
\]
\end{proof}

\begin{corollary}
\label{cor:ising-cd-characteristic}
Fix the constants above.  Let $H=(V,E)$ have maximum degree at most
$\Delta$, write $n=\abs V$, let $\lambda\in[\lambda_-,\lambda_+]^V$, and let
\[
  A(\sigma)=\sum_{v:\,\sigma_v=+1}a_v,
  \qquad
  a_v\in[-1,1].
\]
For every $t\in\R$ and $\varepsilon'\in(0,1)$,
$\Phi_{\mu_{H,\beta,\lambda},A}(t)$ can be deterministically approximated to
additive error $\varepsilon'$.  For $n\ge1$, its running time is
\[
  O_{\Delta,\beta,\lambda_-,\lambda_+}\left(
    \frac{n^{2+\alpha_{\mathrm I}}}
         {(\varepsilon')^{\alpha_{\mathrm I}+\chi_{\mathrm I}}}
  \right),
\]
where
\[
  \alpha_{\mathrm I}:=\frac{\log(\Delta-1)}{-\log(1-\eta_0)}>1,
  \qquad
  \chi_{\mathrm I}:=\frac{\log 2}{\gamma}>0,
\]
and $\eta_0$ is as in \cref{lem:ising-cd-local-complex-oracle}, while
$\gamma$ is the damping constant in
\cref{lem:ising-cd-additive-phase-oracle}.  If $n=0$, the
algorithm returns $1$ in constant time.
\end{corollary}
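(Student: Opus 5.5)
This corollary follows the proof of \cref{cor:uniqueness-characteristic}, with the Ising lemmas put in place of the hard-core ones.

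Given $t\in\R$ and $a_v\in[-1,1]$, set $\theta_v=ta_v$. By \eqref{eq:characteristic-partition-ratio}, specialized to the Ising model with the edge activity $\beta$ held fixed,
\[
  \Phi_{\mu_{H,\beta,\lambda},A}(t)
  =\E_{\sigma\sim\mu_{H,\beta,\lambda}}
    \exp\!\left(\iu\sum_{v\in V}\theta_v\one_{\{\sigma_v=+1\}}\right).
\]
This is exactly the quantity approximated in \cref{lem:ising-cd-additive-phase-oracle}. We call that all-phase oracle once, with phase vector $\theta$ and additive accuracy $\varepsilon'$, and return its output. The reduction of the phases modulo $2\pi$ happens inside the oracle, so the guarantee holds uniformly in $t$. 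If $n=0$, the statistic $A$ is identically zero, so the characteristic function equals $1$ and we return $1$.

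For the running time with $n\ge1$, building $\theta$ costs $O(n)$. \Cref{lem:ising-cd-additive-phase-oracle} then costs
\[
  O\!\left((1/\varepsilon')^{(\log 2)/\gamma}\,T(n,\varepsilon'/64)\right),
\]
where $T$ is any bound for the algorithm of \cref{lem:ising-cd-local-complex-oracle} that is at least $n$. That lemma allows
\[
  T(n,\delta)=O\!\left(n^2(n/\delta)^{\alpha_{\mathrm I}}\right),
\]
uniformly over pinnings; this is at least $n$ because $\delta<1$. Substituting $\delta=\varepsilon'/64$ and absorbing the constant $64^{\alpha_{\mathrm I}}$ into the parameter dependence gives
\[
  O\!\left(n^{2+\alpha_{\mathrm I}}(\varepsilon')^{-\alpha_{\mathrm I}-\chi_{\mathrm I}}\right),
  \qquad \chi_{\mathrm I}=\frac{\log 2}{\gamma}.
\]
The extra $O(n)$ term is dominated by this bound.

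The only point that needs care is checking that the exponents depend only on $\Delta,\beta,\lambda_-,\lambda_+$. The constant $\gamma$ comes from \cref{lem:ising-cd-phase-damping} with $\alpha=\theta_0/2$, and $\theta_0$ and $\eta_0$ are fixed by \cref{lem:ising-cd-local-complex-oracle} in terms of those parameters alone, so both $\alpha_{\mathrm I}$ and $\chi_{\mathrm I}$ are constants. We have $\chi_{\mathrm I}>0$ because $\gamma>0$, and $\alpha_{\mathrm I}>1$ as stated in \cref{lem:ising-cd-local-complex-oracle}. No other obstacle is expected.
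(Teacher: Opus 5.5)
Your proposal is correct and follows the paper's proof: set $\theta_v=ta_v$, identify the target via \eqref{eq:characteristic-partition-ratio}, make one call to the all-phase oracle of \cref{lem:ising-cd-additive-phase-oracle}, and substitute $T(n,\varepsilon'/64)=O\bigl(n^2(64n/\varepsilon')^{\alpha_{\mathrm I}}\bigr)$ from \cref{lem:ising-cd-local-complex-oracle}. You also add two small checks the paper leaves implicit: that this $T$ is at least $n$, and that both exponents depend only on $\Delta,\beta,\lambda_-,\lambda_+$.
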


\begin{proof}
Apply \cref{lem:ising-cd-additive-phase-oracle} with
$\theta_v=ta_v$.  The identity
\eqref{eq:characteristic-partition-ratio} identifies its target with
$\Phi_{\mu_{H,\beta,\lambda},A}(t)$.  Constructing the phases takes $O(n)$
time.  Substituting the bound from \cref{lem:ising-cd-local-complex-oracle}
for $T(n,\varepsilon'/64)$ in
\cref{lem:ising-cd-additive-phase-oracle} gives the displayed running time.
\end{proof}

\subsection{Proof of
  \texorpdfstring{\cref{thm:ising-correlation-decay}}{Theorem 1.5}}
\label{sec:ising-cd-proof}

\begin{proof}[Proof of \cref{thm:ising-correlation-decay}]
If $\lambda_-=\lambda_+$, then the two input vectors agree and the algorithm
returns zero.  Hence assume $\lambda_-<\lambda_+$ and set
$b=\lambda_-$ and $r=\lambda_+$.  For the coordinatewise geometric
interpolation used in \cref{sec:normalization},
\[
  \log\lambda_v(t)
  =(1-t)\log\lambda_v^\mu+t\log\lambda_v^\nu
  \in[\log\lambda_-,\log\lambda_+]
\]
for every $v\in V$ and $t\in[0,1]$.  Thus every coordinate remains in
$[\lambda_-,\lambda_+]$ throughout the path, and this interval is contained
in one connected component of $\mathcal U_{\beta,\Delta}$.

\Cref{cor:ising-cd-marginal} supplies the marginal-probability oracle required
by \cref{prop:oracle-to-tv}, and \cref{cor:ising-cd-characteristic} supplies the
characteristic-function oracle required there.  The separation hypothesis
follows from
\cref{lem:ising-parameter-separation}; its constant
$\kappa_{\mathrm I}>0$ depends only on
$\Delta,\beta,\lambda_-,\lambda_+$.  Set
$\kappa=\min\{1,\kappa_{\mathrm I}\}$.  For $n\ge1$,
\cref{prop:oracle-to-tv} uses marginal accuracy
$\delta_p=\Theta(\varepsilon\kappa/n)$ and characteristic-function accuracy
$\delta_\phi=\Theta((\varepsilon\kappa)^2/n^2)$.
The two oracle bounds above give
\[
\begin{aligned}
  T_{\widehat p}(n,\delta_p)
  &=O_{\Delta,\beta,\lambda_-,\lambda_+}\left(
    \frac{n^{2+2\alpha_{\mathrm I}}}
         {(\varepsilon\kappa)^{\alpha_{\mathrm I}}}
  \right),\\
  T_{\widehat\phi}(n,\delta_\phi)
  &=O_{\Delta,\beta,\lambda_-,\lambda_+}\left(
    \frac{n^{2+3\alpha_{\mathrm I}+2\chi_{\mathrm I}}}
         {(\varepsilon\kappa)^{2\alpha_{\mathrm I}+2\chi_{\mathrm I}}}
  \right).
\end{aligned}
\]
After absorbing the fixed constant $\kappa$, the thermodynamic-integration
and Fourier-coefficient costs are
$\widetilde O_{\Delta,\beta,\lambda_-,\lambda_+}
(n^{5+2\alpha_{\mathrm I}}/\varepsilon^{1+\alpha_{\mathrm I}})$
and $\widetilde O(n^3/\varepsilon^3)$, respectively.
The damping constant in \cref{lem:ising-cd-phase-damping}, evaluated at
$\alpha=\theta_0/2$, satisfies
$\gamma\le\log2/[2(\Delta+1)]$, since $\rho\le1/2$ and $\theta_0\le\pi$.
Thus $\chi_{\mathrm I}>1$; combined with $\alpha_{\mathrm I}>1$, this shows
that both costs are dominated by the characteristic-function queries.
Substitution into \cref{prop:oracle-to-tv} therefore gives total running time
\begin{equation}
  \widetilde O_{\Delta,\beta,\lambda_-,\lambda_+}\left(
    \frac{n^{3+3\alpha_{\mathrm I}+2\chi_{\mathrm I}}}
         {\varepsilon^{1+2\alpha_{\mathrm I}+2\chi_{\mathrm I}}}
  \right).
  \label{eq:ising-cd-total-time}
\end{equation}
This is polynomial in $n$ and $1/\varepsilon$ for the fixed model parameters.
The empty graph returns zero in constant time.
The zero-output characterization in \cref{prop:oracle-to-tv} shows that the
output is zero if and only if
$\lambda^\mu=\lambda^\nu$.
\end{proof}

\paragraph{AI Disclosure}
We used the GPT-5.6 Sol model to assist in designing the overall
methodology underlying the deterministic approximation algorithm and developing the corresponding mathematical analysis.  This materially influenced
Sections~3--7.  The authors critically reviewed and independently verified
all material retained in the manuscript, including the algorithms, proofs,
citations, and bibliographic entries, and take full responsibility for the
correctness and originality of the paper.
\bibliographystyle{alpha}
\bibliography{refs}

\appendix

\section{Proof of the constructive Jackson lemma}
\label{app:constructive-jackson}

\begin{proof}[Proof of \cref{lem:constructive-jackson}]
We first place $F$ on a circle without increasing its Lipschitz constant.
Let
\[
  u=\frac{F(-n)+F(n)}2
\]
and define, on $[-2n,2n]$,
\[
  \widetilde F(x)=
  \begin{cases}
    u+\dfrac{x+2n}{n}\bigl(F(-n)-u\bigr),
       &-2n\leq x\leq-n,\\[6pt]
    F(x),&-n\leq x\leq n,\\[4pt]
    F(n)+\dfrac{x-n}{n}\bigl(u-F(n)\bigr),
       &n\leq x\leq2n,
  \end{cases}
\]
and then extend $\widetilde F$ $4n$-periodically.  Both added linear pieces
have slope
\[
  \frac{F(-n)-F(n)}{2n},
\]
whose absolute value is at most one.  The pieces join continuously, and
$\widetilde F(-2n)=\widetilde F(2n)=u$; hence the periodic extension is
$1$-Lipschitz.  It also takes values in $[0,2n]$ and agrees with $F$ on
$[-n,n]$.

We now make the Jackson construction explicit.  Let
$q=\left\lfloor\frac K2\right\rfloor+1$ and $d=2q-2\leq K$.
Following Jackson's construction \cite{jackson1912approximation}, define the
normalized fourth-power Jackson kernel
\begin{equation}
  \mathcal J_q(y)
  =\frac1{c_0}
    \left(\frac{\sin(\pi qy)}{\sin(\pi y)}\right)^4,
  \qquad -\frac12\leq y\leq\frac12,
  \label{eq:jackson-kernel}
\end{equation}
where $c_0=\frac{q(2q^2+1)}3$ and removable singularities are filled in by continuity.  Indeed, the
Fej\'er identity gives
\[
  \left(\frac{\sin(\pi qy)}{\sin(\pi y)}\right)^2
  =\sum_{r\in\mathbb Z}a_r\e^{2\pi\iu ry},
  \qquad
  a_r=(q-\abs r)_+.
\]
Consequently, the fourth power in \eqref{eq:jackson-kernel} has Fourier
coefficients
\[
  c_k=\sum_{r\in\mathbb Z}a_r a_{k-r},
  \qquad \abs{k}\leq d,
\]
and no other nonzero coefficients.  Its constant coefficient is
\[
  c_0=\sum_r a_r^2
     =q^2+2\sum_{s=1}^{q-1}s^2
     =\frac{q(2q^2+1)}3,
\]
consistent with the definition of $c_0$ above.  Since
\[
  \int_{-1/2}^{1/2}\e^{2\pi\iu ky}\,\mathrm dy
  =\begin{cases}1,&k=0,\\0,&k\neq0,\end{cases}
\]
integrating a finite Fourier series over one period extracts its constant
coefficient.  Hence
\[
  \int_{-1/2}^{1/2}
    \left(\frac{\sin(\pi qy)}{\sin(\pi y)}\right)^4\,\mathrm dy
  =c_0,
  \qquad
  \int_{-1/2}^{1/2}\mathcal J_q(y)\,\mathrm dy=1.
\]
Thus $\mathcal J_q$ is an even, nonnegative, normalized trigonometric
polynomial of degree $d$.  Moreover, $c_k\geq0$, while Cauchy--Schwarz
gives
\[
  c_k=\sum_r a_r a_{k-r}
  \leq\left(\sum_r a_r^2\right)^{1/2}
       \left(\sum_r a_{k-r}^2\right)^{1/2}
  =c_0.
\]
Consequently,
\[
  \mathcal J_q(y)=\sum_{\abs{k}\leq d}\frac{c_k}{c_0}
                    \e^{2\pi\iu ky}.
\]
Extend these Fourier multipliers by zero up to degree $K$, and set
\[
  m_k=
  \begin{cases}
    c_k/c_0,&\abs{k}\leq d,\\
    0,&d<\abs{k}\leq K
  \end{cases}.
\]
Then $0\leq m_k\leq1$ for every $\abs{k}\leq K$.
The convolution formula for $c_k$ also allows all these rational multipliers
to be computed exactly in $O(K^2)$ arithmetic operations.

For $\abs{k}\leq K$, let
\[
  \mathcal F_k(\widetilde F)
  =\frac1{4n}\int_{-2n}^{2n}
      \widetilde F(z)\e^{-\iu t_kz}\,\mathrm dz,
  \qquad
  \alpha_k=m_k\mathcal F_k(\widetilde F).
\]
We next verify explicitly that these coefficients give the Jackson
convolution.  The factor $4n$ below identifies the period-one $y$-circle
with the period-$4n$ circle of $\widetilde F$.  Since the Fourier series of
$\mathcal J_q$ is finite, we may integrate it term by term.  For every
$\abs{k}\leq d$, the substitution $z=x-4ny$, together with
$t_k=\pi k/(2n)$, gives
\begin{align*}
 &\int_{-1/2}^{1/2}
   \widetilde F(x-4ny)\e^{2\pi\iu ky}\,\mathrm dy\\
 &\qquad=
   \frac{\e^{\iu t_kx}}{4n}
   \int_{x-2n}^{x+2n}
     \widetilde F(z)\e^{-\iu t_kz}\,\mathrm dz.
\end{align*}
The integrand in the last integral is $4n$-periodic, because
\[
  \widetilde F(z+4n)=\widetilde F(z),
  \qquad
  \e^{-\iu t_k(z+4n)}
  =\e^{-\iu t_kz}\e^{-2\pi\iu k}
  =\e^{-\iu t_kz}.
\]
Its integral over $[x-2n,x+2n]$ therefore equals its integral over
$[-2n,2n]$.  It follows that
\[
  \int_{-1/2}^{1/2}
   \widetilde F(x-4ny)\e^{2\pi\iu ky}\,\mathrm dy
  =\mathcal F_k(\widetilde F)\e^{\iu t_kx}.
\]
Multiplying by $m_k$ and summing yields
\begin{equation}
\begin{aligned}
  \int_{-1/2}^{1/2}
       \widetilde F(x-4ny)\mathcal J_q(y)\,\mathrm dy
  &=\sum_{\abs{k}\leq d}
       m_k\mathcal F_k(\widetilde F)\e^{\iu t_kx}\\
  &=\sum_{\abs{k}\leq K}\alpha_k\e^{\iu t_kx}
   =P(x),
\end{aligned}
  \label{eq:jackson-convolution}
\end{equation}
Since $m_{-k}=m_k$ and
$\mathcal F_{-k}(\widetilde F)=\overline{\mathcal F_k(\widetilde F)}$,
$P$ is real-valued.

We now explain why the first absolute moment of the kernel controls the
error.  Normalization of $\mathcal J_q$ allows us to subtract
$\widetilde F(x)$ inside the convolution:
\begin{align*}
  P(x)-\widetilde F(x)
  &=\int_{-1/2}^{1/2}
    \bigl(\widetilde F(x-4ny)-\widetilde F(x)\bigr)
    \mathcal J_q(y)\,\mathrm dy.
\end{align*}
Since $\widetilde F$ is $1$-Lipschitz and $\mathcal J_q\geq0$,
\begin{equation}
  \abs{P(x)-\widetilde F(x)}
  \leq4n\int_{-1/2}^{1/2}\abs y\,\mathcal J_q(y)\,\mathrm dy.
  \label{eq:error-by-first-moment}
\end{equation}
It therefore remains to estimate this first absolute moment.  For
$0<\abs y\leq1/2$,
\[
  \abs{\frac{\sin(\pi qy)}{\sin(\pi y)}}
  \leq\min\left\{q,\frac1{2\abs y}\right\},
\]
by the geometric-sum formula and the bounds
$\abs{\sin(\pi qy)}\leq1$ and
$\abs{\sin(\pi y)}\geq2\abs y$.  Since $\mathcal J_q$ is even, splitting at
$y=1/(2q)$ gives
\begin{align*}
  \int_{-1/2}^{1/2}\abs y\,\mathcal J_q(y)\,\mathrm dy
  &\leq\frac2{c_0}\left(
      \int_0^{1/(2q)}yq^4\,\mathrm dy
      +\int_{1/(2q)}^{1/2}\frac{\mathrm dy}{16y^3}
    \right).
\end{align*}
The two integrals are
\[
  \int_0^{1/(2q)}yq^4\,\mathrm dy
  =q^4\left[\frac{y^2}{2}\right]_0^{1/(2q)}
  =\frac{q^2}{8}
\]
and
\[
  \int_{1/(2q)}^{1/2}\frac{\mathrm dy}{16y^3}
  =\frac1{16}\left[-\frac1{2y^2}\right]_{1/(2q)}^{1/2}
  =\frac{q^2-1}{8}.
\]
Consequently, since $c_0=q(2q^2+1)/3\geq2q^3/3$,
\begin{align*}
  \int_{-1/2}^{1/2}\abs y\,\mathcal J_q(y)\,\mathrm dy
  &\leq\frac2{c_0}\left(\frac{q^2}{8}
      +\frac{q^2-1}{8}\right)\\
  &=\frac{2q^2-1}{4c_0}
   \leq\frac{q^2}{2c_0}
   \leq\frac3{4q}.
\end{align*}
Substituting this estimate into \eqref{eq:error-by-first-moment} and using
$q\geq K/2$ gives
\[
  \abs{P(x)-\widetilde F(x)}
  \leq\frac{3n}{q}
  \leq\frac{6n}{K}
  \leq\frac{\varepsilon}{4}.
\]
This proves the approximation bound on $[-n,n]$.  Also,
$\abs{\mathcal F_k(\widetilde F)}\leq2n$ and $m_k\leq1$, so
\[
  \sum_{\abs{k}\leq K}\abs{\alpha_k}
  \leq2n(2K+1)
  \leq6nK,
\]
which proves \eqref{eq:jackson-guarantees}.

We finally give the deterministic coefficient computation.  Cache
$F(-n)$ and $F(n)$, and let $N$ be the smallest power of two satisfying
\begin{equation}
  N\geq
  \max\left\{
    2K+1,
    \left\lceil
      \frac{8n(1+\pi K)(2K+1)}{\varepsilon}
    \right\rceil
  \right\}.
  \label{eq:number-of-samples}
\end{equation}
Set
\[
  x_j=-2n+\left(j+\frac12\right)\frac{4n}{N},
  \qquad 0\leq j<N.
\]
Define
\[
  \widehat{\mathcal F}_k
  =\frac1N\sum_{j=0}^{N-1}
      \widetilde F(x_j)\e^{-\iu t_kx_j},
  \qquad
  \widehat\alpha_k=m_k\widehat{\mathcal F}_k.
\]
The integrand
$g_k(x)=\widetilde F(x)\e^{-\iu t_kx}$ is Lipschitz with constant at most
\[
  1+2n\abs{t_k}\leq1+\pi K.
\]
For completeness, on each subinterval of length $h=4n/N$, the midpoint
error for an $L$-Lipschitz function is at most
$Lh^2/4$.  After summing the $N$ errors and dividing by $4n$, the normalized
midpoint error is at most $nL/N$.  Therefore
\[
  \abs{\widehat{\mathcal F}_k
        -\mathcal F_k(\widetilde F)}
  \leq\frac{n(1+\pi K)}{N}
  \leq\frac{\varepsilon}{8(2K+1)}.
\]
Since $m_k\leq1$, summing this bound over $\abs{k}\leq K$ proves
\eqref{eq:jackson-coefficient-error}.

After caching $F(-n)$ and $F(n)$, evaluating the $N$ sample values
$\widetilde F(x_j)$ costs $O(NT_F(n))$ time.  Directly computing
$\widehat{\mathcal F}_k$ requires $O(N)$ operations for each of the
$2K+1$ frequencies, so all midpoint sums cost $O(KN)$ operations.
Computing all Jackson multipliers $m_k$ from the convolution formula
for $c_k$ costs $O(K^2)$ operations: there are $O(K)$ coefficients,
each with $O(K)$ summands.  Forming
$\widehat\alpha_k=m_k\widehat{\mathcal F}_k$ then costs another $O(K)$
operations.  Alternatively,
\[
  \e^{-\iu t_kx_j}
  =(-1)^k\e^{-\pi\iu k/N}\e^{-2\pi\iu kj/N},
\]
so all sums are fixed phase multiples of entries of one length-$N$ DFT\@.
Since $N$ is a power of two, one length-$N$ FFT computes this DFT in
$O(N\log N)$ operations.  The condition $N\geq2K+1$ ensures that the
required frequencies $-K,\ldots,K$ are distinct modulo $N$.
Extracting the corresponding DFT entries, dividing by $N$, applying
the fixed phase factors, and multiplying by $m_k$ costs $O(K)$
additional operations.  These $O(K)$ postprocessing costs are absorbed
by the other terms.  Thus direct summation takes
\begin{equation}
  O\bigl(NT_F(n)+KN+K^2\bigr)
  =O\left(
      \frac{n^3}{\varepsilon^3}T_F(n)
      +\frac{n^4}{\varepsilon^4}
    \right)
  \label{eq:direct-time}
\end{equation}
operations, where the three terms account for sampling, evaluating all
midpoint sums, and computing the multipliers, respectively.  Using an
FFT replaces the midpoint-sum cost $O(KN)$ by the DFT cost
$O(N\log N)$, giving
\begin{equation}
  O\bigl(NT_F(n)+N\log N+K^2\bigr)
  =O\left(
      \frac{n^3}{\varepsilon^3}
      \left[T_F(n)+\log\!\left(\frac{2n}{\varepsilon}\right)\right]
    \right).
  \label{eq:fft-time}
\end{equation}
Here we used $K=\Theta(n/\varepsilon)$ and
\eqref{eq:number-of-samples}, which give
\[
  N=\Theta\left(\frac{nK^2}{\varepsilon}\right)
   =\Theta\left(\frac{n^3}{\varepsilon^3}\right).
\]
The multiplier cost $O(K^2)$ is absorbed by $O(KN)$ in the direct bound
and by $O(N\log N)$ in the FFT bound.
This completes the proof.
\end{proof}

\section{Proof of the multivariate Lee--Yang approximation lemma}
\label{app:lee-yang-complex-oracle}

\begin{proof}[Proof of \cref{lem:lee-yang-complex-oracle}]
Let $n=\abs{V(H)}$.  If $n=0$, return $1$ in constant time.  If $\beta=1$, then
\[
  Z_H(1,z)=\prod_{v\in V(H)}(1+z_v),
\]
so the partition function can be computed exactly in $O(n)$ operations.
Hence assume that
$0<\beta<1$, and define the univariate polynomial
\[
  Q(t)=Z_H\bigl(\beta,(tz_v)_{v\in V(H)}\bigr).
\]
Thus $Q(0)=1$ and $Q(1)=Z_H(\beta,z)$.  If $\abs{t}<1/r$, then
$\abs{tz_v}<1$ for every $v$.  The multivariate Lee--Yang theorem, applied
separately to each connected component and combined with spin reversal, gives
\[
  Q(t)\ne0 \qquad\text{for }\abs{t}<1/r
\]
\cite[Theorem~4.2 and Eq.~(13)]{liuSinclairSrivastava2017ising}.

The coefficient algorithm in the proof of
\cite[Theorem~3.1]{liuSinclairSrivastava2017ising} applies to $Q$ by adding
one singleton hyperedge $\{v\}$ at each vertex, with unary factor
\[
  \varphi_{\{v\}}(-)=1,\qquad \varphi_{\{v\}}(+)=z_v.
\]
Keeping the original Ising edge factors and taking the common vertex
activity to be $t$ gives precisely $Q(t)$.  All factors satisfy the
normalization $\varphi_e(-,\ldots,-)=1$ required by the coefficient algorithm.
Adding these singleton hyperedges increases the maximum degree to at most
$\Delta+1$, while the maximum hyperedge size remains at most $2$.
The labelled-insect enumeration handles the distinct fixed weights without
changing the enumeration bound.  Consequently, the first $m$ coefficients of
$Q$ can be computed using $n^{O(1)}\exp(O_\Delta(m))$ arithmetic operations.
The Newton identities compute the first $m$ inverse power sums of its zeros
from these coefficients in another $O(m^2)$ operations.  In particular,
there are constants $c_\Delta,C_\Delta\ge1$, depending only on $\Delta$, for
which the combined cost is
\[
  O_{\Delta,\beta}\bigl(n^{c_\Delta}\exp(C_\Delta m)\bigr).
\]
Here $C_\Delta$ is enlarged if necessary to absorb the Newton-identity cost.

Let $\alpha_1,\ldots,\alpha_n$ be the zeros of $Q$.  Since
$\abs{\alpha_j}\ge1/r$, write $p_k=\sum_{j=1}^n\alpha_j^{-k}$.  The Taylor
approximation
\[
  T_m=-\sum_{k=1}^m\frac{p_k}{k}
\]
satisfies
\[
  \abs{\log Q(1)-T_m}
  \le n\sum_{k>m}\frac{r^k}{k}
  \le \frac{nr^{m+1}}{1-r},
\]
where the logarithm is the branch normalized by $\log Q(0)=0$.  Taking
\[
  m=\left\lceil
    \frac{\log\bigl(2n/((1-r)\varepsilon)\bigr)}{\log(1/r)}
  \right\rceil
  =O_r\bigl(\log(n/\varepsilon)\bigr)
\]
makes the last quantity at most $\varepsilon/2$.  Therefore, for
$\widehat Z_H(\beta,z)=\exp(T_m)$,
\[
  \frac{\abs{\widehat Z_H(\beta,z)-Z_H(\beta,z)}}
       {\abs{Z_H(\beta,z)}}
  \le \exp(\varepsilon/2)-1
  \le \varepsilon.
\]
Forming $T_m$ costs $O(m)$ further operations, and its exponential costs one
elementary-function evaluation.  Substituting the chosen value of $m$ gives
the total bound
\[
  O_{\Delta,\beta,r}\left(
    n^{c_\Delta+C_\Delta/\log(1/r)}
    \varepsilon^{-C_\Delta/\log(1/r)}
  \right).
\]
Thus one may take
\[
  a_{\mathrm{LY}}=c_\Delta+\frac{C_\Delta}{\log(1/r)},
  \qquad
  b_{\mathrm{LY}}=\max\left\{1,\frac{C_\Delta}{\log(1/r)}\right\}.
\]
These constants are at least one and depend only on $\Delta$ and $r$.
Together with the cases $n=0$ and $\beta=1$, this proves the stated bound
$T_Z(n,\varepsilon)
=O_{\Delta,\beta,r}((n+1)^{a_{\mathrm{LY}}}\varepsilon^{-b_{\mathrm{LY}}})$.
\end{proof}

\end{document}